\documentclass[10pt]{article}
\usepackage{mathtools}
\usepackage{amsmath}
\usepackage[shortlabels]{enumitem}

\usepackage{graphicx,epic,eepic,epsfig,latexsym,verbatim,color}
 
\usepackage{amsfonts}       
\usepackage{nicefrac}       
\usepackage[ruled, vlined, linesnumbered]{algorithm2e}
\usepackage[table]{xcolor}

\usepackage{framed}
\definecolor{shadecolor}{rgb}{0.9,0.90,0.9}
\usepackage{bbm}
 
\usepackage{float}
\usepackage{tikz}
\usetikzlibrary{chains}
\usetikzlibrary{fit}

\usepackage{epsfig}
\usetikzlibrary{shapes.symbols,patterns} 
\usepackage{pgfplots}

\usepackage[strict]{changepage}

\usepackage[marginal]{footmisc}
\usepackage{url}
\usepackage{theorem}

\newtheorem{definition}{Definition}
\newtheorem{proposition}{Proposition}
\newtheorem{lemma}[proposition]{Lemma}

\newtheorem{theorem}[proposition]{Theorem}
\newtheorem{corollary}[proposition]{Corollary}


\def\squareforqed{\hbox{\rlap{$\sqcap$}$\sqcup$}}
\def\qed{\ifmmode\squareforqed\else{\unskip\nobreak\hfil
\penalty50\hskip1em\null\nobreak\hfil\squareforqed
\parfillskip=0pt\finalhyphendemerits=0\endgraf}\fi}
\def\endenv{\ifmmode\;\else{\unskip\nobreak\hfil
\penalty50\hskip1em\null\nobreak\hfil\;
\parfillskip=0pt\finalhyphendemerits=0\endgraf}\fi}
\newenvironment{proof}{\noindent \textbf{{Proof~} }}{\hfill $\blacksquare$}

\newcounter{remark}
\newenvironment{remark}[1][]{\refstepcounter{remark}\par\medskip\noindent%
\textbf{Remark~\theremark #1} }{\medskip}

\newcounter{example}

\mathchardef\ordinarycolon\mathcode`\:
\mathcode`\:=\string"8000
\def\vcentcolon{\mathrel{\mathop\ordinarycolon}}
\begingroup \catcode`\:=\active
  \lowercase{\endgroup
  \let :\vcentcolon
  }

\usepackage{graphicx}

\RequirePackage[framemethod=default]{mdframed}
\newmdenv[skipabove=7pt,
skipbelow=7pt,
backgroundcolor=darkblue!15,
innerleftmargin=5pt,
innerrightmargin=5pt,
innertopmargin=5pt,
leftmargin=0cm,
rightmargin=0cm,
innerbottommargin=5pt,
linewidth=1pt]{tBox}

\newmdenv[skipabove=7pt,
skipbelow=7pt,
backgroundcolor=red!15,
innerleftmargin=5pt,
innerrightmargin=5pt,
innertopmargin=5pt,
leftmargin=0cm,
rightmargin=0cm,
innerbottommargin=5pt,
linewidth=1pt]{rBox}

\newmdenv[skipabove=7pt,
skipbelow=7pt,
backgroundcolor=blue2!25,
innerleftmargin=5pt,
innerrightmargin=5pt,
innertopmargin=5pt,
leftmargin=0cm,
rightmargin=0cm,
innerbottommargin=5pt,
linewidth=1pt]{dBox}
\newmdenv[skipabove=7pt,
skipbelow=7pt,
backgroundcolor=darkkblue!15,
innerleftmargin=5pt,
innerrightmargin=5pt,
innertopmargin=5pt,
leftmargin=0cm,
rightmargin=0cm,
innerbottommargin=5pt,
linewidth=1pt]{sBox}
\definecolor{darkblue}{RGB}{0,76,156}
\definecolor{darkkblue}{RGB}{0,0,153}
\definecolor{blue2}{RGB}{102,178,255}
\definecolor{darkred}{RGB}{195,0,0}

\newcommand{\nc}{\newcommand}
\nc{\rnc}{\renewcommand}
\nc{\lbar}[1]{\overline{#1}}
\nc{\bra}[1]{\langle#1|}
\nc{\ket}[1]{|#1\rangle}
\nc{\dketbra}[2]{\vert #1 \rangle \hspace{-.8mm} \rangle \hspace{-.4mm} \langle\hspace{-.8mm}\langle #2 \vert}
\nc{\dbra}[1]{\langle\hspace{-.8mm}\langle #1\vert}
\nc{\dket}[1]{\vert#1\rangle\hspace{-.8mm}\rangle}
\nc{\ketbra}[2]{|#1\rangle\!\langle#2|}
\nc{\braket}[2]{\langle#1|#2\rangle}
\nc{\braandket}[3]{\langle #1|#2|#3\rangle}

\nc{\proj}[1]{| #1\rangle\!\langle #1 |}
\nc{\avg}[1]{\langle#1\rangle}
\nc{\rank}{\operatorname{Rank}}
\nc{\smfrac}[2]{\mbox{$\frac{#1}{#2}$}}
\nc{\tr}{\operatorname{Tr}}
\nc{\ox}{\otimes}
\nc{\dg}{\dagger}
\nc{\dn}{\downarrow}
\nc{\cA}{{\cal A}}
\nc{\cB}{{\cal B}}
\nc{\cC}{{\cal C}}
\nc{\cD}{{\cal D}}
\nc{\cE}{{\cal E}}
\nc{\cF}{{\cal F}}
\nc{\cG}{{\cal G}}
\nc{\cH}{{\cal H}}
\nc{\cI}{{\cal I}}
\nc{\cJ}{{\cal J}}
\nc{\cK}{{\cal K}}
\nc{\cL}{{\cal L}}
\nc{\cM}{{\cal M}}
\nc{\cN}{{\cal N}}
\nc{\cO}{{\cal O}}
\nc{\cP}{{\cal P}}
\nc{\cQ}{{\cal Q}}
\nc{\cR}{{\cal R}}
\nc{\cS}{{\cal S}}
\nc{\cT}{{\cal T}}
\nc{\cU}{{\cal U}}
\nc{\cV}{{\cal V}}
\nc{\cX}{{\cal X}}
\nc{\cY}{{\cal Y}}
\nc{\cZ}{{\cal Z}}
\nc{\cW}{{\cal W}}
\nc{\csupp}{{\operatorname{csupp}}}
\nc{\qsupp}{{\operatorname{qsupp}}}
\nc{\var}{{\operatorname{var}}}
\nc{\rar}{\rightarrow}
\nc{\lrar}{\longrightarrow}
\nc{\polylog}{{\operatorname{polylog}}}
\nc{\idop}{{\mathds{1}}}
\nc{\wt}{{\operatorname{wt}}}
\nc{\av}[1]{{\left\langle {#1} \right\rangle}}
\nc{\supp}{{\operatorname{supp}}}
\nc{\SEP}{{\mathrm{SEP}}}
\nc{\PPT}{{\mathrm{PPT}}}

\nc{\argmin}{{\operatorname{argmin}}}

\def\x{\xi}

\nc{\RR}{{{\mathbb R}}}
\nc{\CC}{{{\mathbb C}}}
\nc{\FF}{{{\mathbb F}}}
\nc{\NN}{{{\mathbb N}}}
\nc{\ZZ}{{{\mathbb Z}}}
\nc{\PP}{{{\mathbb P}}}
\nc{\QQ}{{{\mathbb Q}}}
\nc{\UU}{{{\mathbb U}}}
\nc{\EE}{{{\mathbb E}}}
\nc{\id}{{\operatorname{id}}}

\nc{\be}{\begin{equation}}
\nc{\ee}{\end{equation}}
\nc{\bea}{\begin{equation}\begin{aligned}\hspace{0pt}}
\nc{\eea}{\end{aligned}\end{equation}}
\nc{\eqt}[1]{\stackrel{\mathclap{\scriptsize \mbox{#1}}}{=}}
\nc{\leqt}[1]{\stackrel{\mathclap{\scriptsize \mbox{#1}}}{\leq}}
\nc{\geqt}[1]{\stackrel{\mathclap{\scriptsize \mbox{#1}}}{\geq}}

\nc{\<}{\langle}
\rnc{\>}{\rangle}
\nc{\rU}{\mbox{U}}

\nc{\ob}[1]{#1}

\newcommand{\Choi}{Choi-Jamio\l{}kowski }

\newcommand{\CPTP}{\text{\rm CPTP}}
\newcommand{\NSO}{\mathrm{NS}}

\newcommand{\tB}{\widetilde{B}}
\newcommand{\tA}{\widetilde{A}}

\newcommand{\tS}{\widetilde{S}}
\newcommand{\bB}{\Bar{B}}
\newcommand{\bA}{\Bar{A}}

\usepackage{tikz}

\makeatletter
\def\grd@save@target#1{%
  \def\grd@target{#1}}
\def\grd@save@start#1{%
  \def\grd@start{#1}}
\tikzset{
  grid with coordinates/.style={
    to path={%
      \pgfextra{%
        \edef\grd@@target{(\tikztotarget)}%
        \tikz@scan@one@point\grd@save@target\grd@@target\relax
        \edef\grd@@start{(\tikztostart)}%
        \tikz@scan@one@point\grd@save@start\grd@@start\relax
        \draw[minor help lines,magenta] (\tikztostart) grid (\tikztotarget);
        \draw[major help lines] (\tikztostart) grid (\tikztotarget);
        \grd@start
        \pgfmathsetmacro{\grd@xa}{\the\pgf@x/1cm}
        \pgfmathsetmacro{\grd@ya}{\the\pgf@y/1cm}
        \grd@target
        \pgfmathsetmacro{\grd@xb}{\the\pgf@x/1cm}
        \pgfmathsetmacro{\grd@yb}{\the\pgf@y/1cm}
        \pgfmathsetmacro{\grd@xc}{\grd@xa + \pgfkeysvalueof{/tikz/grid with coordinates/major step}}
        \pgfmathsetmacro{\grd@yc}{\grd@ya + \pgfkeysvalueof{/tikz/grid with coordinates/major step}}
        \foreach \x in {\grd@xa,\grd@xc,...,\grd@xb}
        \node[anchor=north] at (\x,\grd@ya) {\pgfmathprintnumber{\x}};
        \foreach \y in {\grd@ya,\grd@yc,...,\grd@yb}
        \node[anchor=east] at (\grd@xa,\y) {\pgfmathprintnumber{\y}};
      }
    }
  },
  minor help lines/.style={
    help lines,
    step=\pgfkeysvalueof{/tikz/grid with coordinates/minor step}
  },
  major help lines/.style={
    help lines,
    line width=\pgfkeysvalueof{/tikz/grid with coordinates/major line width},
    step=\pgfkeysvalueof{/tikz/grid with coordinates/major step}
  },
  grid with coordinates/.cd,
  minor step/.initial=.2,
  major step/.initial=1,
  major line width/.initial=2pt,
}
\makeatother

\usepackage{thmtools}
\usepackage{thm-restate}
\usepackage{etoolbox}
\makeatletter
\def\problem@s{}
\newcounter{problems@cnt}

\newcommand{\allproblems}{\problem@s}
\makeatother

\usepackage{tikz}
\usetikzlibrary{positioning}
\usetikzlibrary{shapes.geometric}
\usetikzlibrary{calc}
\definecolor{tensorblue}{rgb}{0.8,0.9,1}
\tikzset{ten/.style={fill=tensorblue}}


\usepackage{amsmath,amsfonts,amssymb,array,graphicx,mathtools,multirow,bm,tcolorbox,relsize,booktabs,dsfont}
\usepackage[utf8]{inputenc}
\usepackage[T1]{fontenc}
\usepackage{qcircuit}
\usepackage{mathrsfs}
\usepackage{authblk} 
\usepackage{optidef}
\allowdisplaybreaks
\usepackage[paperwidth=199.8mm,paperheight=297mm,centering,hmargin=20mm,vmargin=2cm]{geometry}

\definecolor{refcolor}{rgb}{0.067,0.5,0.65}
\definecolor{urlcolor}{rgb}{0.1,0,0.9}
\usepackage[
    breaklinks,
    pdftex,
    colorlinks=true,
    linkcolor=refcolor,
    citecolor=refcolor,
    urlcolor=refcolor
]{hyperref}
\usepackage{cleveref}

\begin{document}
\title{Classical communication cost of a bipartite quantum channel assisted by non-signalling correlations}

\author[1]{Chengkai Zhu}
\author[2]{Xuanqiang Zhao}
\author[1]{Xin Wang \thanks{felixxinwang@hkust-gz.edu.cn}}

\affil[1]{\small Thrust of Artificial Intelligence, Information Hub,\par The Hong Kong University of Science and Technology (Guangzhou), Guangzhou 511453, China}
\affil[2]{QICI Quantum Information and Computation Initiative, Department of Computer Science,\par The University of Hong Kong, Pokfulam Road, Hong Kong}
\date{\today}
\maketitle

\begin{abstract}
Understanding the classical communication cost of simulating a quantum channel is a fundamental problem in quantum information theory, which becomes even more intriguing when considering the role of non-locality in quantum information processing. This paper investigates the bidirectional classical communication cost of simulating a bipartite quantum channel assisted by non-signalling correlations, which are permitted across the spatial dimension between the two parties. By introducing non-signalling superchannels, we present lower and upper bounds on the one-shot $\epsilon$-error one-way classical communication cost of a bipartite channel via its smooth max-relative entropy of one-way classical communication, and establish that the asymptotic exact cost is given by its max-relative entropy of one-way classical communication. For the bidirectional scenario, we derive semidefinite programming (SDP) formulations for the one-shot exact bidirectional classical communication cost via non-signalling bipartite superchannels. We further introduce a channel's bipartite conditional min-entropy as an efficiently computable lower bound on the asymptotic cost of bidirectional classical communication. Our results in both one-shot and asymptotic settings provide lower bounds on the entanglement-assisted simulation cost in scenarios where entanglement is available to the two parties. Moreover, we propose a seesaw-based algorithm to compute an upper bound on the minimum simulation error via local operations and shared entanglement, which provides valuable insights into the relationship between non-signalling bipartite superchannels and more physically realizable protocols. Numerical experiments demonstrate the effectiveness of our bounds in estimating communication costs for various quantum channels, showing that our bounds can be tight in different scenarios. Our results elucidate the role of non-locality in quantum communication and pave the way for exploring quantum reverse Shannon theory in bipartite scenarios.
\end{abstract}

\tableofcontents

\section{Introduction}
The study of quantum non-locality has been a central theme in quantum information theory since Bell's seminal work~\cite{Bell_1966}, which demonstrated that any local hidden variable theory cannot explain the correlations exhibited by entangled quantum systems. This non-local nature of quantum correlations has found numerous applications in quantum information processing, including quantum teleportation~\cite{Bennett1993}, superdense coding~\cite{Bennett1992}, and quantum cryptography~\cite{Bennett1984,Ekert1991}. Moreover, the peaceful coexistence between quantum non-locality and the impossibility of superluminal communication has intrigued physicists for decades.

Despite the impossibility of utilizing quantum non-locality for superluminal communication, non-local resources like entanglement can be used to enhance the communication of classical information~\cite{Bennett1999}, private classical information~\cite{Devetak2005}, and quantum information. The most studied scenario is one-way communication between two parties, Alice and Bob, where they share a channel allowing Alice to send information to Bob. The communication is said to be entanglement-assisted if Alice and Bob can share an unlimited amount of entanglement before their communication. While a one-way quantum channel is a natural model of communication between two parties, a \textit{bipartite quantum channel}, which jointly evolves an input from both Alice and Bob, provides the most general setting for two-party communication~\cite{CHILDS_2006} (see Figure~\ref{fig:OW_TW_channel}).
Bipartite quantum channels play a crucial role in describing and understanding a wide range of quantum interactions (e.g.,~\cite{Fawzi2012a,Goold_2016,Stefan2018,Stefan2019,Gilad_2020a,Gour2021,Hirche2023,Ding2023}) and are also reminiscent of two-way classical channels, first studied by Shannon~\cite{shannon1961two} and followed by extensive research on interference channels~\cite{Han1981,Motahari2009} and multiple-access channels (MACs)~\cite{Tse_Viswanath_2005}. A point-to-point channel can be viewed as a special case of a bipartite channel, where Bob's input and Alice's output are trivial one-dimensional subsystems. Another common special case is a bipartite unitary, which describes the interaction between two isolated quantum systems. Similar to the fundamental challenges in classical network information theory~\cite{el2011network}, the analysis of bipartite quantum channels presents significant challenges.

One central problem in quantum information theory is to study how many standard communication resources, e.g., noiseless channels, are needed to produce other specialized resources, e.g., available noisy channels, under different settings. This problem is also known as the channel simulation problem~\cite{Fang_2020}, which is central in quantum Shannon theory. When Alice and Bob can share an unlimited amount of entanglement, the quantum reverse Shannon theorem~\cite{Bennett2014,Berta_2011} completely solves the problem in the asymptotic setting, stating that the optimal rate to simulate a quantum channel is determined by its entanglement-assisted classical capacity. In particular, for bipartite quantum channels, the standard resources are bidirectional communication~\cite{Bennett_2003}. Harrow and Leung~\cite{Harrow_2011} established a protocol for simulating a given bipartite unitary with the assistance of unlimited EPR pairs to study the classical communication cost. How to simulate general bipartite quantum channels with minimum resources, e.g., classical communication, remains unclear (see Figure~\ref{fig:channel_simu}).

\begin{figure}[b]
    \centering
    \includegraphics[width=0.7\linewidth]{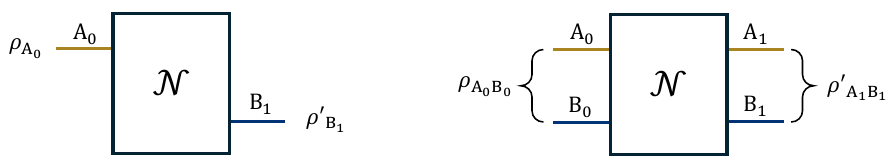}
    \caption{The left side shows a point-to-point quantum channel $\cN$ through which Alice can send quantum states to Bob. The right side illustrates a bipartite quantum channel, where each party provides an input to the channel, which evolves their inputs jointly to produce an output $\rho'_{A_1B_1}$ shared between them.}
    \label{fig:OW_TW_channel}
\end{figure}

Different from a point-to-point communication setting, the role of quantum resources in a multipartite communication or network communication setting becomes more sophisticated. For example, it was shown that entanglement offers no advantage in a point-to-point purely classical communication scenario~\cite{Bennett1999} while can substantially boost the capacity of a classical MAC~\cite{Leditzky_2020}. In addition to allowing Alice and Bob to share finite or unlimited entanglement to assist with information manipulation, a more general framework for investigating the role of non-locality in communication is to consider a setting where they have access to any non-signalling bipartite channels~\cite{Piani_2006}. Non-signalling bipartite channels are also referred to as \textit{non-signalling boxes} or \textit{causal maps} and, while cannot be used to communicate, can exhibit a stronger violation of Bell’s inequalities than local measurements performed on bipartite quantum states~\cite{Beckman2001,Eggeling_2002,Oreshkov_2012}. A well-known classical example is the Popescu-Rohrlich machine for violating Clauser-Horne-Shimony-Holt inequalities~\cite{Clauser1969,Popescu1997,Barrett2005}. Furthermore, the non-signalling correlations are valuable in studying the coding theory of point-to-point channels, known as the \textit{non-signalling codes} when giving the input system and output system access to any quantum resources that cannot be used for communicating between them directly (see, e.g.,~\cite{Leung2015,Duan2016,Wang_2018} for quantum channels and, e.g.,~\cite{Cubitt_2011,Matthews_2012,Barman_2018,Omar2024,Omar2024a} for classical channels).

\begin{figure}[t]
    \centering
    \includegraphics[width=.7\linewidth]{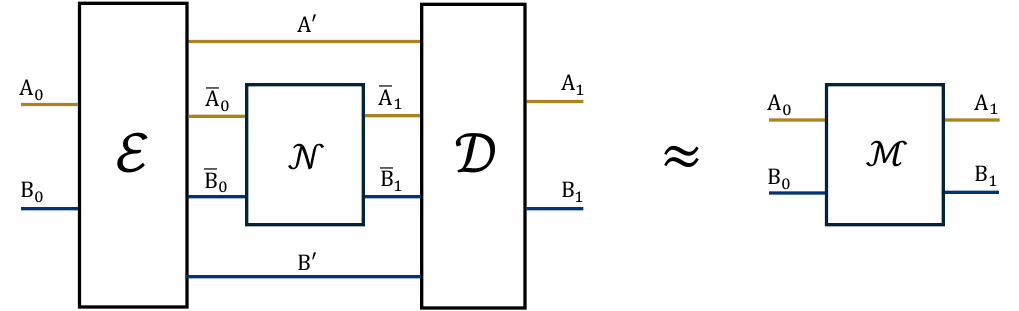}
    \caption{Bipartite quantum channel simulation. Alice (in yellow) and Bob (in blue) aim to simulate a bipartite quantum channel $\cM_{A_0B_0\rightarrow A_1B_1}$ with given $\cN_{\bA_0\bB_0\rightarrow \bA_1\bB_1}$ and available channels $\cE_{A_0B_0\rightarrow A'\bA_0 \bB_0 B'}$ and $\cD_{A'\bA_1\bB_1B'\rightarrow A_1B_1}$. In general, $\cN$ can be a standard communication resource between Alice and Bob, e.g., noiseless bidirectional channels, and we aim to quantify the minimum resource required for simulating the target channel $\cM$.}
    \label{fig:channel_simu}
\end{figure}

\subsection{Summary of results}
By allowing Alice and Bob to consume these non-signalling correlations, we can explore the ultimate limits of non-locality in facilitating communication. In this context, classical communication between the parties becomes the golden resource. Our goal in this work is to determine the minimum number of classical bits that need to be exchanged between the two parties to simulate a target bipartite channel, given access to unlimited non-signalling correlations between two communicating parties and unlimited quantum resources in each party. 

In Section~\ref{sec:pre}, we introduce the notations and preliminaries used throughout the paper. In Section~\ref{sec:ns_channel_superchan}, we introduce the notions of bipartite non-signalling channels and superchannels and characterize their properties. Section~\ref{sec:measures} defines resource measures of classical communication based on the max-relative entropy and relative entropy. 

In Section~\ref{sec:sim_CNSP}, we study the problem of simulating a bipartite quantum channel using one-way classical communication and one-way non-signalling bipartite superchannel, as well as the simulation using bidirectional classical communication and bidirectional non-siganlling bipartite superchannel. We derive both upper and lower bounds for the one-shot $\epsilon$-error one-way classical communication cost of a bipartite channel through its smooth max-relative entropy of one-way classical communication, and present a semidefinite programming (SDP) formulation of the one-shot exact bidirectional classical communication cost of an arbitrary bipartite channel. Based on this, we further have an SDP hierarchy to compute the $\epsilon$-error bidirectional classical communication cost. For the asymptotic setting, we present a closed-form of the asymptotic exact one-way classical communication cost as the channel's max-relative entropy of one-way communication. Then we introduce the channel conditional min-entropy and demonstrate that it serves as an efficiently computable lower bound on the asymptotic exact bidirectional classical communication cost of a bipartite channel. 

In Section~\ref{sec:lose}, we consider the problem of bipartite quantum channel simulation via local operations and shared entanglement (LOSE) to investigate the relationship between the non-signalling correlation and shared entanglement. We propose a bipartite channel simulation framework via LOSE and develop a seesaw-based algorithm to estimate the optimal simulation error of the proposed framework. A numerical example is provided to demonstrate the effectiveness of our method. This analysis provides valuable physical insights that complement our understanding of non-signalling bipartite superchannels.

Section~\ref{sec:examples} presents numerical experiments that demonstrate the performance of our bounds on bidirectional classical communication cost via non-signalling bipartite superchannels. Finally, we conclude with a discussion of our results and future directions in Section~\ref{sec:conclu}.

\section{Preliminaries}\label{sec:pre}

Throughout the paper, the physical systems and their associated Hilbert spaces are denoted by $A,B,C$ and $\cH_A, \cH_B, \cH_C$, respectively. We consider finite-dimensional Hilbert spaces and denote the dimension of a quantum system by $d_A$.
We use the tilde symbol to denote an identical copy of the system under it, i.e., $\cH_{\tA}$ is isomorphic to $\cH_A$. We denote by $\mathscr{L}(\cH_A)$ the set of linear operators and $\mathscr{D}(\cH_A)$ the set of density operators or quantum states in the system $A$. A quantum channel $\cN_{A\to B}$ is a linear map from $\mathscr{L}(\cH_A)$ to $\mathscr{L}(\cH_B)$ that is completely positive (CP) and trace-preserving (TP). We denote by $\id$ the identity channel and by $\CPTP(A,B)$ the set of all quantum channels from system $A$ to system $B$. The \Choi operator of a quantum channel $\cN_{A\to B}$ is given by
\be
J^{\cN}_{A B} \coloneqq \big(\id_{A}\ox \cN_{\tA\rightarrow B}\big)(\ketbra{\Gamma}{\Gamma}_{A \tA}) =  \sum_{i, j=0}^{d_A-1}\ketbra{i}{j}_{A} \ox \cN_{\tA \to B}(\ketbra{i}{j}_{\tA}),
\ee
where $\big\{\ket{i}_A\big\}$ and $\big\{\ket{i}_{\tA}\big\}$ are orthonormal bases of isomorphic Hilbert spaces $\cH_A$ and $\cH_{\tA}$, respectively, and $\ket{\Gamma}_{A \tA}\coloneqq \sum_{i=0}^{d_A-1}\ket{i}_{A}\ket{i}_{\tA}$ is the unnormalized maximally entangled state between systems $A$ and $\tA$. For simplicity, we often omit the identity channel explicitly in equations, e.g., $J^{\cN}_{A B}=\cN_{\tA\rightarrow B}(\ketbra{\Gamma}{\Gamma}_{A \tA})$.

More generally, a multipartite quantum channel refers to a channel in which the input and output systems consist of composite quantum systems shared among multiple parties. For a bipartite quantum channel $\cN_{A_0B_0\rightarrow A_1B_1}$, we consider the input and output shared between Alice and Bob, where we use the subscript `$0$' to denote the input system and the subscript `$1$' to denote the output system for each party. The set of all bipartite quantum channels from system $A_0B_0$ to system $A_1B_1$ is denoted by $\CPTP(A_0B_0, A_1B_1)$. The \Choi operator of a bipartite channel $\cN_{A_0B_0\rightarrow A_1B_1}$ is given by
\be
    J_{A_0B_0A_1B_1}^{\cN} \coloneqq \cN_{\tA_0\tB_0\rightarrow A_1B_1}\big(\proj{\Gamma}_{A_0\tA_0}\ox \proj{\Gamma}_{B_0\tB_0}\big).
\ee
For simplicity, we also use $J_{\mathbf{A}\mathbf{B}}^{\cN}$ to denote the \Choi operator of a bipartite quantum channel, where the bold symbol $\mathbf{A}$ in the subscript denotes all Alice's systems and $\mathbf{B}$ denotes all Bob's systems.

Transitioning to a higher-order perspective, quantum channels are subject to manipulation by quantum superchannels, which transform quantum channels into quantum channels. A supermap $\Theta_{(A\rightarrow B)\rightarrow (C\rightarrow D)}$ is called a superchannel if the output map
\begin{equation}\label{Eq:superchan}
    (\mathbb{I}_R \ox \Theta_{(A\rightarrow B)\rightarrow (C\rightarrow D)})(\cN_{R_0A\rightarrow R_1B})
\end{equation}
is a quantum channel for all input bipartite channels $\cN_{R_0A\rightarrow R_1B}$, where the reference system $R_0,R_1$ are arbitrary and $\mathbb{I}_R$ denotes the identity superchannel for channels from $R_0$ to $R_1$. For a simpler notation, we will often omit the identity supermap when there is no ambiguity, e.g., Eq.~\eqref{Eq:superchan} will be denoted as $\Theta(\cN_{R_0A\rightarrow R_1B})$. It is shown that any superchannel $\Theta_{(A\rightarrow B)\rightarrow (C\rightarrow D)}$ can be realized by a pre-processing channel $\cE_{C\rightarrow AA'}$ and a post-processing channel $\cD_{BA'\rightarrow D}$ such that~\cite{Chiribella2008}
\be
\Theta_{(A\rightarrow B)\rightarrow (C\rightarrow D)}(\cN_{A\rightarrow B}) = \cD_{BA'\rightarrow D}\circ \cN_{A\rightarrow B} \circ \cE_{C\rightarrow AA'}.
\ee
Hence, every superchannel $\Theta_{(A\rightarrow B)\rightarrow (C\rightarrow D)}$ corresponds to a bipartite quantum channel
\be
    \cQ^{\Theta}_{CB\rightarrow AD} \coloneqq \cD_{BA'\rightarrow D}\circ \cE_{C\rightarrow AA'},
\ee
both of which share the same \Choi operator, i.e.,
\begin{equation}
    J_{CBAD}^{\Theta} \coloneqq \cQ^{\Theta}_{\widetilde{C}\tB\rightarrow AD}\big(\proj{\Gamma}_{C\widetilde{C}}\ox \proj{\Gamma}_{B\tB}\big),
\end{equation}
which satisfies
\be
    J_{CBAD}^{\Theta} \geq 0,~J_{CB}^{\Theta} = \idop_{CB},~J_{CBA}^{\Theta} = J_{CA}^{\Theta} \ox \pi_{B},
\ee
where the last equation corresponds to the $B$-to-$A$ non-signalling constraint. There is a one-to-one correspondence between a $B$-to-$A$ non-signalling bipartite channel and a superchannel, which is a special case of the quantum comb formalism~\cite{Chiribella2008a}.

\section{Non-signalling bipartite channel and superchannel}\label{sec:ns_channel_superchan}

In quantum information theory, the concept of non-signalling correlation is crucial for understanding the limitations of information transfer between spatially separated parties. This section introduces two important structures that incorporate non-signalling constraints: non-signalling bipartite quantum channels and non-signalling bipartite superchannels. Non-signalling bipartite channels are quantum operations that act on two separate parties, Alice and Bob, in such a way that neither party's output will be affected by the other party's input, meaning they cannot signal each other by using such a quantum channel. Building upon this, we then explore non-signalling bipartite superchannels, which are higher-order quantum transformations for channels while preserving the two parties' non-signalling properties. These superchannels play a critical role in understanding the manipulation and transformation of non-signalling correlations in quantum information processing tasks.

\subsection{Non-signalling bipartite channel}

A bipartite channel $\cN_{A_0B_0\rightarrow A_1B_1}$ is said to be \textit{non-signalling from Alice to Bob} if Alice cannot send classical information to Bob by using it. It is formally defined as follows~\cite{Piani_2006,khatri2024}.

\begin{definition}[A to B non-signalling channel]\label{def:ABnon_sig}
A bipartite channel $\cN_{A_0B_0\rightarrow A_1B_1}$ is called a non-signalling channel from Alice to Bob if
\be\label{Eq:ABns_cond}
    \cR_{A_1}^{\pi}\circ \cN_{A_0B_0\rightarrow A_1B_1} = \cR_{A_1}^{\pi}\circ\cN_{A_0B_0\rightarrow A_1B_1}\circ \cR_{A_0}^{\pi},
\ee
where $\cR_{A_0}^{\pi}$ is a replacement channel defined as $\cR_{A_0}^{\pi}(\cdot)\coloneqq \tr_{A_0}[\cdot]\pi_{A_0}$ with $\pi_{A_0}\coloneqq \idop_{A_0}/d_{A_0}$ being the maximally mixed state on system $A_0$.
\end{definition}
The condition in Eq.~\eqref{Eq:ABns_cond} states that the reduced state on Bob's output system $B_1$ has no dependence on Alice's input system $A_0$.
Therefore, Alice cannot use $\cN_{A_0B_0\rightarrow A_1B_1}$ to signal Bob. It can be verified that Definition~\ref{def:ABnon_sig} is equivalent to that given in Ref.~\cite{Duan2016}, i.e., for any density operator $\rho_{A_0}^{(0)},\rho_{A_0}^{(1)}\in\mathscr{D}(\cH_{A_0})$ and $\sigma_{B_0}\in\mathscr{D}(\cH_{B_0}), \tr_{A_1}[\cN_{A_0B_0\rightarrow A_1B_1}(\rho_{A_0}^{(0)} \ox \sigma_{B_0})] = \tr_{A_1}[\cN_{A_0B_0\rightarrow A_1B_1}(\rho_{A_0}^{(1)} \ox \sigma_{B_0})]$. Furthermore, $\cN_{A_0B_0\rightarrow A_1B_1}$ is non-signalling from Alice to Bob if and only if its \Choi operator satisfies~\cite{khatri2024}
\be
    \tr_{A_1}\big[J_{A_0A_1B_0B_1}^{\cN}\big] = \pi_{A_0}\ox \tr_{A_0A_1}\big[J_{A_0A_1B_0B_1}^{\cN}\big].
\ee
For simplicity, we will denote $\tr_{A_0A_1}\big[J_{A_0A_1B_0B_1}^{\cN}\big]$ by $J_{B_0B_1}^{\cN}$. We further denote by $\NSO^{\rightarrow}(A_0B_0,A_1B_1)$ the set of all $A$-to-$B$ non-signalling channels, i.e.,
\be
    \NSO^{\rightarrow}(A_0B_0,A_1B_1) \coloneqq \bigg\{\cN \in\CPTP(A_0B_0, A_1B_1)~\Big|~J_{A_0B_0B_1}^{\cN} = \pi_{A_0}\ox J_{B_0B_1}^{\cN}\bigg\}.
\ee
A one-way LOCC channel (1-LOCC) from Bob to Alice is an interesting example of a bipartite channel that is non-signalling from Alice to Bob. Similarly, a non-signalling channel from Bob to Alice is defined as follows.
\begin{definition}[B to A non-signalling channel]
A bipartite channel $\cN_{A_0B_0\rightarrow A_1B_1}$ is called a non-signalling channel from Bob to Alice if
\be\label{Eq:BAns_cond}
    \cR_{B_1}^{\pi}\circ \cN_{A_0B_0\rightarrow A_1B_1} = \cR_{B_1}^{\pi}\circ\cN_{A_0B_0\rightarrow A_1B_1}\circ \cR_{B_0}^{\pi},
\ee
where $\cR_{B_0}^{\pi}$ is a replacement channel defined as $\cR_{B_0}^{\pi}(\cdot)\coloneqq \tr_{B_0}[\cdot]\pi_{B_0}$ with $\pi_{B_0}\coloneqq \idop_{B_0}/d_{B_0}$ being the maximally mixed state on system $B_0$.
\end{definition}
Equivalently, $\cN_{A_0B_0\rightarrow A_1B_1}$ is non-signalling from Bob to Alice if and only if its \Choi operator satisfies $J_{A_0A_1B_0}^{\cN} = \pi_{B_0}\ox J_{A_0A_1}^{\cN}$.

A bipartite quantum channel $\cN_{A_0B_0\rightarrow A_1B_1}$ is called a non-signalling channel if it is non-signalling both from Alice to Bob and from Bob to Alice. We denote by $\NSO(A_0B_0,A_1B_1)$ the set of all bipartite bidirectional non-signalling channels, i.e.,
\begin{equation*}
    \NSO(A_0B_0,A_1B_1) \coloneqq \bigg\{\cN\in\CPTP(A_0B_0, A_1B_1)~\Big|~ J_{A_0B_0B_1}^{\cN} = \pi_{A_0}\ox J_{B_0B_1}^{\cN},\,
    J_{A_0A_1B_0}^{\cN} = \pi_{B_0}\ox J_{A_0A_1}^{\cN}\bigg\}.
\end{equation*}
It is easy to check that $\NSO(A_0B_0,A_1B_1)$ is a convex set, and we will abbreviate it as $\NSO$ when there is no ambiguity about the systems we are considering. A property of non-signalling channels is that the composition of non-signalling channels is still non-signalling.

\begin{lemma}[Composition of non-signalling channels]\label{lem:NS_comp}
For two non-signalling channels $\cM_{A_0B_0\rightarrow A_1B_1}$ and $\cN_{A_1B_1\rightarrow A_2B_2}$, their composition
\be
    \cP_{A_0B_0\rightarrow A_2B_2}\coloneqq \cN_{A_1B_1\rightarrow A_2B_2}\circ\cM_{A_0B_0\rightarrow A_1B_1}
\ee
is also a non-signalling channel.
\end{lemma}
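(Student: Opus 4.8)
The plan is to argue directly at the level of the maps, using the composition-friendly characterisations of non-signalling in Eqs.~\eqref{Eq:ABns_cond} and~\eqref{Eq:BAns_cond} rather than passing to \Choi operators. First observe that $\cP_{A_0B_0\rightarrow A_2B_2} = \cN_{A_1B_1\rightarrow A_2B_2}\circ\cM_{A_0B_0\rightarrow A_1B_1}$ is a composition of CP and TP maps, hence itself CPTP, so $\cP\in\CPTP(A_0B_0,A_2B_2)$ and it only remains to check the two replacement-channel identities.

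For the Alice-to-Bob direction the goal is $\cR_{A_2}^{\pi}\circ\cP = \cR_{A_2}^{\pi}\circ\cP\circ\cR_{A_0}^{\pi}$. I would start from the right-hand side, insert $\cN$'s A-to-B non-signalling identity with the intermediate system $A_1$, then apply $\cM$'s A-to-B non-signalling identity to absorb $\cR_{A_0}^{\pi}$, and finally undo the first substitution:
\begin{align*}
\cR_{A_2}^{\pi}\circ\cP\circ\cR_{A_0}^{\pi}
&= \cR_{A_2}^{\pi}\circ\cN\circ\cM\circ\cR_{A_0}^{\pi}
= \cR_{A_2}^{\pi}\circ\cN\circ\cR_{A_1}^{\pi}\circ\cM\circ\cR_{A_0}^{\pi} \\
&= \cR_{A_2}^{\pi}\circ\cN\circ\cR_{A_1}^{\pi}\circ\cM
= \cR_{A_2}^{\pi}\circ\cN\circ\cM
= \cR_{A_2}^{\pi}\circ\cP .
\end{align*}
Each equality is a direct application of Eq.~\eqref{Eq:ABns_cond} to $\cM$ or to $\cN$; the replacement channels $\cR_{A_0}^{\pi},\cR_{A_1}^{\pi},\cR_{A_2}^{\pi}$ act on pairwise-disjoint systems occurring in sequence along the composition, so no commutation subtlety arises.

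By the symmetric argument with $B$ in place of $A$ and Eq.~\eqref{Eq:BAns_cond} applied to $\cM$ and $\cN$, one obtains $\cR_{B_2}^{\pi}\circ\cP = \cR_{B_2}^{\pi}\circ\cP\circ\cR_{B_0}^{\pi}$. Together these two identities show $\cP$ is non-signalling from Alice to Bob and from Bob to Alice, i.e.\ $\cP\in\NSO(A_0B_0,A_2B_2)$. I do not expect a real obstacle: the argument is essentially bookkeeping, and the only mild care needed is to track which intermediate replacement channel ($\cR_{A_1}^{\pi}$, resp.\ $\cR_{B_1}^{\pi}$) to insert and in which direction the non-signalling identity is being used. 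A \Choi-operator proof via the link product is also possible but would require re-deriving the marginal conditions for the composed channel and offers no advantage over this direct approach.
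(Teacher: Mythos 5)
Your proof is correct and follows essentially the same route as the paper's: both apply the non-signalling identity of $\cN$ at the intermediate system $A_1$, then that of $\cM$ at $A_0$, and then undo the first insertion, with the $B$-direction handled symmetrically. The only difference is that you traverse the chain of equalities starting from $\cR_{A_2}^{\pi}\circ\cP\circ\cR_{A_0}^{\pi}$ rather than from $\cR_{A_2}^{\pi}\circ\cP$, which is immaterial.
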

\begin{proof}
Since $\cN_{A_1B_1\rightarrow A_2B_2}$ is a non-signalling channel, we have
\be\label{Eq:N2_NS}
    \cR^{\pi}_{A_2}\circ\cN_{A_1B_1\rightarrow A_2B_2} = \cR^{\pi}_{A_2} \circ \cN_{A_1B_1\rightarrow A_2B_2}\circ \cR^{\pi}_{A_1},
\ee
which yields
\bea
    \cR^{\pi}_{A_2}\circ \cN_{A_1B_1\rightarrow A_2B_2} \circ \cM_{A_0B_0\rightarrow A_1B_1} &= \cR^{\pi}_{A_2}\circ \cN_{A_1B_1\rightarrow A_2B_2} \circ \cR^{\pi}_{A_1}\circ \cM_{A_0B_0\rightarrow A_1B_1}\\
    &= \cR^{\pi}_{A_2}\circ\cN_{A_1B_1\rightarrow A_2B_2} \circ \cR^{\pi}_{A_1}\circ \cM_{A_0B_0\rightarrow A_1B_1}\circ \cR^{\pi}_{A_0},
\eea
where we used the fact that $\cM_{A_0B_0\rightarrow A_1B_1}$ is non-signalling in the second equality. Then using Eq.~\eqref{Eq:N2_NS} again, we have
\be
    \cR^{\pi}_{A_2}\circ\cP_{A_0B_0\rightarrow A_2B_2} = \cR^{\pi}_{A_2} \circ \cP_{A_0B_0\rightarrow A_2B_2}\circ \cR^{\pi}_{A_0},
\ee
which yields $\cP_{A_0B_0\rightarrow A_2B_2}$ is a non-signalling channel from Alice to Bob. Similarly, one can show that $\cP_{A_0B_0\rightarrow A_2B_2}$ is also non-signalling from Bob to Alice, thus a non-signalling channel.
\end{proof}

It is quite expected that sequential uses of non-signalling channels still cannot enable signalling between Alice and Bob. Conversely, if $\cP_{A_0B_0\rightarrow A_2B_2}$ is non-signalling, we cannot assert that both $\cM_{A_0B_0\rightarrow A_1B_1}$ and $\cN_{A_1B_1\rightarrow A_2B_2}$ are non-signalling channels. A simple example is that $\cN_{A_1B_1\rightarrow A_2B_2}$ is a replacement channel, which yields $\cP_{A_0B_0\rightarrow A_2B_2}$ a replacement channel, thus non-signalling even if $\cM_{A_0B_0\rightarrow A_1B_1}$ is signalling.

\begin{figure}[t]
    \centering
    \includegraphics[width=0.85\linewidth]{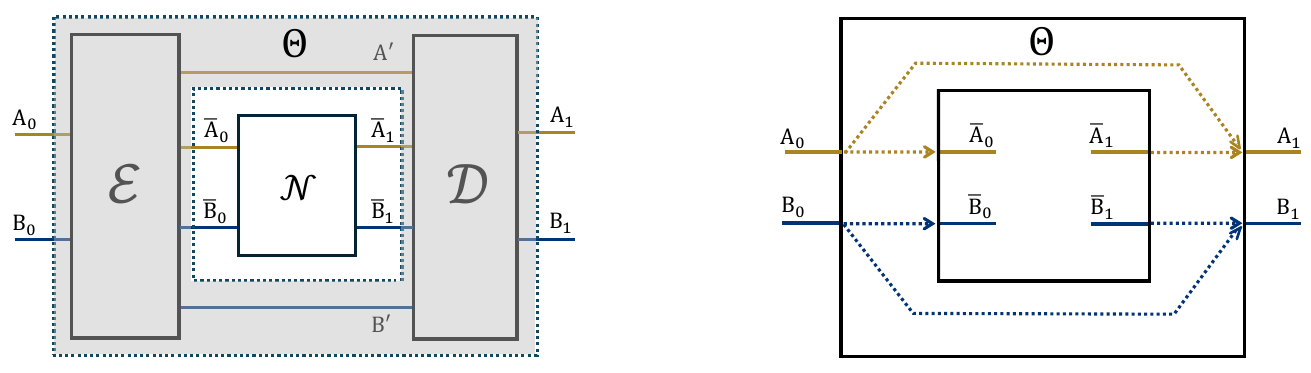}
    \caption{The left side shows the manipulation of a bipartite quantum channel $\cN_{\bA_0\bB_0\rightarrow \bA_1\bB_1}$ via a bipartite superchannel $\Theta_{(\bA_0\bB_0\rightarrow \bA_1\bB_1)\rightarrow(A_0B_0\rightarrow A_1B_1)}$, which can be implemented by a pre-processing channel $\cE_{A_0B_0\rightarrow A'\bA_0\bB_0 B'}$ and a post-processing channel $\cD_{A'\bA_1\bB_1 B'\rightarrow A_1B_1}$. The right side illustrates a non-signalling bipartite superchannel where the input on subsystem $A_0$ can only signal to the outputs on subsystems $\bA_0, A_1$, and the input on subsystem $\bA_1$ can only signal to the output on subsystem $A_1$. Similarly, the input on subsystem $B_0$ can only signal to the outputs on subsystems $\bB_0, B_1$, and the input on subsystem $\bB_1$ can only signal to the output on subsystem $B_1$.}
    \label{fig:bi_superchan}
\end{figure}

\subsection{Non-signalling bipartite superchannel}
A bipartite superchannel $\Theta_{(\bA_0\bB_0\rightarrow \bA_1\bB_1)\rightarrow(A_0B_0\rightarrow A_1B_1)}$ between Alice and Bob, as a higher-order quantum operation, describes the manipulation of bipartite quantum channels. It is natural to consider scenarios where spatially separated parties perform higher-order transformations on quantum channels while respecting local causality constraints.

To explore the full potential of non-signalling correlations, suppose Alice and Bob have access to any quantum resources that cannot be used for direct communication between them. By combining this constraint with a definite causal structure, Alice's input can only affect her own future subsystems ($A_0$ influences $\bA_0$ and $A_1$, while $\bA_1$ affects only $A_1$), and similarly for Bob's subsystems (see Figure~\ref{fig:bi_superchan}). In this context, we introduce the notion of \textit{non-signalling bipartite superchannels}, which provide a rigorous framework for studying distributed quantum information processing tasks, quantum communication protocols, and fundamental limits of causally restricted higher-order quantum operations. Before the formal definition, we remark on how we treat a replacement channel as a superchannel. Given a quantum channel $\cN_{A\rightarrow B}$, we denote
\begin{equation}
    \cR_{A}^{\pi}(\cN_{A\rightarrow B}) \coloneqq \cN_{A\rightarrow B}\circ\cR_{A}^{\pi},\quad \cR_{B}^{\pi}(\cN_{A\rightarrow B}) \coloneqq \cR_{B}^{\pi}\circ\cN_{A\rightarrow B}
\end{equation}
when we treat replacement channels as superchannels.

\begin{definition}[Non-signalling bipartite superchannel]\label{def:NSBSC}
Let $\Theta_{(\bA_0\bB_0\rightarrow \bA_1\bB_1)\rightarrow(A_0B_0\rightarrow A_1B_1)}$ be a superchannel that maps bipartite channels in $\CPTP(\bA_0\bB_0, \bA_1\bB_1)$ to bipartite channels in $\CPTP(A_0B_0, A_1B_1)$. Then it is called a non-signalling bipartite superchannel if
\be
\cR^{\pi}_{A_0A_1}\circ \Theta \circ \cR^{\pi}_{\bA_0} = \cR^{\pi}_{A_1}\circ \Theta \circ \cR^{\pi}_{\bA_0},\quad \cR^{\pi}_{A_1}\circ \Theta \circ \cR^{\pi}_{\bA_1} = \cR^{\pi}_{A_1}\circ \Theta,\label{Eq:nscombA}\\
\ee
and
\be
\cR^{\pi}_{B_0B_1}\circ \Theta \circ \cR^{\pi}_{\bB_0} = \cR^{\pi}_{B_1}\circ \Theta \circ \cR^{\pi}_{\bB_0},\quad \cR^{\pi}_{B_1}\circ \Theta \circ \cR^{\pi}_{\bB_1} = \cR^{\pi}_{B_1}\circ \Theta.\label{Eq:nscombB}
\ee
\end{definition}
It is easy to check that a superchannel $\Theta_{(\bA_0\bB_0\rightarrow \bA_1\bB_1)\rightarrow(A_0B_0\rightarrow A_1B_1)}$ is a non-signalling bipartite superchannel if and only if its \Choi operator satisfies
\be\label{Eq:Theta_choi_NS}
\begin{aligned}
& J_{A_0 \bA_1 \mathbf{B}}^{\Theta} = \pi_{A_0}\ox J_{\bA_1 \mathbf{B}}^{\Theta}, \quad J_{A_0 \bA_0\bA_1 \mathbf{B}}^{\Theta} = \pi_{\bA_1}\ox J_{A_0 \bA_0\mathbf{B}}^{\Theta},\\
& J_{\mathbf{A} B_0 \bB_1}^{\Theta} = \pi_{B_0}\ox J_{\bB_1 \mathbf{A}}^{\Theta}, \quad J_{\mathbf{A} B_0 \bB_0\bB_1}^{\Theta} = \pi_{\bB_1}\ox J_{\mathbf{A} B_0 \bB_0}^{\Theta}.
\end{aligned}
\ee
We note that a typical non-signalling bipartite superchannel is a superchannel realized by a pre-processing non-signalling bipartite channel and a post-processing non-signalling bipartite channel. We also define $\Theta$ as an $A$-to-$B$ non-signalling bipartite superchannel if it satisfies only Eq.~\eqref{Eq:nscombA}, and a $B$-to-$A$ non-signalling bipartite superchannel if it satisfies only Eq.~\eqref{Eq:nscombB}.

\begin{remark}
The causality condition of a superchannel already ensures $J_{A_0\bA_0\bA_1 B_0\bB_0\bB_1}^{\Theta} = \pi_{\bA_1\bB_1}\ox J_{A_0\bA_0 B_0\bB_0}^{\Theta}$, which equivalently states that $\bA_1$ cannot signal $\bA_0$ and $\bB_0$. Noticing that the condition $J_{A_0 \bA_1 \mathbf{B}}^{\Theta} = \pi_{A_0}\ox J_{\bA_1 \mathbf{B}}^{\Theta}$ in the above definition ensures $\bA_1$ cannot signal $B_1$, careful readers might question the necessity of the condition $J_{A_0 \bA_0\bA_1 \mathbf{B}}^{\Theta} = \pi_{\bA_1}\ox J_{A_0 \bA_0\mathbf{B}}^{\Theta}$. However, we note that this condition is indeed crucial considering the non-signalling structures, as there exist scenarios where $\bA_1$ cannot signal $\bA_0$ and $B_1$ individually, yet can signal the composite subsystem $\bA_0B_1$ as a whole. 
\end{remark}

Having established the concept of non-signalling bipartite superchannels, which describes the physical limits of bipartite higher-order transformations, we now turn our attention to a more abstract yet meaningful notion. For the manipulation of bipartite channels, if the output channel of a bipartite superchannel is a non-signalling channel for every input non-signalling channel, we call this superchannel a \textit{non-signalling-preserving (NS-preserving) superchannel}.

\begin{definition}[Completely NS-preserving superchannel]
Let $\Theta_{(\bA_0\bB_0\rightarrow \bA_1\bB_1)\rightarrow(A_0B_0\rightarrow A_1B_1)}$ be a superchannel that maps bipartite channels in $\CPTP(\bA_0\bB_0, \bA_1\bB_1)$ to bipartite channels in $\CPTP(A_0B_0, A_1B_1)$.
\begin{enumerate}
    \item $\Theta$ is called an NS-preserving superchannel if, for any $\cN_{\bA_0\bB_0\rightarrow \bA_1\bB_1}\in\NSO(\bA_0\bB_0,\bA_1\bB_1)$, $\Theta(\cN)$ is a non-signalling channel, i.e., $\Theta(\cN)\in \NSO(A_0B_0,A_1B_1)$.
    \item $\Theta$ is called a completely NS-preserving superchannel (CNSP), if $\mathbb{I}_{A'B'}\ox \Theta$ is an NS-preserving superchannel for any bipartite system $A'A:B' B$, where $\mathbb{I}_{A'B'}$ denotes the identity superchannel on $A'B'$.
\end{enumerate}
\end{definition}
This definition is straightforward from the perspective of quantum dynamics resource theories~\cite{Saxena_2020,Gilad_2020}. When non-signalling bipartite channels are considered free resources, it adopts the perspective that higher-order transformations can only send free channels to free channels, without generating resources. The CNSP constraint ensures the preservation of non-signalling structures between Alice and Bob and with respect to any additional environments of each party, meaning the superchannel can be freely applied to subsystems of larger non-signalling structures. In the following, we shall see that every non-signalling bipartite superchannel is CNSP.

\begin{figure}[t]
    \centering
    \includegraphics[width=1\linewidth]{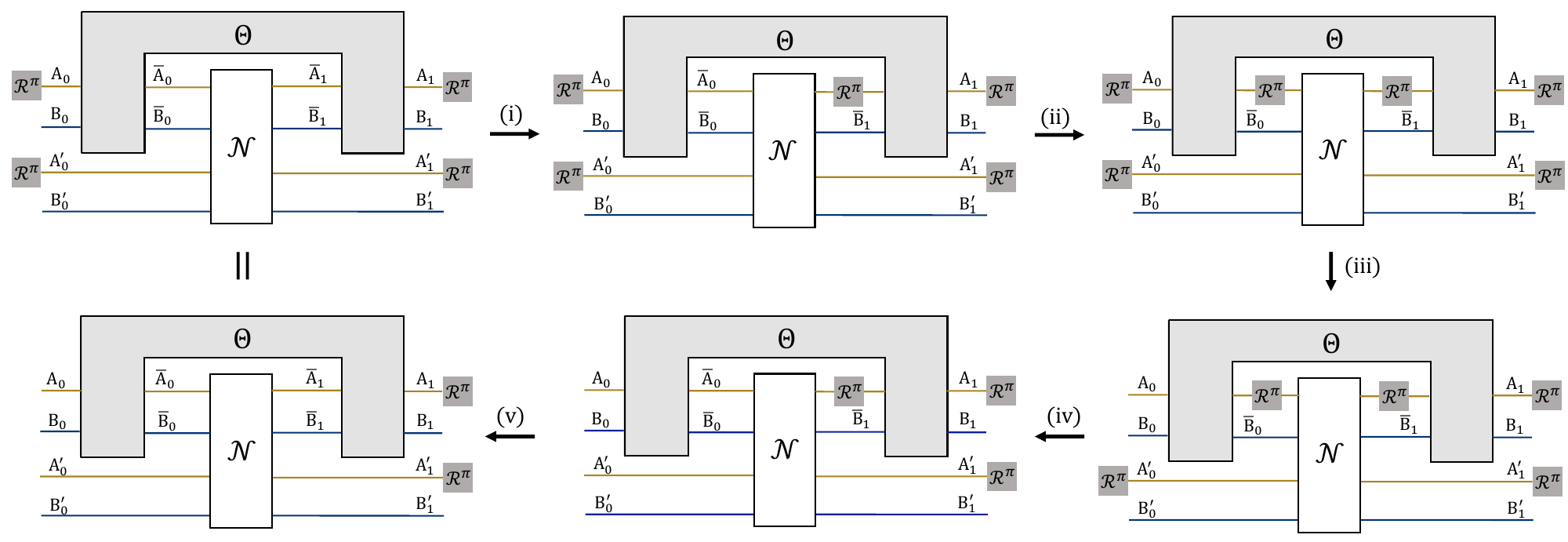}
    \caption{The sketch of different steps in the proof of Theorem~\ref{thm:NSSC_is_CNSP}.}
    \label{fig:if_part}
\end{figure}

\begin{theorem}\label{thm:NSSC_is_CNSP}
Every non-signalling bipartite superchannel $\Theta_{(\bA_0\bB_0\rightarrow \bA_1\bB_1)\rightarrow(A_0B_0\rightarrow A_1B_1)}$ is CNSP.
\end{theorem}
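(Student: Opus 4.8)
The plan is to work entirely at the level of Choi operators, using the characterization in Eq.~\eqref{Eq:Theta_choi_NS} as the hypothesis and deriving the non-signalling conditions of the output channel $\Theta(\cN_{R_0A'\bA_0\bB_0 B' \to R_1 A'\bA_1\bB_1 B'})$ from them, where the reference-extended input channel $\cN$ is assumed non-signalling between the $A'A$-side and $B'B$-side. First I would set up the Choi composition formula: if $J^{\Theta}$ is the Choi of the superchannel and $J^{\cN}$ the Choi of the (extended) input channel, then the Choi of the output channel is obtained by a ``link product'' $J^{\Theta(\cN)} = \tr_{\bA_0\bA_1\bB_0\bB_1}\big[(J^{\Theta}_{A_0\bA_0\bA_1 B_0\bB_0\bB_1 A_1 B_1})^{T_{\bA}} \, (J^{\cN}_{\bA_0\bA_1\bB_0\bB_1}\ox \idop_{A_0 B_0 A_1 B_1})\big]$ (with the appropriate reference systems $A'B'$ carried along, and partial transpose on the contracted systems according to the convention fixed earlier in the paper). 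Then the two desired output conditions are $J^{\Theta(\cN)}_{A_0 B_0 B_1} = \pi_{A_0}\ox J^{\Theta(\cN)}_{B_0 B_1}$ (Alice cannot signal Bob) and its mirror image.

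Second, I would verify the Alice-to-Bob output condition by tracing out $A_1$ in the link-product expression and pushing the trace through. Tracing out $A_1$ from $J^{\Theta(\cN)}$ hits only the $\Theta$-factor, producing $\tr_{A_1}[J^{\Theta}]$; the first identity of Eq.~\eqref{Eq:Theta_choi_NS}, $J^{\Theta}_{A_0\bA_1\mathbf{B}} = \pi_{A_0}\ox J^{\Theta}_{\bA_1\mathbf{B}}$ applied with the reference $A'$ carried passively, should let me factor $\pi_{A_0}$ out of the part of $J^{\Theta}$ that survives after we also use the input channel's non-signalling property to kill the dependence of the contracted $\bA_1$-register. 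The key is to combine the two ``$\bA_1$-type'' conditions: the superchannel condition that $\bA_1$ cannot signal $B_1$ (which is exactly $J^{\Theta}_{A_0\bA_1\mathbf B}=\pi_{A_0}\otimes J^{\Theta}_{\bA_1\mathbf B}$, used on the $A_0$ factor) together with the input channel $\cN$ being non-signalling from the $\bA_1$-side to the $\bB$-side, so that after contracting $\bA_1$ against $J^{\cN}$ the residual operator on the Bob systems is unaffected by $A_0$. The figure (Figure~\ref{fig:if_part}) suggests the intended bookkeeping is to do this in stages — first handle $\bA_1$, then $\bA_0$ — mirroring the two equations in the first line of Eq.~\eqref{Eq:Theta_choi_NS}. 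Third, I would invoke the symmetry $A\leftrightarrow B$ to get the Bob-to-Alice condition for free from Eq.~\eqref{Eq:nscombB}/second line of Eq.~\eqref{Eq:Theta_choi_NS}, and note that since the argument carries arbitrary reference systems $A'B'$ through untouched, what we have actually shown is the \emph{complete} NS-preservation, i.e.\ CNSP, not merely NS-preservation.

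The main obstacle I anticipate is the careful handling of partial transposes and which registers are being contracted versus kept in the link product — getting the Choi-operator composition formula exactly right (including on which systems the transpose sits, and making sure the ``passive'' reference systems $A', B', R_0, R_1$ are threaded correctly through both the superchannel's non-signalling conditions and the input channel's) is where the proof could go wrong if one is sloppy. A secondary subtlety is that one must use \emph{two} of the four conditions in Eq.~\eqref{Eq:Theta_choi_NS} in tandem with the input channel's single non-signalling condition to conclude the single output condition; the remark immediately preceding the theorem is essentially flagging exactly this point (that $\bA_1$ not signalling $\bA_0$ and not signalling $B_1$ individually is not enough — one needs the joint condition), so the proof must be organized so that the composite system $\bA_0 B_1$ is controlled, not just its marginals. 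I would double-check this by writing out the $\bA_1$-contraction explicitly on $J^{\Theta}$ restricted to $A_0\bA_0\bA_1 B_1$ and confirming that the two relevant equations of Eq.~\eqref{Eq:Theta_choi_NS} indeed force the needed factorization before contracting against $J^{\cN}$.
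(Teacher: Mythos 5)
Your plan is the Choi-operator translation of the argument the paper carries out directly at the superoperator level with replacement channels $\cR^{\pi}$ and the identities of Definition~\ref{def:NSBSC}; the underlying chain of manipulations is the same, so the two are not genuinely different routes, just different notations. The overall structure you propose — write the link product, trace out $A_1$, invoke the $\Theta$-conditions together with $\cN$'s non-signalling property, and symmetrize for B-to-A — is sound and would reproduce the paper's chain (i)--(v) if executed carefully, and you are right to flag that \emph{both} $\bA_1$-related conditions in the first line of Eq.~\eqref{Eq:Theta_choi_NS} must be used (this is exactly what the remark before the theorem warns about).

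That said, several attributions in your sketch are mixed up and need sorting out before a write-up would be correct. The condition $J^{\Theta}_{A_0\bA_1\mathbf B}=\pi_{A_0}\otimes J^{\Theta}_{\bA_1\mathbf B}$ is not ``$\bA_1$ cannot signal $B_1$''; it says $A_0$ cannot signal $\bA_1$ and Bob's systems once $\bA_0,A_1$ are traced out. And ``$\cN$ non-signalling from the $\bA_1$-side'' does not parse: $\bA_1$ is an \emph{output} of the plugged-in channel $\cN$, so $\cN$'s non-signalling property constrains its inputs $\bA_0,\bB_0$, not $\bA_1$. What actually kills the contracted $\bA_1$ register is the \emph{other} $\Theta$-condition, $J^{\Theta}_{A_0\bA_0\bA_1\mathbf B}=\pi_{\bA_1}\otimes J^{\Theta}_{A_0\bA_0\mathbf B}$, which forces the $\bA_1$-contraction in the link product to be against $\pi_{\bA_1}$, i.e.\ a partial trace of $J^{\cN}$ over $\bA_1$. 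Only then does $\cN$'s A-to-B non-signalling, $J^{\cN}_{\bA_0\bB_0\bB_1}=\pi_{\bA_0}\otimes J^{\cN}_{\bB_0\bB_1}$, remove the $\bA_0$-contraction, and only after both $\bA_0,\bA_1$ are gone does the $\bA_1$-traced version of your ``first identity,'' $J^{\Theta}_{A_0\mathbf B}=\pi_{A_0}\otimes J^{\Theta}_{\mathbf B}$, produce the desired $\pi_{A_0}$ factor. So the correct order is: $\Theta$'s $\bA_1$-condition, then $\cN$'s non-signalling on $\bA_0$, then $\Theta$'s $A_0$-condition — not the order your description suggests. Once these are straightened out (and the reference registers $A'_0,A'_1$ are carried through the identity superchannel's Choi, not through $J^{\Theta}$ itself), the calculation goes through.
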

\begin{proof}
Denote 
\be
\cM_{A'_0 B'_0A_0B_0 \rightarrow A'_1 B'_1 A_1 B_1} \coloneqq (\mathbb{I}_{A'B'} \ox \Theta)(\cN_{A'_0 B'_0 \bA_0\bB_0\rightarrow A'_1B'_1 \bA_1\bB_1}).
\ee
If $\cN_{A'_0 B'_0 \bA_0\bB_0\rightarrow A'_1B'_1 \bA_1\bB_1}$ is a non-signalling channel, consider that 
\bea\label{Eq:if_part}
    &\quad\; \cR^{\pi}_{A_1 A'_1}\circ \cM_{A'_0 B'_0A_0B_0 \rightarrow A'_1 B'_1 A_1 B_1} \circ\cR^{\pi}_{A_0A'_0}\\
    &\eqt{(i)} \cR^{\pi}_{A_1}\circ (\mathbb{I}_{A'B'} \ox \Theta)(\cR^{\pi}_{\bA_1 A'_1}\circ\cN_{A'_0 B'_0 \bA_0\bB_0\rightarrow A'_1B'_1 \bA_1\bB_1})\circ\cR^{\pi}_{A_0A'_0}\\
    &\eqt{(ii)} \cR^{\pi}_{A_1}\circ (\mathbb{I}_{A'B'} \ox \Theta)(\cR^{\pi}_{\bA_1 A'_1}\circ\cN_{A'_0 B'_0 \bA_0\bB_0\rightarrow A'_1B'_1 \bA_1\bB_1} \circ \cR^{\pi}_{\bA_0})\circ\cR^{\pi}_{A_0A'_0}\\
    &\eqt{(iii)} \cR^{\pi}_{A_1}\circ (\mathbb{I}_{A'B'} \ox \Theta)(\cR^{\pi}_{\bA_1 A'_1}\circ\cN_{A'_0 B'_0 \bA_0\bB_0\rightarrow A'_1B'_1 \bA_1\bB_1} \circ \cR^{\pi}_{A_0})\circ\cR^{\pi}_{A'_0}\\
    &\eqt{(iv)} \cR^{\pi}_{A_1}\circ (\mathbb{I}_{A'B'} \ox \Theta)(\cR^{\pi}_{\bA_1 A'_1}\circ\cN_{A'_0 B'_0 \bA_0\bB_0\rightarrow A'_1B'_1 \bA_1\bB_1})\\
    &\eqt{(v)} \cR^{\pi}_{A_1}\circ (\mathbb{I}_{A'B'} \ox \Theta)(\cR^{\pi}_{A'_1}\circ\cN_{A'_0 B'_0 \bA_0\bB_0\rightarrow A'_1B'_1 \bA_1\bB_1})\\
    &=\cR^{\pi}_{A_1 A'_1}\circ \cM_{A'_0 B'_0A_0B_0 \rightarrow A'_1 B'_1 A_1 B_1},
\eea
wherein (i) and (v) we used the fact that $\cR^{\pi}_{A_1}\circ \Theta \circ \cR^{\pi}_{\bA_1} = \cR^{\pi}_{A_1}\circ \Theta$; in (ii) and (iv) we used the fact that $\cN_{A'_0 B'_0 A_0B_0\rightarrow A'_1B'_1 A_1B_1}$ is a non-signalling channel; in (iii) we used the fact that $\cR^{\pi}_{A_1}\circ \Theta \circ \cR^{\pi}_{\bA_1} = \cR^{\pi}_{A_1}\circ \Theta$ and $\cR^{\pi}_{A_0A_1}\circ \Theta \circ \cR^{\pi}_{\bA_0} = \cR^{\pi}_{A_1}\circ \Theta \circ \cR^{\pi}_{\bA_0}$ will give
\be
R^{\pi}_{A_1\bA_0}\circ\Theta\circ\cR_{\bA_1A_0} = \cR^{\pi}_{A_1\bA_0}\circ\Theta\circ\cR^{\pi}_{\bA_1}.
\ee
The sketch of different steps in Eq.~\eqref{Eq:if_part} is shown in Figure~\ref{fig:if_part}. Hence, we know that $\cM_{\bA_0\bB_0A'_0 B'_0\rightarrow \bA_1\bB_1 A'_1 B'_1}$ is non-signalling from $\mathbf{A}$ to $\mathbf{B}$ whenever $\cN_{A'_0 B'_0 A_0B_0\rightarrow A'_1B'_1 A_1B_1}$ is a non-signalling channel. Furthermore, since $\cR^{\pi}_{\bB_1B_0}\circ \Theta \circ \cR^{\pi}_{\bB_0 B_1} = \cR^{\pi}_{\bB_1B_0}\circ \Theta$ and $\cR^{\pi}_{\bB_1}\circ \Theta \circ \cR^{\pi}_{B_1} = \cR^{\pi}_{\bB_1}\circ \Theta$, we can have a similar analysis to see that $\cM_{\bA_0\bB_0A'_0 B'_0\rightarrow \bA_1\bB_1 A'_1 B'_1}$ is non-signalling from $\mathbf{B}$ to $\mathbf{A}$. Thus, $\Theta$ is a CNSP superchannel.
\end{proof}

Theorem~\ref{thm:NSSC_is_CNSP} reveals that non-signalling bipartite superchannels, while motivated by physical constraints, also emerge as fundamental free operations in the framework of dynamical resource theories. These superchannels cannot transform a free channel (a non-signalling channel) into a resourceful one (a channel enabling signalling between parties), aligning with the core principle of resource theories wherein free operations should not generate resources. Consequently, non-signalling bipartite superchannels offer a powerful approach to analyzing the classical communication cost in scenarios where non-signalling channels are available. These higher-order transformations provide a natural foundation for investigating the interplay between bipartite quantum channels and classical communication, a theme that we will explore extensively in subsequent sections.

\section{Resource measures of classical communication}\label{sec:measures}

In this section, we define a general measure for the classical communication resource of bipartite quantum channels. In the language of quantum resource theories and dynamical resource theories~\cite{Gour_2019a}, we consider classical communication between two parties as a resource. Consequently, all non-signalling bipartite channels are treated as free objects, encompassing all quantum resources except communication. We note that, as a related work, Ref.~\cite{Milz_2022} considers the resource theory of causal connection where their free objects are non-signalling process matrices, which are different from non-signalling bipartite channels.

To define a resource measure, recall that a generalized divergence $\mathbf{D}(\rho\|\sigma)$ is a function of a quantum state $\rho$ and a positive semidefinite operator $\sigma$ that obeys data-processing inequality~\cite{Polyanskiy2010,Sharma_2013}. Based on this, the corresponding generalized channel divergence~\cite{Leditzky_2018} provides a way of quantifying the distinguishability of two quantum channels $\cN_{A\rightarrow B}$ and $\cM_{A\rightarrow B}$. The generalized channel divergence is defined as
\be
    \mathbf{D}(\cN\| \cM) \coloneqq \sup_{\psi_{AR}} \mathbf{D}\big(\cN_{A\rightarrow B}(\psi_{AR})\|\cM_{A\rightarrow B}(\psi_{AR})\big),
\ee
where the maximization ranges over all pure states $\psi_{AR}$ such that the reference system $R$ is isomorphic to the channel input system $A$.

\begin{definition}[Generalized channel divergence of classical communication]
Let $\cN_{A_0B_0\rightarrow A_1B_1}$ be a bipartite quantum channel. The generalized divergence of one-way classical communication of $\cN_{A_0B_0\rightarrow A_1B_1}$ is defined as
\be
    \mathfrak{D}^{\rightarrow}(\cN) \coloneqq \inf_{\cE\in\NSO^{\rightarrow}} \mathbf{D}(\cN\|\cE),
\ee
where the optimization is with respect to all non-signalling channels in $\NSO^{\rightarrow}(A_0B_0, A_1B_1)$. The generalized divergence of bidirectional classical communication is defined as
\be
    \mathfrak{D}^{\leftrightarrow}(\cN) \coloneqq \inf_{\cE\in\NSO} \mathbf{D}(\cN\|\cE).
\ee
\end{definition}

The generalized divergence of bidirectional communication satisfies monotonicity, i.e., it is nonincreasing under non-signalling-preserving superchannels, which include CNSP superchannels and non-signalling bipartite superchannels.

\begin{lemma}[Monotonicity]\label{lem:monoto}
For every bipartite quantum channel $\cN_{\bA_0\bB_0\rightarrow \bA_1\bB_1}$ and every NS-preserving superchannel $\Theta_{(\bA_0\bB_0\rightarrow \bA_1\bB_1)\rightarrow(A_0B_0\rightarrow A_1 B_1)}$, it satisfies
\be
    \mathfrak{D}^{\leftrightarrow}(\cN) \geq \mathfrak{D}^{\leftrightarrow}(\Theta(\cN)).
\ee
\end{lemma}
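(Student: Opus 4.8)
The plan is to reduce the monotonicity statement to two facts already available in the excerpt: (1) the variational characterization of $\mathfrak{D}^{\leftrightarrow}$ as an infimum of generalized channel divergences over non-signalling channels, and (2) the data-processing inequality for the underlying generalized divergence $\mathbf{D}$, lifted to the level of channels. The key structural observation is that a non-signalling-preserving superchannel $\Theta$ maps the feasible set for $\cN$ into the feasible set for $\Theta(\cN)$: if $\cE \in \NSO(\bA_0\bB_0,\bA_1\bB_1)$, then $\Theta(\cE) \in \NSO(A_0B_0,A_1B_1)$ by the very definition of NS-preserving. So each competitor in the optimization defining $\mathfrak{D}^{\leftrightarrow}(\cN)$ produces, via $\Theta$, a valid competitor in the optimization defining $\mathfrak{D}^{\leftrightarrow}(\Theta(\cN))$.

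The main steps, in order, are as follows. First, fix an arbitrary $\cE \in \NSO(\bA_0\bB_0,\bA_1\bB_1)$; by the NS-preserving property of $\Theta$ we have $\Theta(\cE) \in \NSO(A_0B_0,A_1B_1)$, hence $\mathfrak{D}^{\leftrightarrow}(\Theta(\cN)) \le \mathbf{D}(\Theta(\cN)\|\Theta(\cE))$. Second, I would establish the channel-level data-processing inequality $\mathbf{D}(\Theta(\cN)\|\Theta(\cE)) \le \mathbf{D}(\cN\|\cE)$: writing $\Theta(\cdot) = \cD \circ (\cdot) \circ \cE_{\mathrm{pre}}$ through its pre-/post-processing realization (from Chiribella et al.), for any test state $\psi_{AR}$ the outputs $\Theta(\cN)(\psi_{AR})$ and $\Theta(\cE)(\psi_{AR})$ arise from $\cN$ and $\cE$ applied to the common state $\cE_{\mathrm{pre}}(\psi_{AR})$ followed by the common channel $\cD$; by data-processing of $\mathbf{D}$ under $\cD$ and by the fact that the supremum defining $\mathbf{D}(\cN\|\cE)$ ranges over all input states (including $\cE_{\mathrm{pre}}(\psi_{AR})$ suitably purified), the inequality follows. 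Third, chaining the two inequalities gives $\mathfrak{D}^{\leftrightarrow}(\Theta(\cN)) \le \mathbf{D}(\cN\|\cE)$ for every $\cE \in \NSO$, and taking the infimum over $\cE$ on the right yields $\mathfrak{D}^{\leftrightarrow}(\Theta(\cN)) \le \mathfrak{D}^{\leftrightarrow}(\cN)$.

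The step I expect to require the most care is the channel-level data-processing inequality, specifically handling the reference systems correctly: one must check that when $\Theta$ acts with an auxiliary memory $A'$ between pre- and post-processing, the reference system $R$ in the definition of $\mathbf{D}(\cN\|\cE)$ can be enlarged to absorb $R$, $A'$, and a purification of $\cE_{\mathrm{pre}}(\psi_{AR})$, so that the optimization for $\mathbf{D}(\cN\|\cE)$ genuinely dominates every term appearing in the optimization for $\mathbf{D}(\Theta(\cN)\|\Theta(\cE))$. This is the standard argument that generalized channel divergences are monotone under superchannels, and here it specializes cleanly because $\Theta$ being non-signalling-preserving is exactly what keeps the competitor $\Theta(\cE)$ inside $\NSO$; no additional structure of $\Theta$ beyond being a superchannel is needed for the data-processing half, and no structure beyond NS-preservation is needed for the feasibility half. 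I would also remark that the same argument applies verbatim to $\mathfrak{D}^{\rightarrow}$ with $\NSO^{\rightarrow}$ in place of $\NSO$, provided $\Theta$ preserves one-way non-signalling channels.
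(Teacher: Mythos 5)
Your proposal is correct and follows essentially the same route as the paper: the paper's proof is the two-step chain $\mathfrak{D}^{\leftrightarrow}(\cN)=\inf_{\cE\in\NSO}\mathbf{D}(\cN\|\cE)\geq\inf_{\cE\in\NSO}\mathbf{D}(\Theta(\cN)\|\Theta(\cE))\geq\mathfrak{D}^{\leftrightarrow}(\Theta(\cN))$, invoking monotonicity of the generalized channel divergence under superchannels (cited from Gour, Section V-A) and the NS-preserving property of $\Theta$, exactly as you do. The only difference is that you sketch the proof of the channel-level data-processing step via the pre-/post-processing realization, while the paper simply cites it.
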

\begin{proof}
Consider that 
\be
    \mathfrak{D}^{\leftrightarrow}(\cN) = \inf_{\cE\in\NSO} \mathbf{D}(\cN\|\cE)  \geqt{(i)} \inf_{\cE\in\NSO} \mathbf{D}(\Theta(\cN)\|\Theta(\cE)) \geqt{(ii)} \inf_{\widehat{\cE}\in\NSO} \mathbf{D}(\Theta(\cN)\|\widehat{\cE}) = \mathfrak{D}^{\leftrightarrow}(\Theta(\cN))
\ee
where in (i) we used the monotonicity of the generalized channel divergence~\cite[Section V-A]{Gour_2019}, and (ii) is because $\Theta$ is non-signalling-preserving.
\end{proof}

As a particular case of generalized channel divergences, we consider the channel max-relative entropy~\cite{Cooney_2016}, given by
\be
D_{\max }(\cN \| \cE)\coloneqq\max_{\psi_{R A}} D_{\max}\big(\cN_{A \rightarrow B}\left(\psi_{R A}\right) \| \cE_{A \rightarrow B}\left(\psi_{R A}\right)\big),
\ee
where the max-relative entropy of states is defined as $D_{\max}(\rho \| \sigma) \coloneqq \log \min \{t~|~\rho \leq t\sigma\}$.
It is shown that~\cite{Wilde_2020}
\be\label{eq:chan_max_choi}
D_{\max }(\cN \| \cE)=D_{\max }\big(\cN_{A \rightarrow B}\left(\Phi_{R A}\right) \| \cE_{A \rightarrow B}\left(\Phi_{R A}\right)\big)=\log \min \left\{t~|~J_{A B}^{\cN} \leq t J_{A B}^{\cE}\right\},
\ee
where $\Phi_{R A}$ is the maximally entangled state and $J_{A B}^{\cN}$ is the \Choi operator of the channel $\cN_{A \rightarrow B}$ and similarly for $J_{A B}^{\cE}$. The logarithms in this work are taken in the base $2$. Eq.~\eqref{eq:chan_max_choi} implies the channel max-relative entropy is additive under tensor products, i.e., $D_{\max }(\cN^{\ox n} \| \cE^{\ox n}) = nD_{\max }(\cN \| \cE)$ for all $n\in\mathbb{N}$.

\begin{definition}[Max-relative entropy of classical communication]
Let $\cN_{A_0B_0\rightarrow A_1B_1}$ be a bipartite quantum channel. The max-relative entropy of one-way classical communication of $\cN_{A_0B_0\rightarrow A_1B_1}$ is defined as
\be
    \mathfrak{D}_{\max}^{\rightarrow}(\cN) \coloneqq \min_{\cE\in\NSO^{\rightarrow}} D_{\max}(\cN\|\cE),
\ee
where the minimization ranges over all one-way non-signalling channels in $\NSO^{\rightarrow}(A_0B_0, A_1B_1)$. The max-relative entropy of bidirectional classical communication of $\cN_{A_0B_0\rightarrow A_1B_1}$ is defined as
\be\label{Eq:dmaxNS_def}
\mathfrak{D}_{\max}^{\leftrightarrow}(\cN) \coloneqq \min_{\cE\in\NSO} D_{\max}(\cN\|\cE).
\ee
\end{definition}
We also define the $\epsilon$-smooth max-relative entropy of classical communication as
\begin{equation}
    \mathfrak{D}_{\max}^{*,\epsilon}(\cN) \coloneqq \min_{\cM: \frac{1}{2}\|\cM-\cN\|_{\diamond}\leq \epsilon}\mathfrak{D}_{\max}^{*}(\cM),
\end{equation}
where $*\in\{\rightarrow,\leftrightarrow\}$ and $\cM$ ranges over all bipartite channels in $\CPTP(A_0B_0,A_1B_1)$.

Given a bipartite quantum channel $\cN_{A_0B_0\rightarrow A_1B_1}$, its max-relative entropy of bidirectional classical communication is given by
\be\label{SDP:dmaxNS}
\begin{aligned}
    \mathfrak{D}_{\max}^{\leftrightarrow}(\cN) = \log\, \min &\;\; \lambda\\
     \textrm{s.t.} &\;\; J_{A_0A_1B_0B_1}^{\cN} \leq Y_{A_0 A_1 B_0 B_1},~Y_{A_0B_0} = \lambda \idop_{A_0B_0},\\
     &\;\; Y_{A_0B_0B_1} = \pi_{A_0}\ox Y_{B_0B_1},~Y_{A_0A_1B_0} = \pi_{B_0}\ox Y_{A_0A_1},
\end{aligned}
\ee
whose dual SDP is
\be
\begin{aligned}
   \log\, \max&\; \tr\Big[M_{A_0A_1B_0B_1} J_{A_0A_1B_0B_1}^{\cN}\Big]\\
     \textrm{s.t.} 
     &\; \tr N_{A_0B_0} = 1,~P_{B_0B_1} = 0,~Q_{A_0A_1} = 0\\
     &\;\; 0\leq M_{A_0A_1B_0B_1} \leq N_{A_0B_0}\ox \idop_{A_1B_1}+ P_{A_0B_0B_1}\ox \idop_{A_1} + Q_{A_0A_1B_0}\ox \idop_{B_1},
\end{aligned}
\ee
We can also have primal and dual SDPs for the max-relative entropy of one-way classical communication by removing a non-signalling condition as 
\be\label{Eq:dmax_oneway_pri}
    \mathfrak{D}_{\max}^{\rightarrow}(\cN) = \log\, \min \big\{\lambda~\big|~J_{A_0A_1B_0B_1}^{\cN} \leq Y_{A_0 A_1 B_0 B_1}, Y_{A_0B_0} = \lambda \idop_{A_0B_0}, Y_{A_0B_0B_1} = \pi_{A_0}\ox Y_{B_0B_1}\big\}.
\ee
and
\be\label{SDP:dmax_oneway_dual}
\begin{aligned}
   \log\, \max\;& \tr\big[M_{A_0A_1B_0B_1} J_{A_0A_1B_0B_1}^{\cN}\big]\\
     \textrm{s.t. } 
     & \tr N_{A_0B_0} = 1,~P_{B_0B_1} = 0,~0\leq M_{A_0A_1B_0B_1} \leq N_{A_0B_0}\ox \idop_{A_1B_1}+ P_{A_0B_0B_1}\ox \idop_{A_1}.
\end{aligned}
\ee
A notable property of the max-relative entropy of one-way classical communication is its additivity with respect to the tensor product of two bipartite channels.

\begin{lemma}[Additivity of $\mathfrak{D}^{\rightarrow}_{\max}$]\label{lem:dmax_oneway_add}
The max-relative entropy of one-way communication is additive with respect to the tensor product of quantum channels, i.e., for every $\cN\in\CPTP(A_0B_0:A_1B_1)$ and $\cM\in\CPTP(A'_0B'_0:A'_1B'_1)$,
\be
    \mathfrak{D}^{\rightarrow}_{\max}(\cN\ox \cM) = \mathfrak{D}^{\rightarrow}_{\max}(\cN) + \mathfrak{D}^{\rightarrow}_{\max}(\cM).
\ee
\end{lemma}

\begin{proof}
The proof utilizes the primal and dual SDPs for $\mathfrak{D}^{\rightarrow}_{\max}(\cdot)$. First, we prove the subadditivity. Suppose the optimal solutions for $\cN$ and $\cM$ considering Eq.~\eqref{Eq:dmax_oneway_pri} are $\cE_1$ and $\cE_2$, respectively. Then we have
\be
\begin{aligned}
    \mathfrak{D}^{\rightarrow}_{\max}(\cN\ox \cM) &\leqt{(i)} D_{\max}(\cN\ox \cM \| \cE_1\ox \cE_2)\\
    &\eqt{(ii)} D_{\max}(\cN\|\cE_1) + D_{\max}(\cM\|\cE_2)\\
    &= \mathfrak{D}^{\rightarrow}_{\max}(\cN) + \mathfrak{D}^{\rightarrow}_{\max}(\cM),
\end{aligned}
\ee
where (i) is due to the fact that $\cE_1\ox \cE_2$ is non-signalling from $\mathbf{A}$ to $\mathbf{B}$, thus giving a feasible solution. In (ii), we have used the additivity of the channel max-relative entropy. Second, we prove the superadditivity. Suppose $\{M_{A_0A_1B_0B_1}, N_{A_0B_0}, P_{A_0B_0B_1}\}$ and $\{M'_{A'_0A'_1B'_0B'_1}, N'_{A'_0B'_0}, P'_{A'_0B'_0B'_1}\}$ are feasible solution of SDP~\eqref{SDP:dmax_oneway_dual} for channels $\cN$ and $\cM$ with objective values $x_1 = \tr\big[M_{A_0A_1B_0B_1} J_{A_0A_1B_0B_1}^{\cN}\big]$ and $x_2 = \tr\big[M'_{A'_0A'_1B'_0B'_1} J_{A'_0A'_1B'_0B'_1}^{\cM}\big]$, respectively. Consider $\widehat{M}_{\mathbf{A}\mathbf{A}'\mathbf{B}\mathbf{B}'} := M_{A_0A_1B_0B_1}\ox M'_{A'_0A'_1B'_0B'_1},~\widehat{N}_{A_0B_0A'_0B'_0} := N_{A_0B_0}\ox N'_{A'_0B'_0}$ and $\widehat{P}_{A_0A'_0\mathbf{B}\mathbf{B}'}:= N_{A_0B_0}\ox P'_{A'_0B'_0B'_1}\ox\idop_{B_1} + P_{A_0B_0B_1}\ox N'_{A'_0B'_0}\ox\idop_{B'_1} + P_{A_0B_0B_1}\ox P'_{A'_0B'_0B'_1}$. It can be checked that $\widehat{M}_{\mathbf{A}\mathbf{A}'\mathbf{B}\mathbf{B}'}\geq 0$,
\be
\begin{aligned}
\widehat{M}_{\mathbf{A}\mathbf{A}'\mathbf{B}\mathbf{B}'} &= M_{A_0A_1B_0B_1}\ox M'_{A'_0A'_1B'_0B'_1}\\
&\leq (N_{A_0B_0}\ox \idop_{A_1B_1}+ P_{A_0B_0B_1}\ox \idop_{A_1}) \ox (N_{A'_0B'_0}\ox \idop_{A'_1B'_1}+ P_{A'_0B'_0B'_1}\ox \idop_{A'_1})\\
&= \widehat{N}_{A_0B_0A'_0B'_0} \ox \idop_{A_1B_1A'_1B'_1} + \widehat{P}_{A_0A'_0\mathbf{B}\mathbf{B}'} \ox \idop_{A_1A'_1}.
\end{aligned}
\ee
As $P_{B_0B_1} = P'_{B'_0B'_1} = 0,$ it follows that $\widehat{P}_{\mathbf{B}\mathbf{B}'} = N_{B_0}\ox P'_{B'_0B'_1}\ox\idop_{B_1} + P_{B_0B_1}\ox N'_{B'_0}\ox\idop_{B'_1} + P_{B_0B_1}\ox P'_{B'_0B'_1} = 0$. Hence, $\{\widehat{M}_{\mathbf{A}\mathbf{A}'\mathbf{B}\mathbf{B}'},\widehat{N}_{A_0B_0A'_0B'_0},\widehat{P}_{A_0A'_0\mathbf{B}\mathbf{B}'}\}$ forms a feasible solution of SDP~\eqref{SDP:dmax_oneway_dual} for the channel $\cN\ox\cM$ with an objective value $x_1x_2$. It follows that $\mathfrak{D}^{\rightarrow}_{\max}(\cN\ox \cM) \geq \mathfrak{D}^{\rightarrow}_{\max}(\cN) + \mathfrak{D}^{\rightarrow}_{\max}(\cM)$. Combining with the subadditivity, we complete the proof.
\end{proof}

The max-relative entropy of bidirectional classical communication satisfies subadditivity with respect to the tensor product of bipartite quantum channels.

\begin{lemma}[Subadditivity of $\mathfrak{D}^{\leftrightarrow}_{\max}$]\label{lem:dmax_subadd}
The max-relative entropy of bidirectional communication is subadditive with respect to the tensor product of quantum channels, i.e., for every $\cN\in\CPTP(A_0B_0:A_1B_1)$ and $\cM\in\CPTP(A'_0B'_0:A'_1B'_1)$,
\be
    \mathfrak{D}^{\leftrightarrow}_{\max}(\cN\ox \cM) \leq \mathfrak{D}^{\leftrightarrow}_{\max}(\cN) + \mathfrak{D}^{\leftrightarrow}_{\max}(\cM).
\ee
\end{lemma}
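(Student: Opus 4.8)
The plan is to build, from optimal non-signalling channels for $\cN$ and $\cM$ individually, an explicit feasible non-signalling channel for $\cN\ox\cM$, and then to use the behaviour of the channel max-relative entropy under tensor products of its arguments.

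First I would choose $\cE^\star\in\NSO(A_0B_0,A_1B_1)$ attaining the minimum in $\mathfrak{D}^{\leftrightarrow}_{\max}(\cN)=\min_{\cE\in\NSO}D_{\max}(\cN\|\cE)$, and likewise $\cF^\star\in\NSO(A'_0B'_0,A'_1B'_1)$ attaining $\mathfrak{D}^{\leftrightarrow}_{\max}(\cM)$; these minima are attained since in the SDP~\eqref{SDP:dmaxNS} the feasible region is closed and $\lambda$ may be restricted to a compact interval (otherwise one runs the argument with $\delta$-optimal channels and lets $\delta\to 0$ at the end). Viewing $\cE^\star\ox\cF^\star$ as a bipartite channel from $A_0A'_0B_0B'_0$ to $A_1A'_1B_1B'_1$ with Alice holding all $A$-labelled systems and Bob holding all $B$-labelled systems, I claim it lies in $\NSO(A_0A'_0B_0B'_0,A_1A'_1B_1B'_1)$. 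This is immediate from the Choi picture: up to a fixed permutation of tensor factors one has $J^{\cE^\star\ox\cF^\star}=J^{\cE^\star}\ox J^{\cF^\star}$, and tracing out Alice's outputs gives $J^{\cE^\star}_{A_0B_0B_1}\ox J^{\cF^\star}_{A'_0B'_0B'_1}=(\pi_{A_0}\ox J^{\cE^\star}_{B_0B_1})\ox(\pi_{A'_0}\ox J^{\cF^\star}_{B'_0B'_1})$, which, after regrouping and using $\pi_{A_0}\ox\pi_{A'_0}=\pi_{A_0A'_0}$, equals $\pi_{A_0A'_0}\ox J^{\cE^\star\ox\cF^\star}_{B_0B'_0B_1B'_1}$; the $\mathbf B$-to-$\mathbf A$ identity follows symmetrically. (Alternatively one invokes the same style of argument as in Lemma~\ref{lem:NS_comp}: a tensor product of $A\!\to\! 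B$ non-signalling channels is $A\!\to\! B$ non-signalling, and likewise for the other direction.)

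Next I would use that the channel max-relative entropy is subadditive under tensor products of its arguments. By the Choi characterization~\eqref{eq:chan_max_choi}, if $J^{\cN}\le t_1 J^{\cE^\star}$ and $J^{\cM}\le t_2 J^{\cF^\star}$, then $J^{\cN}\ox J^{\cM}\le t_1t_2\,J^{\cE^\star}\ox J^{\cF^\star}$; since a permutation of tensor factors preserves the positive semidefinite order, this yields $J^{\cN\ox\cM}\le t_1t_2\,J^{\cE^\star\ox\cF^\star}$. Taking $t_1,t_2$ optimal and then logarithms gives
\be
D_{\max}(\cN\ox\cM\,\|\,\cE^\star\ox\cF^\star)\le D_{\max}(\cN\|\cE^\star)+D_{\max}(\cM\|\cF^\star).
\ee
Combining this with the fact that $\cE^\star\ox\cF^\star$ is a feasible non-signalling channel for $\cN\ox\cM$ gives
\be
\mathfrak{D}^{\leftrightarrow}_{\max}(\cN\ox\cM)\le D_{\max}(\cN\ox\cM\,\|\,\cE^\star\ox\cF^\star)\le D_{\max}(\cN\|\cE^\star)+D_{\max}(\cM\|\cF^\star)=\mathfrak{D}^{\leftrightarrow}_{\max}(\cN)+\mathfrak{D}^{\leftrightarrow}_{\max}(\cM),
\ee
which is the assertion.

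There is no genuine obstacle here; the only thing requiring care is the bookkeeping of the system reordering, so that the statement ``$\cE^\star\ox\cF^\star$ is non-signalling'' is made with respect to exactly the bipartite cut $A_0A'_0\!\to\! A_1A'_1$ versus $B_0B'_0\!\to\! B_1B'_1$ used in the definition of $\mathfrak{D}^{\leftrightarrow}_{\max}(\cN\ox\cM)$. Note in particular that only the easy (subadditive) direction of the tensor behaviour of $D_{\max}$ is needed, so no converse inequality has to be proved.
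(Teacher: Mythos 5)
Your proposal is correct and follows essentially the same route as the paper's proof: tensor the two optimal non-signalling channels, observe that the product remains non-signalling across the combined $\mathbf{A}$-versus-$\mathbf{B}$ cut, and invoke the (sub)additivity of the channel max-relative entropy from Eq.~\eqref{eq:chan_max_choi}. Your additional care about attainment of the minima and the explicit Choi-operator verification of non-signalling are details the paper leaves implicit, but the argument is the same.
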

\begin{proof}
The proof follows from that in Lemma~\ref{lem:dmax_oneway_add}.
\end{proof}

Recalling that the max-relative entropy is interlinked with the robustness of resources, we introduce the robustness of classical communication as follows.
\begin{definition}[Robustness of classical communication]
Let $\cN_{A_0B_0\rightarrow A_1B_1}$ be a bipartite quantum channel. The robustness of bidirectional classical communication of $\cN_{A_0B_0\rightarrow A_1B_1}$ is defined as
\be
    R_{\mathrm{NS}}(\cN) \coloneqq \min\Big\{ r \geq 0~\Big|~\frac{\cN+r\cM}{1+r} \in \NSO(A_0B_0,A_1B_1),~\cM\in\CPTP(A_0B_0,A_1B_1) \Big\}.
\ee
\end{definition}
The robustness of bidirectional classical communication has a relationship with the max-relative entropy of bidirectional classical communication as 
\be
    \mathfrak{D}_{\max}^{\leftrightarrow}(\cN) = \log(1+R_{\mathrm{NS}}(\cN)).
\ee
It is worth noting that there is no need to define the \textit{standard} robustness of classical communication in a similar manner:
\be
    \min\Big\{ r \geq 0~\Big|~\frac{\cN+r\cM}{1+r} \in \NSO(A_0B_0,A_1B_1),~\cM\in\NSO(A_0B_0,A_1B_1) \Big\}.
\ee
This is because, due to the linearity of non-signalling constraints, there is no feasible $\cM\in \NSO$ that can make $\cN+r\cM$ non-signalling unless $\cN$ is already non-signalling itself.

Another entropic quantifier of quantum channels is the relative entropy of quantum channels~\cite{Cooney_2016,Leditzky_2018}, given by 
\begin{equation}
    D(\cN\|\cE) \coloneqq \max_{\psi_{RA}}D\big(\cN_{A\rightarrow B}(\psi_{RA}) \| \cE_{A\rightarrow B}(\psi_{RA}) \big),
\end{equation}
where $D(\rho\|\sigma)\coloneqq \tr(\rho\log \rho - \rho\log\sigma)$ is the relative entropy of quantum states. Subsequently, we introduce the relative entropy of classical communication as follows.

\begin{definition}[Relative entropy of classical communication]
Let $\cN_{A_0B_0\rightarrow A_1B_1}$ be a bipartite quantum channel. The relative entropy of one-way classical communication of $\cN_{A_0B_0\rightarrow A_1B_1}$ is defined as
\be
    \mathfrak{D}^{\rightarrow}(\cN) \coloneqq \min_{\cE\in\NSO^{\rightarrow}} D(\cN\|\cE),
\ee
where the minimization ranges over all one-way non-signalling channel in $\NSO^{\rightarrow}(A_0B_0, A_1B_1)$. The relative entropy of bidirectional classical communication of $\cN_{A_0B_0\rightarrow A_1B_1}$ is defined as
\be\label{Eq:dNS_def}
\mathfrak{D}^{\leftrightarrow}(\cN) \coloneqq \min_{\cE\in\NSO} D(\cN\|\cE).
\ee
\end{definition}
Similar quantities can also be defined by choosing $\mathbf{D}(\rho\|\sigma)$ as the Petz-R\'enyi entropy~\cite{Petz1986}, the sandwiched R\'enyi entropy~\cite{M_ller_Lennert_2013, Wilde_2014}, and the the Hilbert $\alpha$-divergences~\cite{Buscemi_2017}.

\section{Classical communication cost of bipartite channel simulation}\label{sec:sim_CNSP}
In this section, we address the problem of quantifying the classical communication resources required to simulate a bipartite quantum channel using non-signalling correlations. Our objective is to determine the minimum number of classical bits that need to be exchanged between Alice and Bob to replicate the action of a target bipartite quantum channel accurately. This analysis assumes access to unlimited non-signalling correlations, which represent the most general form of quantum resources one can exploit without invoking communication, surpassing even unlimited entanglement~\cite{Cubitt_2011,Matthews_2012,Leung2015}. Specifically, we approach this problem by studying the channel simulation task via non-signalling bipartite superchannels.

\begin{definition}[Channel simulation via $A$-to-$B$ non-signalling superchannels]
Given two bipartite quantum channels $\cN_{\bA_0\bB_0\rightarrow \bA_1\bB_1}$ and $\cM_{A_0B_0\rightarrow A_1B_1}$, the minimum error of simulation from $\cN$ to $\cM$ via $A$-to-$B$ non-signalling bipartite superchannels is defined as
\be
    e^{\rightarrow}(\cN, \cM) \coloneqq \frac{1}{2}\inf_{\Theta}\Big\|\Theta(\cN_{\bA_0\bB_0\rightarrow \bA_1\bB_1}) - \cM_{A_0B_0\rightarrow A_1B_1}\Big\|_{\Diamond},
\ee
where $\|\cF\|_{\diamond}\coloneqq \sup_{k\in\mathbb{N}}\sup_{\|X\|_1\leq 1}\|(\cF\ox\id_k)(X)\|_1$ denotes the diamond norm of a linear operator $\cF$ and the minimization ranges over all $A$-to-$B$ non-signalling bipartite superchannels $\Theta_{(\bA_0\bB_0\rightarrow \bA_1\bB_1)\rightarrow(A_0B_0\rightarrow A_1 B_1)}$.
\end{definition}

Similarly, we define $e^{\leftarrow}(\cN, \cM)$ as the minimum error of simulation from $\cN$ to $\cM$ via $B$-to-$A$ non-signalling bipartite superchannels, and $e^{\leftrightarrow}(\cN, \cM)$ as the minimum error when $\Theta$ ranges over all non-signalling bipartite superchannels.

We characterize the simulation cost from both one-shot and asymptotic perspectives. In the one-shot scenario, we show that the $\epsilon$-smooth max-relative entropy of bidirectional classical communication is a lower bound on the $\epsilon$-error simulation cost, and give semidefinite programming (SDP) formulations for the exact bidirectional classical communication cost. For the asymptotic case, we establish a single-letter lower bound on the exact communication cost and relate the vanishing error cost to the relative entropy of bidirectional classical communication.

\subsection{One-shot classical communication cost of bipartite channels}

The classical communication between Alice and Bob can be characterized via a noiseless classical channel. A noiseless classical channel that allows Alice to send $m$ messages to Bob is defined as
\begin{equation}
    \Upsilon_{A\rightarrow B}(\cdot) = \sum_{k=0}^{m-1} \bra{k}(\cdot)\ket{k}\ketbra{k}{k}.
\end{equation}
Consider the bidirectional classical communication between Alice and Bob. If there are $m$ messages they can share with each other at the same time, then the classical communication can be characterized by a bidirectional classical noiseless channel $\Upsilon_m:A_0B_0\rightarrow A_1B_1$ whose \Choi operator is given by 
\be
    C_{A_0B_0A_1B_1} = \sum_{k=0}^{m-1}\sum_{l=0}^{m-1}\ketbra{kk}{kk}_{A_0B_1}\ox\ketbra{ll}{ll}_{B_0A_1}.
\ee
This channel maps each input state $\ketbra{k}{k}_{A_0}\ox\ketbra{l}{l}_{B_0}$ to an output state $\ketbra{l}{l}_{A_1}\ox\ketbra{k}{k}_{B_1}$. In the context of bidirectional classical communication, such a bipartite channel $\Upsilon_m$ is considered a fundamental resource. We aim to study how to simulate an arbitrary bipartite quantum channel using non-signalling superchannels and $\Upsilon_m$. Consequently, we introduce the bidirectional classical communication cost of a bipartite quantum channel as follows.

\begin{definition}[Bidirectional classical communication cost]
Let $\cN_{A_0B_0\rightarrow A_1B_1}$ be a bipartite quantum channel. The one-shot $\epsilon$-error bidirectional classical communication cost of $\cN_{A_0B_0\rightarrow A_1B_1}$ via non-signalling bipartite superchannels is defined as
\be
S_{\leftrightarrow, \epsilon}^{(1)}(\cN)\coloneqq \log\inf\Big\{m\,\Big|\, e^{\leftrightarrow}(\Upsilon_{m}, \cN) \leq \epsilon,\, m\in \mathbb{N} \Big\}.
\ee
\end{definition}

We also consider the one-way classical communication cost where Alice and Bob utilize a unidirectional classical noiseless channel, and their simulation protocol is implemented via an $A$-to-$B$ (or $B$-to-$A$) non-signalling bipartite superchannel. Noticing that $\Upsilon_m = \Upsilon_{A_0\rightarrow B_1}\ox \Upsilon_{B_0\rightarrow A_1}$ and $\Upsilon_{B_0\rightarrow A_1}$ ($\Upsilon_{A_0\rightarrow B_1}$) can be incorporated into the superchannel, we define the one-way communication cost analogously to the bidirectional case as follows.
\begin{definition}[One-way classical communication cost]
Let $\cN_{A_0B_0\rightarrow A_1B_1}$ be a bipartite quantum channel. The one-shot $\epsilon$-error one-way classical communication cost of $\cN_{A_0B_0\rightarrow A_1B_1}$ via $A$-to-$B$ or $B$-to-$A$ non-signalling bipartite superchannels is defined as
\be
S_{*, \epsilon}^{(1)}(\cN)\coloneqq \log\inf\Big\{m\,\Big|\, e^{*}(\Upsilon_{m}, \cN) \leq \epsilon,\, m\in \mathbb{N} \Big\},
\ee
where $*\in\{\rightarrow,\leftarrow\}$.
\end{definition}

Our first result establishes upper and lower bounds on the one-shot $\epsilon$-error one-way classical communication cost via one-way non-signalling bipartite superchannels in terms of the $\epsilon$-smooth max-relative entropy of one-way classical communication.

\begin{theorem}[One-shot $\epsilon$-error one-way classical communication cost]\label{thm:smooth_AtoB}
For a given bipartite quantum channel $\cN_{A_0B_0\rightarrow A_1B_1}$, its one-shot $\epsilon$-error one-way classical communication cost via $A$-to-$B$ or $B$-to-$A$ non-signalling bipartite superchannels satisfies
\begin{equation}
    \mathfrak{D}_{\max}^{*,\epsilon}(\cN) \leq S_{*, \epsilon}^{(1)}(\cN)\leq \mathfrak{D}_{\max}^{*,\epsilon}(\cN) + 1,
\end{equation}
where $*\in\{\rightarrow, \leftarrow\}$.
\end{theorem}
\begin{proof}
    We first prove $\mathfrak{D}_{\max}^{\rightarrow,\epsilon}(\cN) \leq S_{\rightarrow, \epsilon}^{(1)}(\cN)$. Let $m$ be the positive integer satisfying $S_{\rightarrow, \epsilon}^{(1)}(\cN) = \log m$. By the definition of $S_{\rightarrow, \epsilon}^{(1)}(\cN)$, there exists an $A$-to-$B$ non-signalling bipartite superchannel $\Theta$ such that $\frac{1}{2}\|\Theta(\Upsilon_m) - \cN\|_{\diamond}\leq \epsilon$, implying
    \be\label{eq:ABsmooth_low_up_bound_1}
        \mathfrak{D}_{\max}^{\rightarrow,\epsilon}(\cN) = \min_{\cM: \frac{1}{2}\|\cM-\cN\|_{\diamond}\leq \epsilon}\mathfrak{D}_{\max}^{\rightarrow}(\cM) \leq \mathfrak{D}_{\max}^{\rightarrow}(\Theta(\Upsilon_m)).
    \ee
    Let $\cE'\in\CPTP(\bA_0\bB_0,\bA_1\bB_1)$ be a bipartite quantum channel with the \Choi operator
    \be
    J^{\cE'}_{\bA_0\bA_1\bB_0\bB_1} \coloneqq \frac{1}{m} C_{\bA_0\bA_1\bB_0\bB_1} + \frac{1}{m(m-1)} (\idop - C)_{\bA_0\bB_1} \ox (\idop - C)_{\bB_0\bA_1}.
    \ee
    One can see that $\cE'$ is $A$-to-$B$ non-signalling by checking $J^{\cE'}_{\bA_0\bB_0\bB_1} = \frac{1}{m}\idop_{\bA_0\bB_0\bB_1}$.
    For that $\Theta$ is an $A$-to-$B$ non-signalling bipartite superchannel, we know that $\Theta(\cE')$ is also $A$-to-$B$ non-signalling (this is implied by the proof of Theorem~\ref{thm:NSSC_is_CNSP}), and thus
    \be\label{eq:ABsmooth_low_up_bound_2}
    \begin{aligned}
        \mathfrak{D}_{\max}^{\rightarrow}(\Theta(\Upsilon_m)) &= \min_{\cE\in \NSO^{\rightarrow}} D_{\max}(\Theta(\Upsilon_m)\|\cE)\\
        &\leq D_{\max}(\Theta(\Upsilon_m)\|\Theta(\cE'))\\
        &\leqt{(i)} D_{\max}(\Upsilon_m \| \cE')\\
        &= D_{\max}(C_{\bA_0\bA_1\bB_0\bB_1} \| J^{\cE'})\\
        &\leq \log m,
    \end{aligned}
    \ee
    where in (i) we used the data processing inequality of the channel max-relative entropy. Combining Eqs.~\eqref{eq:ABsmooth_low_up_bound_1} and~\eqref{eq:ABsmooth_low_up_bound_2}, we conclude
    \be
        \mathfrak{D}_{\max}^{\rightarrow,\epsilon}(\cN) \leq \log m = S_{\rightarrow, \epsilon}^{(1)}(\cN).
    \ee
    Then we prove $S_{\rightarrow, \epsilon}^{(1)}(\cN) \leq \mathfrak{D}_{\max}^{\rightarrow,\epsilon}(\cN) + 1$. 
    Suppose $\cN'$ is the bipartite quantum channel reaching minimum for $\mathfrak{D}_{\max}^{\rightarrow,\epsilon}(\cN)$ and $\|\cN'-\cN\|_\diamond \leq 2\epsilon$. Let $\mathfrak{D}_{\max}^{\rightarrow}(\cN') = \log \lambda$ such that $J^{\cN'}\leq \lambda \cdot J^{\Bar{\cN'}},~\Bar{\cN'}\in\NSO^{\rightarrow}(A_0B_0,A_1B_1)$. Let $m= \lceil \lambda \rceil, d_{\bA_0} = d_{\bA_1} = d_{\bB_0} = d_{\bB_1} = m,~C_{\bA_0\bB_1} = \sum_{k=1}^m\ketbra{kk}{kk}_{\bA_0\bB_1}$ and all systems $\bA_0,\bB_0,\bA_1,\bB_1$ are isomorphic. Consider a superchannel $\Theta_{(\bA_0\bB_0\rightarrow \bA_1\bB_1)\rightarrow (A_0B_0\rightarrow A_1B_1)}$ such that for any input channel $\cM\in\CPTP(\bA_0\bB_0,\bA_1\bB_1)$,
    \be
    \begin{aligned}
    \Theta(\cM) =\;& \frac{1}{m^2}\tr\big((C_{\bA_0\bB_1}\ox\idop_{\bA_1\bB_0}) J^{\cM}_{\bA_0\bA_1\bB_0\bB_1}\big)\cN' \\
    &+ \Big[1 - \frac{1}{m^2}\tr\big((C_{\bA_0\bB_1}\ox\idop_{\bA_1\bB_0}) J^{\cM}_{\bA_0\bA_1\bB_0\bB_1}\big)\Big]\cdot \frac{m}{m-1}\big(\Bar{\cN'}-\frac{1}{m}\cN'\big).
    \end{aligned}
    \ee
    In the following, we will see that $\Theta$ is an $A$-to-$B$ non-signaling bipartite superchannel. 
    The \Choi operator of $\Theta$ is given by
    \be
    \begin{aligned}
    J^\Theta =\;& \frac{1}{m^2} C_{\bA_0\bB_1}\ox\idop_{\bA_1\bB_0}\ox J^{\cN'}_{A_0A_1B_0B_1} \\
    &+ \frac{1}{m^2}\Big[\idop_{\bA_0\bB_1\bA_1\bB_0} -C_{\bA_0\bB_1}\ox\idop_{\bA_1\bB_0}\Big]\ox \frac{m}{m-1}\big(J^{\Bar{\cN'}}_{A_0A_1B_0B_1}-\frac{1}{m}J^{\cN'}_{A_0A_1B_0B_1}\big).
    \end{aligned}
    \ee
    Note that $J^{\Bar{\cN'}}-\frac{1}{m}J^{\cN'}\geq 0$ as $m\geq \lambda$ and $0\leq C_{\bA_0\bB_1}\ox\idop_{\bA_1\bB_0}\leq \idop_{\bA_0\bA_1\bB_0\bB_1}$. It follows that $J^\Theta\geq 0$. Meanwhile, we have
    \begin{equation}
    \begin{aligned}
    J_{A_0\bA_0 \bA_1 B_0\bB_0 \bB_1}^{\Theta} &= \frac{1}{m^2}C_{\bA_0\bB_1}\ox\idop_{\bA_1\bB_0A_0B_0} + \frac{1}{m^2}\Big[(\idop_{\bA_0\bB_1} -C_{\bA_0\bB_1})\ox\idop_{\bA_1\bB_0}\Big]\ox \idop_{A_0B_0}\\
    &= \frac{1}{m^2}\idop_{\bA_0\bB_1 \bA_1\bB_0 A_0B_0},
    \end{aligned}
    \end{equation}
    which indicates that $\Theta$ is a valid superchannel.
    We can also check that
    \begin{equation}\label{Eq:Choi_P_theta1}
    \begin{aligned}
    J_{A_0\bA_0 \bA_1 \mathbf{B}}^{\Theta} &= \frac{1}{m^2}C_{\bA_0\bB_1}\ox\idop_{\bA_1\bB_0}\ox J^{\cN'}_{A_0B_0B_1}
    + \frac{1}{m^2}\Big[(\idop_{\bA_0\bB_1} -C_{\bA_0\bB_1})\ox\idop_{\bA_1\bB_0}\Big]\ox \frac{m}{m-1}\big(J^{\Bar{\cN'}}_{A_0B_0B_1}-\frac{1}{m}J^{\cN'}_{A_0B_0B_1}\big)\\
    &= \pi_{\bA_1}\ox J_{A_0\bA_0 \mathbf{B}}^{\Theta}
    \end{aligned}
    \end{equation}
    and
    \begin{equation}\label{Eq:Choi_P_theta2}
    \begin{aligned}
    J_{A_0 \bA_1 \mathbf{B}}^{\Theta} &= \frac{1}{m^2}\idop_{\bB_1\bA_1\bB_0}\ox J^{\cN'}_{A_0B_0B_1}
    + \frac{1}{m^2}\Big[(m-1)\idop_{\bB_1\bA_1\bB_0}\Big]\ox \frac{m}{m-1}\big(J^{\Bar{\cN'}}_{A_0B_0B_1}-\frac{1}{m}J^{\cN'}_{A_0B_0B_1}\big)\\
    &=\frac{1}{m^2}\idop_{\bB_1\bA_1\bB_0}\ox J^{\cN'}_{A_0B_0B_1}
    +\frac{1}{m}\idop_{\bB_1\bA_1\bB_0}\ox J^{\Bar{\cN'}}_{A_0B_0B_1} - \frac{1}{m^2}\idop_{\bB_1\bA_1\bB_0}\ox J^{\cN'}_{A_0B_0B_1}\\
    &= \frac{1}{m}\idop_{\bB_1\bA_1\bB_0}\ox J^{\Bar{\cN'}}_{A_0B_0B_1}\\
    &= \frac{1}{m}\idop_{\bB_1\bA_1\bB_0}\ox \pi_{A_0} \ox J^{\Bar{\cN'}}_{B_0B_1},
    \end{aligned}
    \end{equation}
    where we used the fact that $\Bar{\cN'}$ is an $A$-to-$B$ non-signalling channel in the last equality. 
    Eqs.~\eqref{Eq:Choi_P_theta1} and~\eqref{Eq:Choi_P_theta2} are the conditions of $\Theta$ being an $A$-to-$B$ non-signalling bipartite superchannel as shown in Eq.~\eqref{Eq:Theta_choi_NS}.
    Therefore, $\Theta$ is an $A$-to-$B$ non-signalling bipartite superchannel.
    Finally, noticing that $\|\Theta(\Upsilon_m) - \cN\|_{\diamond} = \|\cN' - \cN\|_{\diamond} \leq 2\epsilon$ and $\Theta$ is a feasible simulation protocol, we have
    \begin{equation}
        S_{\rightarrow, \epsilon}^{(1)}(\cN) \leq \log m \leq \log(1+ \lambda) \leq 1+\log\lambda = \mathfrak{D}_{\max}^{\rightarrow,\epsilon}(\cN) + 1.
    \end{equation}
    Hence, we arrive at
    \begin{equation}
        \mathfrak{D}_{\max}^{\rightarrow,\epsilon}(\cN) \leq S_{\rightarrow, \epsilon}^{(1)}(\cN)\leq \mathfrak{D}_{\max}^{\rightarrow,\epsilon}(\cN) + 1.
    \end{equation}

    Now we prove the case for the one-shot $\epsilon$-error cost via $B$-to-$A$ non-signalling bipartite superchannels to arrive at $\mathfrak{D}_{\max}^{\leftarrow,\epsilon}(\cN) \leq S_{\leftarrow, \epsilon}^{(1)}(\cN)\leq \mathfrak{D}_{\max}^{\leftarrow,\epsilon}(\cN) + 1$. The proof of the converse bound remains the same, and the constructed superchannel $\Theta$ for the upper bound is given by 
    \be
    \begin{aligned}
    \Theta(\cM) =\;& \frac{1}{m^2}\tr\big((C_{\bB_0\bA_1}\ox\idop_{\bA_0\bB_1}) J^{\cM}_{\bA_0\bA_1\bB_0\bB_1}\big)\cN' \\
    &+ \Big[1 - \frac{1}{m^2}\tr\big((C_{\bB_0\bA_1}\ox\idop_{\bA_0\bB_1}) J^{\cM}_{\bA_0\bA_1\bB_0\bB_1}\big)\Big]\cdot \frac{m}{m-1}\big(\Bar{\cN'}-\frac{1}{m}\cN'\big).
    \end{aligned}
    \ee
    Followed by a similar derivation, we complete the proof.
\end{proof}

\begin{theorem}[One-shot $\epsilon$-error bidirectional classical communication cost]\label{thm:smooth_low_bound}
For a given bipartite quantum channel $\cN_{A_0B_0\rightarrow A_1B_1}$, its one-shot $\epsilon$-error bidirectional classical communication cost via non-signalling bipartite superchannels satisfies
\begin{equation}
    S_{\leftrightarrow, \epsilon}^{(1)}(\cN) \geq \mathfrak{D}_{\max}^{\leftrightarrow,\epsilon}(\cN). 
\end{equation}
\end{theorem}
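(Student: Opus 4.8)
The plan is to reduce the claim to the monotonicity of $\mathfrak{D}_{\max}^{\leftrightarrow}$ (Lemma~\ref{lem:monoto}) under non-signalling bipartite superchannels, combined with the observation that the bidirectional noiseless channel $\Upsilon_m$ has max-relative entropy of bidirectional classical communication exactly $\log m$. First I would establish this latter fact: starting from the SDP~\eqref{SDP:dmaxNS} for $\mathfrak{D}_{\max}^{\leftrightarrow}(\Upsilon_m)$, one checks that $Y = \idop_{A_0A_1B_0B_1}$ (i.e.\ $\lambda = m$, using that $d_{A_0}=d_{A_1}=d_{B_0}=d_{B_1}=m$) is feasible since the Choi operator $C_{A_0B_0A_1B_1}$ is a sum of orthogonal rank-one projectors and hence $\leq \idop$, and the two non-signalling marginal constraints on $Y$ are trivially satisfied by the identity. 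For the matching lower bound $\mathfrak{D}_{\max}^{\leftrightarrow}(\Upsilon_m)\geq \log m$, I would exhibit a feasible point of the dual SDP, e.g.\ take $M$ proportional to a single term $\ketbra{00}{00}_{A_0B_1}\ox\ketbra{00}{00}_{B_0A_1}$ with $N=\pi_{A_0B_0}$ and $P=Q=0$, normalized so that $M \leq N\ox\idop_{A_1B_1}$; this yields objective value $\log m$. So $\mathfrak{D}_{\max}^{\leftrightarrow}(\Upsilon_m)=\log m$.

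Next, suppose $m$ is such that $e_{\mathrm{NS}}(\Upsilon_m,\cN)\leq \epsilon$, witnessed by a non-signalling bipartite superchannel $\Theta$ with $\frac12\|\Theta(\Upsilon_m)-\cN\|_\diamond \leq \epsilon$. Set $\cM \coloneqq \Theta(\Upsilon_m)$. By Theorem~\ref{thm:NSSC_is_CNSP}, $\Theta$ is CNSP, hence non-signalling-preserving, so Lemma~\ref{lem:monoto} gives $\mathfrak{D}_{\max}^{\leftrightarrow}(\cM) = \mathfrak{D}_{\max}^{\leftrightarrow}(\Theta(\Upsilon_m)) \leq \mathfrak{D}_{\max}^{\leftrightarrow}(\Upsilon_m) = \log m$. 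Since $\cM$ is a bipartite channel with $\frac12\|\cM-\cN\|_\diamond\leq\epsilon$, the definition of the smooth quantity gives $\mathfrak{D}_{\max}^{\leftrightarrow,\epsilon}(\cN) \leq \mathfrak{D}_{\max}^{\leftrightarrow}(\cM) \leq \log m$. Taking the infimum over all admissible $m$ yields $\mathfrak{D}_{\max}^{\leftrightarrow,\epsilon}(\cN) \leq S_{\leftrightarrow,\epsilon}^{(1)}(\cN)$, which is the claim.

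One subtlety I would double-check is that Lemma~\ref{lem:monoto} as stated concerns the \emph{generalized} divergence $\mathfrak{D}^{\leftrightarrow}$, so I need its $D_{\max}$-specialization: the argument there applies verbatim once one notes $D_{\max}$ is a generalized divergence obeying data processing, giving monotonicity of $\mathfrak{D}_{\max}^{\leftrightarrow}$ under non-signalling-preserving superchannels (the inequality (i) uses monotonicity of the channel $D_{\max}$-divergence under superchannels, and (ii) uses that $\Theta$ maps $\NSO$ into $\NSO$). The only mild obstacle is the bookkeeping in computing $\mathfrak{D}_{\max}^{\leftrightarrow}(\Upsilon_m)$: one must be careful that $\Upsilon_m$ is genuinely bidirectional (Alice's output $A_1$ carries Bob's message and vice versa), so the relevant systems all have dimension $m$ and the identity operator is the right feasible $Y$; but this is a direct verification against~\eqref{SDP:dmaxNS} rather than a genuine difficulty. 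Everything else is immediate from the results already established in the excerpt.
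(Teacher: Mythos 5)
Your overall strategy is the paper's: exhibit a non-signalling superchannel $\Theta$ achieving the simulation, bound $\mathfrak{D}_{\max}^{\leftrightarrow}(\Theta(\Upsilon_m))$ by $\log m$ via monotonicity/data processing, and smooth. However, there is a genuine gap in the one step you dismiss as bookkeeping: the computation of $\mathfrak{D}_{\max}^{\leftrightarrow}(\Upsilon_m)$. Taking $Y=\idop_{A_0A_1B_0B_1}$ in the primal SDP~\eqref{SDP:dmaxNS} gives $Y_{A_0B_0}=\tr_{A_1B_1}[\idop]=d_{A_1}d_{B_1}\,\idop_{A_0B_0}=m^2\,\idop_{A_0B_0}$, so $\lambda=m^2$, not $m$. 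This only certifies $\mathfrak{D}_{\max}^{\leftrightarrow}(\Upsilon_m)\leq 2\log m$, and your argument then yields the strictly weaker bound $S_{\leftrightarrow,\epsilon}^{(1)}(\cN)\geq\tfrac{1}{2}\mathfrak{D}_{\max}^{\leftrightarrow,\epsilon}(\cN)$. The feasible points of~\eqref{SDP:dmaxNS} are exactly operators of the form $\lambda J^{\cE}$ with $\cE\in\NSO$, and $Y=\idop$ corresponds to the completely depolarizing channel, which is not the optimal choice. The paper instead uses the non-signalling channel $\cE'$ with Choi operator $\tfrac{1}{m}C_{\bA_0\bA_1\bB_0\bB_1}+\tfrac{1}{m(m-1)}(\idop-C)_{\bA_0\bB_1}\ox(\idop-C)_{\bB_0\bA_1}$; one checks $m J^{\cE'}\geq C$ and $\tr_{\bA_1\bB_1}[mJ^{\cE'}]=m\,\idop$, which is precisely what is needed to get $D_{\max}(\Upsilon_m\|\cE')\leq\log m$ and hence the stated bound. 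Your proof needs this (or an equivalent) nontrivial construction.

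A secondary issue: your dual feasible point for the claimed matching lower bound $\mathfrak{D}_{\max}^{\leftrightarrow}(\Upsilon_m)\geq\log m$ does not work either. With $M=\tfrac{1}{m^2}\ketbra{00}{00}_{A_0B_1}\ox\ketbra{00}{00}_{B_0A_1}$, $N=\pi_{A_0B_0}$, $P=Q=0$, the objective is $\tr[MC]=\tfrac{1}{m^2}$, giving $-2\log m$; indeed with $P=Q=0$ one can never exceed objective value $1$. Fortunately the lower bound on $\mathfrak{D}_{\max}^{\leftrightarrow}(\Upsilon_m)$ is not needed for the theorem, so this does not affect the logic once the primal step is repaired.
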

\begin{proof}
The proof is similar to that in Theorem~\ref{thm:smooth_AtoB}.
\end{proof}

\vspace{2mm}
How to obtain an upper bound on the $\epsilon$-error simulation cost via non-signalling bipartite superchannels is still open, as we may need other techniques to arrive at an achievability result. Theorem~\ref{thm:smooth_low_bound} also provides a lower bound on entanglement-assisted one-shot exact bidirectional simulation cost when the communicating parties, Alice and Bob, are assisted by unlimited entanglement, which they can share not only between each other but also across the timeline of their own simulation protocol. When extended to multipartite channels, such a setting could serve as a general framework for network communication assisted by quantum entanglement. Let $\epsilon= 0$, we have that $S_{\leftrightarrow,0}^{(1)}(\cN)\geq \mathfrak{D}_{\max}^{\leftrightarrow}(\cN)$. The channel max-relative entropy is also called the log-robustness for point-to-point channels in literature~\cite{liu2019resource,Gour_2019a}. It characterizes the one-shot simulation cost of channels in many cases, including the resource theory of coherence with MIO as free channels~\cite{D_az_2018} and the resource theory of thermal nonequilibrium with Gibbs-preserving maps as free channels~\cite{Faist2018,Faist_2019}. To further characterize the one-shot bidirectional classical communication, we derive the SDP formulation of the one-shot minimum simulation error via non-signalling bipartite superchannels.

\begin{proposition}[Minimum simulation error]\label{prop:min_err_SDP}
For a given bipartite quantum channel $\cN_{A_0B_0\rightarrow A_1B_1}$, the one-shot minimum error of simulation from the bidirectional classical noiseless channel $\Upsilon_m:\bA_0\bB_0\rightarrow \bA_1\bB_1$ to $\cN_{A_0B_0\rightarrow A_1B_1}$ via non-signalling bipartite superchannels is given by the following SDP:
\be
\begin{aligned}
    e^{\leftrightarrow}(\Upsilon_m, \cN) = \min & \;\; \mu \\
    {\rm s.t.} &\; \tr_{A_1B_1} Y_{A_0A_1B_0B_1} \leq \mu \mathds{1}_{A_0B_0},\, Y_{A_0A_1B_0B_1}\geq 0,\\
    &\;\; Y_{A_0A_1B_0B_1} \geq Q_{A_0A_1B_0B_1} - J^{\cN},\, Q_{A_0A_1B_0B_1} \geq 0,\\
    &\;\; Q_{A_0B_0} = \mathds{1}_{A_0B_0},\, V_{A_0A_1B_0B_1}\geq Q_{A_0A_1B_0B_1},\\
    &\;\; W_{A_0A_1B_0B_1}\geq Q_{A_0A_1B_0B_1}, \, X_{A_0A_1B_0B_1}\geq Q_{A_0A_1B_0B_1},\\
    &\;\; X_{A_0B_0B_1} = \pi_{A_0} \ox X_{B_0B_1},\, X_{A_0A_1B_0} = \pi_{B_0} \ox X_{A_0A_1},\\
    &\;\; mQ_{A_0B_0B_1} = W_{A_0B_0B_1}, \; X_{A_0B_0B_1} = V_{A_0B_0B_1},\\
    &\;\; mQ_{A_0A_1B_0} = V_{A_0A_1B_0}, \; X_{A_0A_1B_0}=W_{A_0A_1B_0}.
\end{aligned}
\ee
\end{proposition}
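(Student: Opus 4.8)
The plan is to express $e_{\mathrm{NS}}(\Upsilon_m,\cN)$ as a semidefinite program over the Choi operator of the simulating non-signalling superchannel, and then to eliminate that operator in favour of $Q,V,W,X$.

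\emph{Reduction to an SDP over superchannels.} By definition $e_{\mathrm{NS}}(\Upsilon_m,\cN)=\tfrac12\inf_{\Theta}\|\Theta(\Upsilon_m)-\cN\|_{\diamond}$ over non-signalling bipartite superchannels $\Theta$; since the set of such $\Theta$ --- equivalently, of Choi operators $J^{\Theta}\ge0$ obeying the comb-normalization conditions together with \eqref{Eq:Theta_choi_NS} --- is compact while the diamond norm is continuous, the infimum is attained. I would then invoke the standard primal SDP for the diamond distance of two channels with common input and output: for $\cM_0,\cM_1\in\CPTP(A_0B_0,A_1B_1)$,
\[
\tfrac12\|\cM_0-\cM_1\|_{\diamond}=\min\{\mu\mid Y\ge0,\ Y\ge J^{\cM_0}-J^{\cM_1},\ \tr_{A_1B_1}Y\le\mu\idop_{A_0B_0}\}.
\]
Taking $\cM_0=\Theta(\Upsilon_m)$ with Choi operator $Q:=J^{\Theta(\Upsilon_m)}$ and $\cM_1=\cN$, then minimizing over $\Theta$, gives
\[
e_{\mathrm{NS}}(\Upsilon_m,\cN)=\min\{\mu\mid Q\in\mathcal S_m,\ Y\ge0,\ Y\ge Q-J^{\cN},\ \tr_{A_1B_1}Y\le\mu\idop_{A_0B_0}\},
\]
where $\mathcal S_m:=\{\,J^{\Theta(\Upsilon_m)}:\Theta\text{ a non-signalling bipartite superchannel}\,\}$. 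The first two lines of the claimed SDP are exactly these $Y$-constraints together with $Q\ge0$ and $Q_{A_0B_0}=\idop_{A_0B_0}$ (which hold for every channel Choi operator), so the proof reduces to showing that $\mathcal S_m$ equals the set of channel Choi operators $Q$ for which auxiliary operators $V,W,X$ meeting the remaining constraints exist.

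\emph{Showing $\mathcal S_m$ is contained in that set.} Given $\Theta$, write the Choi operator of $\Upsilon_m$ as $C=C^{(1)}_{\bA_0\bB_1}\otimes C^{(2)}_{\bB_0\bA_1}$ with $C^{(1)}=\sum_k\ketbra{kk}{kk}$, $C^{(2)}=\sum_l\ketbra{ll}{ll}$ rank-$m$ projectors, and extend $\Theta$ to a completely positive map $\widehat\Theta$ on operators via the link product with $J^{\Theta}\ge0$, so $\widehat\Theta(C)=Q$. I would set $W:=\widehat\Theta(C^{(1)}_{\bA_0\bB_1}\otimes\idop_{\bB_0\bA_1})$, $V:=\widehat\Theta(\idop_{\bA_0\bB_1}\otimes C^{(2)}_{\bB_0\bA_1})$ and $X:=\tfrac1m\widehat\Theta(\idop_{\bA_0\bB_0\bA_1\bB_1})$ --- up to scaling, $\Theta$ applied to $\Upsilon_m$ with its $\mathbf B\!\to\!\mathbf A$ wire cut, its $\mathbf A\!\to\!\mathbf B$ wire cut, and both cut. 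Then $W,V\ge Q$ follows from $C^{(1)}_{\bA_0\bB_1}\otimes\idop_{\bB_0\bA_1}\ge C$, $\idop_{\bA_0\bB_1}\otimes C^{(2)}_{\bB_0\bA_1}\ge C$ and positivity of $\widehat\Theta$; $X\ge Q$ needs in addition the causal and non-signalling clauses of \eqref{Eq:Theta_choi_NS} (which restrict how $J^{\Theta}$ may depend on the internal systems $\bA_1,\bB_1$); the doubly-cut channel is bidirectionally non-signalling, so Theorem~\ref{thm:NSSC_is_CNSP} makes $\Theta$ of it non-signalling and hence $X$ satisfies $X_{A_0B_0B_1}=\pi_{A_0}\otimes X_{B_0B_1}$ and $X_{A_0A_1B_0}=\pi_{B_0}\otimes X_{A_0A_1}$; and the four linking equalities follow from the causal clauses of \eqref{Eq:Theta_choi_NS}: since $\bA_1$ may signal only $A_1$ one has $\tr_{A_1}J^{\Theta}=\pi_{\bA_1}\otimes(\cdots)$, so $\tr_{A_1}\circ\widehat\Theta$ depends on its argument only through the $\bA_1$-marginal of the input Choi operator, and symmetrically with $A_1\leftrightarrow B_1$, $\bA_1\leftrightarrow\bB_1$; comparing those marginals for $C$ and for the wire-cut operators yields $mQ_{A_0B_0B_1}=W_{A_0B_0B_1}$, $mQ_{A_0A_1B_0}=V_{A_0A_1B_0}$, $X_{A_0B_0B_1}=V_{A_0B_0B_1}$ and $X_{A_0A_1B_0}=W_{A_0A_1B_0}$.

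\emph{The converse, which is the main obstacle.} From $Q,V,W,X$ obeying the constraints I must build a non-signalling bipartite superchannel $\Theta$ with $J^{\Theta(\Upsilon_m)}=Q$. The enabling observation is that $C^{(1)},C^{(2)}$ are diagonal in the computational basis, so $\Theta(\Upsilon_m)$ is unchanged if $J^{\Theta}$ is dephased on the four systems $\bA_0,\bB_0,\bA_1,\bB_1$ (the pre-processing output into $\Upsilon_m$ is measured there, and $\Upsilon_m$ re-prepares its output classically); hence it suffices to construct a $J^{\Theta}$ block-diagonal in that basis, i.e.\ a family of positive operators $T^{(k,k',l,l')}$ on $A_0A_1B_0B_1$ with $\sum_{k,l}T^{(k,k,l,l)}=Q$. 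I would take the ``doubly-diagonal'' blocks proportional to $Q$, the blocks off-diagonal in only the $\mathbf A$-wire (resp.\ the $\mathbf B$-wire) built from $W$ (resp.\ $V$), and the fully off-diagonal blocks from $X$, with $m$-dependent normalizations; positivity of every block should then follow from $Q\le W,V,X$, the comb-normalization conditions from $Q_{A_0B_0}=\idop_{A_0B_0}$ and the linking equalities, and the non-signalling Choi conditions \eqref{Eq:Theta_choi_NS} from the bi-non-signalling property of $X$ together with, again, the linking equalities. Verifying that this single assembled $J^{\Theta}$ satisfies \emph{all} of the comb and non-signalling constraints simultaneously is the technical heart of the argument. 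Combining the two inclusions with the reduction above yields the stated SDP.
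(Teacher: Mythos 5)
Your opening reduction (the Watrous SDP for the diamond distance, then optimizing over the Choi operator $J^{\Theta}$ subject to the CP/TP/causality/non-signalling constraints) is exactly how the paper begins. Where you diverge is in how the superchannel variable is eliminated. The paper's key move is a symmetrization: feasibility is invariant under conjugation by $\tau_{\bA_0}\ox\tau_{\bB_1}\ox\tau'_{\bB_0}\ox\tau'_{\bA_1}$ for $\tau,\tau'\in S_m$ and under convex combination, so one may twirl and apply Schur's lemma to write $J^{\Theta}$ as a combination of the four projectors $C,\idop-C$ on the two pairs with coefficients $E^{(1)},\dots,E^{(4)}$; the variables $Q,V,W,X$ are then affine bijective images of the $E^{(i)}$, so \emph{both} inclusions between the two feasible sets follow at once. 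You replace this with two separate ad hoc constructions, and both have problems.

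In your forward direction the choice $X:=\tfrac1m\widehat\Theta(\idop_{\bA_0\bB_0\bA_1\bB_1})$ does not satisfy $X\geq Q$. Concretely, take $\Theta$ to be the identity superchannel (which is a non-signalling bipartite superchannel): then $Q=C_{A_0B_1}\ox C_{B_0A_1}$ has eigenvalue $1$ while your $X=\tfrac1m\idop_{A_0A_1B_0B_1}$, so $X\not\geq Q$ for $m\geq 2$. This is not fixable by a different wire-cutting: in terms of the Schur components the correct operator is $X=m^2E^{(1)}+m^2(m-1)E^{(4)}$, and no link product of $J^{\Theta}$ with a product test operator on $(\bA_0\bB_1)\ox(\bB_0\bA_1)$ produces the coefficient pattern $(m^2,0,0,m^2(m-1))$; $X$ is genuinely a cross term of the isotypic decomposition, which is why the twirling step is doing real work rather than being a convenience. (Your $V$ and $W$ do agree with the paper's, since those \emph{are} link products with $C^{(2)}_{\bB_0\bA_1}\ox\idop$ and $C^{(1)}_{\bA_0\bB_1}\ox\idop$ respectively.) In the converse direction you correctly identify that one must reassemble a valid $J^{\Theta}$ from $Q,V,W,X$, but the block-diagonal construction is only described ("with $m$-dependent normalizations") and you yourself defer the verification of positivity and of all comb/non-signalling constraints, which is precisely the content that needs to be proved; the paper gets this for free because the twirled ansatz with $E^{(1)}=Q/m^2$, $E^{(2)}=(V-Q)/(m^2(m-1))$, $E^{(3)}=(W-Q)/(m^2(m-1))$, $E^{(4)}=(X-Q)/(m^2(m-1))$ manifestly satisfies $J^{\Theta}\geq0$ and turns each SDP constraint back into the corresponding constraint of Eq.~\eqref{Eq:condition}. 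As it stands, the proposal establishes neither inclusion.
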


\begin{proof}
For any two quantum channels $\cN_1, \cN_2$ from system $A$ to system $B$, their difference in terms of the diamond norm can be expressed as an SDP as follows~\cite{watrous2009semidefinite}:
\be
    \frac{1}{2}\big\|\cN_1 - \cN_2\big\|_{\Diamond} = \min \big\{\mu \,|\, \tr_B Y \leq \mu \mathds{1}_A,\, Y\geq 0,\, Y\geq J_{\cN_1} - J_{\cN_2} \big\},
\ee
where $J_{\cN_1}$ and $J_{\cN_2}$ are the \Choi operators of $\cN_1$ and $\cN_2$, respectively. It follows that the minimum error $e^{\leftrightarrow}(\Upsilon_m, \cN)$ can be expressed as the following SDP:
\begin{subequations}
\begin{align}
    e^{\leftrightarrow}(\Upsilon_m, \cN) = \min & \;\; \mu\\
    {\rm s.t.} &\; \tr_{A_1B_1} Y_{A_0A_1B_0B_1} \leq \mu \mathds{1}_{A_0B_0},\\
    &\;\; Y_{A_0A_1B_0B_1}\geq 0,\, Y_{A_0A_1B_0B_1} \geq J^{\widetilde{\cN}} - J^{\cN},\\
    &\;\; J^{\widetilde{\cN}} = \tr_{\bA_0\bA_1\bB_0\bB_1}J^{\Theta}_{\mathbf{A}\mathbf{B}}(C_{\bA_0\bB_0\bA_1\bB_1}\ox \mathds{1}_{A_0A_1B_0B_1}) \label{Eq:sdp_JNout}\\
    &\;\; J^{\Theta}_{\mathbf{A}\mathbf{B}} \text{ satisfies Eq.}~\eqref{Eq:condition},
\end{align}
\end{subequations}
where $J^{\Theta}_{\mathbf{A}\mathbf{B}}$ is the \Choi operator of $\Theta$. 
For any $m \times m$ permutations $\tau$ and $\tau'$ in the symmetric group $S_m$, if $J^{\Theta}_{\mathbf{A}\mathbf{B}}$ is feasible, then
\be
    J^{'}_{\mathbf{A}\mathbf{B}} =\left(\tau_{\bA_0} \ox \tau_{\bB_1} \ox \tau'_{\bB_0} \ox \tau'_{\bA_1} \ox \mathds{1}_{A_0A_1B_0B_1}\right) J^{\Theta}_{\mathbf{A}\mathbf{B}} \left(\tau_{\bA_0} \ox \tau_{\bB_1} \ox \tau'_{\bB_0} \ox \tau'_{\bA_1}\ox \mathds{1}_{A_0A_1B_0B_1}\right)^{\dagger}
\ee
is also feasible. Furthermore, if $J'_{\mathbf{A}\mathbf{B}}$ and $J''_{\mathbf{A}\mathbf{B}}$ are feasible, so is any convex combination $\lambda J^{'}_{\mathbf{A}\mathbf{B}} + (1-\lambda) J^{''}_{\mathbf{A}\mathbf{B}}$ for $0 \leq \lambda \leq 1$. With these facts, we can always construct a new feasible $\widetilde{J}^{\Theta}_{\mathbf{A}\mathbf{B}}$ by performing the following twirling operation on a feasible solution $J^{\Theta}_{\mathbf{A}\mathbf{B}}$
\begin{equation*}
\widetilde{J}^{\Theta}_{\mathbf{A}\mathbf{B}} \coloneqq \frac{1}{(m!)^2} \sum_{\tau,\tau' \in S_m}\left(\tau_{\bA_0} \ox \tau_{\bB_1} \ox \tau'_{\bB_0} \ox \tau'_{\bA_1} \ox \mathds{1}_{A_0B_0A_1B_1}\right) J^{\Theta}_{\mathbf{A}\mathbf{B}} \left(\tau_{\bA_0} \ox \tau_{\bB_1} \ox \tau'_{\bB_0} \ox \tau'_{\bA_1}\ox \mathds{1}_{A_0B_0A_1B_1}\right)^{\dagger}.
\end{equation*}
Note that the symmetry group $G\coloneqq\left\{\tau_{\bA_0} \ox \tau_{\bB_1}\ox \tau'_{\bB_0}\ox\tau'_{\bA_1}: \tau, \tau' \in S_m\right\}$ acts independently on two pairs of systems $(\bA_0, \bB_1)$ and $(\bA_1,\bB_0)$. For each pair, the action of the symmetric group $S_m$ decomposes the space into two irreducible subspaces, i.e., the symmetric subspace with a projector $C_{\bA_0\bB_1}$ and the antisymmetric subspace with a projector $(\idop - C)_{\bA_0\bB_1}$. Here $C_{\bA_0\bB_1}= \sum_{k=1}^m\ketbra{kk}{kk}_{\bA_0\bB_1}$ is the \Choi matrix of the noiseless classical channel from system $\bA_0$ to system $\bB_1$, so is $C_{\bB_0\bA_1}$. Since the group $G$ acts independently in two pairs of systems, we can consider combinations of symmetric and antisymmetric subspaces for each pair. By applying Schur's Lemma to the symmetry group $G$, we have that $\widetilde{J}^{\Theta}_{\mathbf{A}\mathbf{B}}$ can be chosen as
\be\label{Eq:new_Ztilde}
\begin{aligned}
    \widetilde{J}^{\Theta}_{\mathbf{A}\mathbf{B}} &= C_{\bA_0\bB_1} \ox C_{\bB_0 \bA_1} \ox E^{(1)}_{A_0A_1B_0B_1} + (\mathds{1}-C)_{\bA_0\bB_1} \ox C_{\bB_0 \bA_1} \ox E^{(2)}_{A_0A_1B_0B_1} \\
    &\qquad + C_{\bA_0\bB_1} \ox (\mathds{1}-C)_{\bB_0 \bA_1} \ox  E^{(3)}_{A_0A_1B_0B_1} + (\mathds{1}-C)_{\bA_0 \bB_1} \ox (\mathds{1} - C)_{\bB_0 \bA_1} \ox E^{(4)}_{A_0A_1B_0B_1}.
\end{aligned}
\ee
Taking this into condition Eq.~\eqref{Eq:sdp_JNout}, we can obtain
\be
    J^{\widetilde{\cN}} = m^2 E^{(1)}_{A_0A_1B_0B_1}.
\ee
By definition, the \Choi operator of $\Theta$ should satisfy
\begin{subequations}\label{Eq:condition}
\begin{align}
    &J^{\Theta}_{\mathbf{A}\mathbf{B}} \geq 0, \quad (\rm CP)\\
    & J^{\Theta}_{A_0\bA_1B_0\bB_1} = \mathds{1}_{A_0\bA_1B_0\bB_1}, \quad (\rm TP)\\
    &J^{\Theta}_{A_0\bA_1 \mathbf{B}} = \pi_{A_0} \ox J^{\Theta}_{\bA_1 \mathbf{B}},\quad J_{A_0 \bA_0\bA_1\mathbf{B}}^{\Theta} = \pi_{\bA_1}\ox J_{A_0 \bA_0\mathbf{B}}^{\Theta}, \quad (\mathbf{A}\nrightarrow \mathbf{B})\\
    &J^{\Theta}_{\mathbf{A} B_0 \bB_1} = \pi_{B_0} \ox J^{\Theta}_{\bB_1 \mathbf{A}},\quad J_{\mathbf{A} B_0 \bB_0\bB_1}^{\Theta} = \pi_{\bB_1}\ox J_{\mathbf{A} B_0 \bB_0}^{\Theta}. \quad (\mathbf{B}\nrightarrow \mathbf{A})
\end{align}
\end{subequations}
We now simplify the constraints in Eq.~\eqref{Eq:condition}.
In this process, we will frequently use the facts that the marginal operator of $C_{\bA_0\bB_1}$ (or $C_{\bB_0\bA_1}$) on one system is the identity operator $\mathds{1}$ on that system and the marginal operator of $(\mathds{1} - C)_{\bA_0\bB_1}$ (or $(\mathds{1} - C)_{\bB_0\bA_1}$) on one system is $(m-1)\mathds{1}$ on that system.

For the CP constraint, we have $E^{(1)}_{A_0A_1B_0B_1} \geq 0,\, E^{(2)}_{A_0A_1B_0B_1}\geq 0, \, E^{(3)}_{A_0A_1B_0B_1}\geq 0, \, E^{(4)}_{A_0A_1B_0B_1}\geq 0$.
For the TP constraint, we have $E^{(1)}_{A_0B_0} + (m-1)(E^{(2)}_{A_0B_0} + E^{(3)}_{A_0B_0}) + (m-1)^2 E^{(4)}_{A_0B_0} = \mathds{1}_{A_0B_0}$.

For the non-signalling condition $\mathbf{A}\nrightarrow \mathbf{B}$, notice that
\begin{equation}
\begin{aligned}
    \widetilde{J}_{A_0\bA_1\mathbf{B}}^{\Theta} &= \idop_{\bB_1} \ox C_{\bB_0 \bA_1} \ox \left(E^{(1)}_{A_0B_0B_1} + (m-1)E^{(2)}_{A_0B_0B_1}\right)\\
    &\qquad + \mathds{1}_{\bB_1} \ox (\mathds{1} - C)_{\bB_0 \bA_1} \ox \left(E^{(3)}_{A_0B_0B_1} + (m-1)E^{(4)}_{A_0B_0B_1}\right)
\end{aligned}
\end{equation}
and
\begin{equation}
\begin{aligned}
    \widetilde{J}_{\bA_1\mathbf{B}}^{\Theta} &= \idop_{\bB_1} \ox C_{\bB_0 \bA_1} \ox \left(E^{(1)}_{B_0B_1} + (m-1)E^{(2)}_{B_0B_1}\right)\\
    &\qquad + \mathds{1}_{\bB_1} \ox (\mathds{1} - C)_{\bB_0 \bA_1} \ox \left(E^{(3)}_{B_0B_1} + (m-1)E^{(4)}_{B_0B_1}\right).
\end{aligned}
\end{equation}
Therefore, to satisfy the constraint $J^{\Theta}_{A_0\bA_1 \mathbf{B}} = \pi_{A_0} \ox J^{\Theta}_{\bA_1 \mathbf{B}}$, it must hold that
\be\label{Eq:condAB1}
\begin{aligned}
    &E^{(1)}_{A_0B_0B_1} + (m-1) E^{(2)}_{A_0B_0B_1} = \pi_{A_0} \ox \left(E^{(1)}_{B_0B_1} + (m-1) E^{(2)}_{B_0B_1}\right),\\
    &E^{(3)}_{A_0B_0B_1} + (m-1) E^{(4)}_{A_0B_0B_1} = \pi_{A_0} \ox \left(E^{(3)}_{B_0B_1} + (m-1) E^{(4)}_{B_0B_1}\right).
\end{aligned}
\ee
Moreover, notice that
\be
\begin{aligned}
    \widetilde{J}_{A_0\bA_0\bA_1\mathbf{B}}^{\Theta} &= C_{\bA_0\bB_1} \ox C_{\bB_0 \bA_1} \ox E^{(1)}_{A_0B_0B_1} + (\mathds{1}-C)_{\bA_0\bB_1} \ox C_{\bB_0 \bA_1} \ox E^{(2)}_{A_0B_0B_1}\\
    &\qquad + C_{\bA_0\bB_1} \ox (\mathds{1}-C)_{\bB_0 \bA_1} \ox  E^{(3)}_{A_0B_0B_1} + (\mathds{1}-C)_{\bA_0 \bB_1} \ox (\mathds{1} - C)_{\bB_0 \bA_1} \ox E^{(4)}_{A_0B_0B_1}\\
    &= C_{\bA_0\bB_1} \ox \Big(\idop_{\bB_0 \bA_1} \ox  E^{(3)}_{A_0B_0B_1} + C_{\bB_0 \bA_1} \ox \big(E^{(1)}_{A_0B_0B_1} - E^{(3)}_{A_0B_0B_1}\big)\Big)\\
    &\qquad + (\mathds{1}-C)_{\bA_0\bB_1} \ox \left(\idop_{\bB_0 \bA_1} \ox  E^{(4)}_{A_0B_0B_1} + C_{\bB_0 \bA_1} \ox \big(E^{(2)}_{A_0B_0B_1} - E^{(4)}_{A_0B_0B_1}\big)\right)
\end{aligned}
\ee
and
\be
\begin{aligned}
    \widetilde{J}_{A_0\bA_0\mathbf{B}}^{\Theta} &= C_{\bA_0\bB_1} \ox \idop_{\bB_0} \ox \left(E^{(1)}_{A_0B_0B_1} + (m-1)E^{(3)}_{A_0B_0B_1}\right) \\
    &\qquad + (\mathds{1}-C)_{\bA_0 \bB_1} \ox \mathds{1}_{\bB_0} \ox \left(E^{(2)}_{A_0B_0B_1} + (m-1)E^{(4)}_{A_0B_0B_1}\right).
\end{aligned}
\ee
To satisfy the constraint $J_{A_0 \bA_0\bA_1\mathbf{B}}^{\Theta} = \pi_{\bA_1}\ox J_{A_0 \bA_0\mathbf{B}}^{\Theta}$, it must hold that 
\begin{equation}\label{Eq:condAB2}
    E^{(1)}_{A_0B_0B_1} = E^{(3)}_{A_0B_0B_1} \quad\text{and}\quad E^{(2)}_{A_0B_0B_1}=E^{(4)}_{A_0B_0B_1}.
\end{equation}
Combining Eqs.~\eqref{Eq:condAB1} and~\eqref{Eq:condAB2}, the $\mathbf{A}\nrightarrow \mathbf{B}$ constraint is equivalent to 
\be
\begin{aligned}
    &E^{(1)}_{A_0B_0B_1} =  E^{(3)}_{A_0B_0B_1},\quad E^{(2)}_{A_0B_0B_1} =  E^{(4)}_{A_0B_0B_1},\\
    &E^{(1)}_{A_0B_0B_1} + (m-1) E^{(4)}_{A_0B_0B_1} = \pi_{A_0} \ox \left(E^{(1)}_{B_0B_1} + (m-1) E^{(4)}_{B_0B_1}\right).
\end{aligned}
\ee

For the non-signalling condition $\mathbf{B}\nrightarrow \mathbf{A}$, following the same steps as those for $\mathbf{A}\nrightarrow \mathbf{B}$, we have that the $\mathbf{B}\nrightarrow \mathbf{A}$ constraint is equivalent to
\be
\begin{aligned}
    & E^{(1)}_{A_0A_1B_0} = E^{(2)}_{A_0A_1B_0},\quad E^{(3)}_{A_0A_1B_0}=E^{(4)}_{A_0A_1B_0},\\
    & E^{(1)}_{A_0A_1B_0} + (m-1) E^{(4)}_{A_0A_1B_0} = \pi_{B_0} \ox (E^{(1)}_{A_0A_1} + (m-1) E^{(4)}_{A_0A_1}).
\end{aligned}
\ee
Note that with the simplified non-signalling constraints, the TP constraint can be further simplified to $m^2 E^{(1)}_{A_0B_0} = \mathds{1}_{A_0B_0}$. 
So far, the SDP has been simplified to
\bea
    \min & \;\; \mu \\
    {\rm s.t.} &\; \tr_{A_1B_1} Y_{A_0A_1B_0B_1} \leq \mu \mathds{1}_{A_0B_0},\, Y_{A_0A_1B_0B_1}\geq 0,\\
    &\;\; Y_{A_0A_1B_0B_1} \geq m^2 E^{(1)}_{A_0A_1B_0B_1} - J^{\cN},\\
    &\;\; E^{(1)}_{A_0A_1B_0B_1} \geq 0,\, m^2 E^{(1)}_{A_0B_0} = \mathds{1}_{A_0B_0},\\
    &\;\; E^{(2)}_{A_0A_1B_0B_1}\geq 0, \, E^{(3)}_{A_0A_1B_0B_1}\geq 0, \, E^{(4)}_{A_0A_1B_0B_1}\geq 0,\\
    &\;\; E^{(1)}_{A_0B_0B_1} + (m-1)E^{(4)}_{A_0B_0B_1} = \pi_{A_0} \ox \left( E^{(1)}_{B_0B_1} + (m-1) E^{(4)}_{B_0B_1}\right),\\
    &\;\; E^{(1)}_{A_0A_1B_0} + (m-1)E^{(4)}_{A_0A_1B_0} = \pi_{B_0} \ox \left(E^{(1)}_{A_0A_1}+ (m-1) E^{(4)}_{A_0A_1}\right),\\
    &\;\; E^{(1)}_{A_0B_0B_1} = E^{(3)}_{A_0B_0B_1}, \; E^{(2)}_{A_0B_0B_1} = E^{(4)}_{A_0B_0B_1},\\
    &\;\; E^{(1)}_{A_0A_1B_0} = E^{(2)}_{A_0A_1B_0}, \; E^{(3)}_{A_0A_1B_0}=E^{(4)}_{A_0A_1B_0}.
\eea
Let $Q_{A_0A_1B_0B_1} \coloneqq m^2 E^{(1)}_{A_0A_1B_0B_1}$, $V_{A_0A_1B_0B_1} \coloneqq m^2 E^{(1)}_{A_0A_1B_0B_1} + m^2(m-1) E^{(2)}_{A_0A_1B_0B_1}$, $W_{A_0A_1B_0B_1} \coloneqq m^2 E^{(1)}_{A_0A_1B_0B_1} + m^2(m-1) E^{(3)}_{A_0A_1B_0B_1}$ and $X_{A_0A_1B_0B_1} \coloneqq m^2 E^{(1)}_{A_0A_1B_0B_1} + m^2(m-1) E^{(4)}_{A_0A_1B_0B_1}$. It follows that the SDP can be rewritten as
\bea
    \min & \;\; \mu \\
    {\rm s.t.} &\; \tr_{A_1B_1} Y_{A_0A_1B_0B_1} \leq \mu \mathds{1}_{A_0B_0},\, Y_{A_0A_1B_0B_1}\geq 0,\\
    &\;\; Y_{A_0A_1B_0B_1} \geq Q_{A_0A_1B_0B_1} - J^{\cN},\, Q_{A_0A_1B_0B_1} \geq 0,\\
    &\;\; Q_{A_0B_0} = \mathds{1}_{A_0B_0},\, V_{A_0A_1B_0B_1}\geq Q_{A_0A_1B_0B_1},\\
    &\;\; W_{A_0A_1B_0B_1}\geq Q_{A_0A_1B_0B_1}, \, X_{A_0A_1B_0B_1}\geq Q_{A_0A_1B_0B_1},\\
    &\;\; X_{A_0B_0B_1} = \pi_{A_0} \ox X_{B_0B_1},\, X_{A_0A_1B_0} = \pi_{B_0} \ox X_{A_0A_1},\\
    &\;\; mQ_{A_0B_0B_1} = W_{A_0B_0B_1}, \; X_{A_0B_0B_1} = V_{A_0B_0B_1},\\
    &\;\; mQ_{A_0A_1B_0} = V_{A_0A_1B_0}, \; X_{A_0A_1B_0}=W_{A_0A_1B_0}.
\eea
Hence, we complete the proof.
\end{proof}

\begin{remark}
During the proof, we can observe that the one-shot minimum error of simulation via $A$-to-$B$ non-signalling bipartite superchannels is given by 
\bea
    e^{\rightarrow}(\Upsilon_m, \cN) = \min & \;\; \mu \\
    {\rm s.t.} &\; \tr_{A_1B_1} Y_{A_0A_1B_0B_1} \leq \mu \mathds{1}_{A_0B_0},\, Y_{A_0A_1B_0B_1}\geq 0,\\
    &\;\; Y_{A_0A_1B_0B_1} \geq Q_{A_0A_1B_0B_1} - J^{\cN},\, Q_{A_0A_1B_0B_1} \geq 0,\\
    &\;\; Q_{A_0B_0} = \mathds{1}_{A_0B_0},\, V_{A_0A_1B_0B_1}\geq Q_{A_0A_1B_0B_1},\\
    &\;\; W_{A_0A_1B_0B_1}\geq Q_{A_0A_1B_0B_1}, \, X_{A_0A_1B_0B_1}\geq Q_{A_0A_1B_0B_1},\\
    &\;\; X_{A_0B_0B_1} = \pi_{A_0} \ox X_{B_0B_1},\\
    &\;\; mQ_{A_0B_0B_1} = W_{A_0B_0B_1}, \; X_{A_0B_0B_1} = V_{A_0B_0B_1},
\eea
by removing the $\mathbf{B}\nrightarrow \mathbf{A}$ condition. Similarly, we can obtain the SDP for $e^{\leftarrow}(\Upsilon_m, \cN)$.
\end{remark}

Based on the minimum simulation error, we present an SDP hierarchy for calculating the one-shot $\epsilon$-error bidirectional classical communication cost in Corollary~\ref{cor:sdp_hichy}, and then formulate the one-shot \textit{exact} bidirectional classical communication cost as a single SDP in Theorem~\ref{thm:one_shot_exactcost}.

\begin{corollary}\label{cor:sdp_hichy}
For a given bipartite quantum channel $\cN_{A_0B_0\rightarrow A_1B_1}$, there is an SDP hierarchy for its one-shot $\epsilon$-error bidirectional classical communication cost via non-signalling bipartite superchannels.
\end{corollary}
\begin{proof}
According to Proposition~\ref{prop:min_err_SDP}, we know that the one-shot $\epsilon$-error bidirectional classical communication cost via non-signalling bipartite superchannels is given by the following optimization problem.
\bea\label{Eq:smooth_cost_SDP}
S_{\leftrightarrow, \epsilon}^{(1)}(\cN) = \log\min & \;\; \lceil m\rceil \\
{\rm s.t.} &\; \tr_{A_1B_1} Y_{A_0A_1B_0B_1} \leq \epsilon \mathds{1}_{A_0B_0},\, Y_{A_0A_1B_0B_1}\geq 0,\\
&\;\; Y_{A_0A_1B_0B_1} \geq Q_{A_0A_1B_0B_1} - J^{\cN},\, Q_{A_0A_1B_0B_1} \geq 0,\\
&\;\; Q_{A_0B_0} = \mathds{1}_{A_0B_0},\, V_{A_0A_1B_0B_1}\geq Q_{A_0A_1B_0B_1},\\
&\;\; W_{A_0A_1B_0B_1}\geq Q_{A_0A_1B_0B_1}, \, X_{A_0A_1B_0B_1}\geq Q_{A_0A_1B_0B_1},\\
&\;\; X_{A_0B_0B_1} = \pi_{A_0} \ox X_{B_0B_1},\, X_{A_0A_1B_0} = \pi_{B_0} \ox X_{A_0A_1},\\
&\;\; mQ_{A_0B_0B_1} = W_{A_0B_0B_1}, \; X_{A_0B_0B_1} = V_{A_0B_0B_1},\\
&\;\; mQ_{A_0A_1B_0} = V_{A_0A_1B_0}, \; X_{A_0A_1B_0}=W_{A_0A_1B_0}.
\eea
This is not an SDP as there is a bilinear term $mQ_{A_0B_0B_1}$ in the constraints where both $m$ and $Q_{A_0A_1B_0B_1}$ are variables. However, it implies a hierarchy of SDPs with fixed $k\in\NN$ as follows
\bea
\Delta(\cN, k) = \min & \;\; \delta \\
{\rm s.t.} &\; \tr_{A_1B_1} Y_{A_0A_1B_0B_1} \leq \epsilon \mathds{1}_{A_0B_0},\, Y_{A_0A_1B_0B_1}\geq 0,\\
&\;\; Y_{A_0A_1B_0B_1} \geq Q_{A_0A_1B_0B_1} - J^{\cN},\, Q_{A_0A_1B_0B_1} \geq 0,\\
&\;\; Q_{A_0B_0} = \mathds{1}_{A_0B_0},\, V_{A_0A_1B_0B_1}\geq Q_{A_0A_1B_0B_1},\\
&\;\; W_{A_0A_1B_0B_1}\geq Q_{A_0A_1B_0B_1}, \, X_{A_0A_1B_0B_1}\geq Q_{A_0A_1B_0B_1},\\
&\;\; X_{A_0B_0B_1} = \pi_{A_0} \ox X_{B_0B_1},\, X_{A_0A_1B_0} = \pi_{B_0} \ox X_{A_0A_1},\\
&\;\; X_{A_0B_0B_1} = V_{A_0B_0B_1},\; X_{A_0A_1B_0}=W_{A_0A_1B_0},\\
&\; -\delta \idop_{A_0B_0B_1}\leq kQ_{A_0B_0B_1} - W_{A_0B_0B_1}\leq \delta \idop_{A_0B_0B_1},\\
&\; -\delta \idop_{A_0B_0B_1}\leq kQ_{A_0A_1B_0} - V_{A_0A_1B_0}\leq \delta \idop_{A_0B_0B_1}.
\eea
It can be verified that the above optimization problem is a valid SDP. Further, we have $S_{\leftrightarrow, \epsilon}^{(1)}(\cN) = \log k_{\min}$ where $k_{\min}$ is the smallest $k\in\NN$ such that $\Delta(\cN, k) = 0$ or is minimized to a negligible tolerance as that is when all conditions in Eq.~\eqref{Eq:smooth_cost_SDP} are satisfied.
\end{proof}

Notably, we can also have SDP hierarchies for calculating the one-shot $\epsilon$-error one-way classical communication cost via $A$-to-$B$ or $B$-to-$A$ non-signalling bipartite superchannels. According to Theorem~\ref{thm:smooth_AtoB}, both of these SDP hierarchies converge in at most $ 2^{\lceil\mathfrak{D}_{\max}^{\rightarrow(\leftarrow),\epsilon}(\cN) + 1\rceil}$ steps.

\begin{theorem}[One-shot exact bidirectional classical communication cost]\label{thm:one_shot_exactcost}
For a given bipartite quantum channel $\cN_{A_0B_0\rightarrow A_1B_1}$, its one-shot exact bidirectional classical communication cost via non-signalling bipartite superchannels is given by the following SDP:
\be\label{SDP:cost_simp}
\begin{aligned}
    S_{\leftrightarrow, 0}^{(1)}(\cN) = \log\min & \;\; \lceil m \rceil\\
    {\rm s.t.} &\;\; V_{A_0A_1B_0B_1}\geq J^{\cN}_{A_0A_1B_0B_1}, \, W_{A_0A_1B_0B_1}\geq J^{\cN}_{A_0A_1B_0B_1},\\
    &\;\; X_{A_0A_1B_0B_1} \geq J^{\cN}_{A_0A_1B_0B_1},\\
    &\;\; X_{A_0B_0B_1} = \pi_{A_0} \ox X_{B_0B_1}, \, X_{A_0A_1B_0} = \pi_{B_0} \ox X_{A_0A_1},\\
    &\;\; m J^{\cN}_{A_0B_0B_1} = W_{A_0B_0B_1},~X_{A_0B_0B_1} = V_{A_0B_0B_1},\\
    &\;\; m J^{\cN}_{A_0A_1B_0} = V_{A_0A_1B_0},~X_{A_0A_1B_0} = W_{A_0A_1B_0}.
\end{aligned}
\ee
\end{theorem}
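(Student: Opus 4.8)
The plan is to reduce the exact cost $S^{(1)}_{\leftrightarrow,0}(\cN)$ to the condition $e_{\mathrm{NS}}(\Upsilon_m,\cN) = 0$ and then specialize the SDP of Proposition~\ref{prop:min_err_SDP} to the case $\mu = 0$. By definition, $S^{(1)}_{\leftrightarrow,0}(\cN) = \log\inf\{m \mid e_{\mathrm{NS}}(\Upsilon_m,\cN) = 0\}$, so I would first argue that $e_{\mathrm{NS}}(\Upsilon_m,\cN)=0$ holds if and only if there exist feasible operators in the simplified SDP with $\mu = 0$. Setting $\mu = 0$ in the SDP of Proposition~\ref{prop:min_err_SDP} forces $\tr_{A_1B_1} Y_{A_0A_1B_0B_1} \leq 0$ together with $Y_{A_0A_1B_0B_1}\geq 0$, which since $Y$ is positive semidefinite forces $Y_{A_0A_1B_0B_1} = 0$. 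The constraint $Y_{A_0A_1B_0B_1} \geq Q_{A_0A_1B_0B_1} - J^{\cN}$ then becomes $Q_{A_0A_1B_0B_1} \leq J^{\cN}_{A_0A_1B_0B_1}$.

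Next I would combine this inequality $Q \leq J^{\cN}$ with the trace-preservation constraint $Q_{A_0B_0} = \idop_{A_0B_0}$: since $\cN$ is a channel, $J^{\cN}_{A_0B_0} = \idop_{A_0B_0}$ as well, and an operator inequality $Q \leq J^{\cN}$ whose partial traces agree on $A_0B_0$ actually forces equality, $Q_{A_0A_1B_0B_1} = J^{\cN}_{A_0A_1B_0B_1}$. (This is the standard fact that if $0\le X\le Z$ and $\tr_S X = \tr_S Z$ then $X=Z$, applied with $S = A_1B_1$.) Substituting $Q = J^{\cN}$ throughout, the constraints $V\geq Q$, $W \geq Q$, $X \geq Q$ become $V \geq J^{\cN}$, $W\geq J^{\cN}$, $X\geq J^{\cN}$; the non-signalling marginal constraints on $X$ are carried over verbatim; and the linking constraints $mQ_{A_0B_0B_1} = W_{A_0B_0B_1}$, $X_{A_0B_0B_1}=V_{A_0B_0B_1}$, $mQ_{A_0A_1B_0}=V_{A_0A_1B_0}$, $X_{A_0A_1B_0}=W_{A_0A_1B_0}$ become exactly the stated constraints with $Q$ replaced by $J^{\cN}$. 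The constraint $Q_{A_0A_1B_0B_1}\geq 0$ is automatic since $J^{\cN}\geq 0$. This yields precisely the feasibility region of the SDP in Eq.~\eqref{SDP:cost_simp}, and minimizing $m$ (equivalently $\lceil m\rceil$, since the linking constraints involve $m$ only through scalar multiplications that remain meaningful for integer $m$) over feasibility gives $S^{(1)}_{\leftrightarrow,0}(\cN) = \log\min\lceil m\rceil$.

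The main obstacle I anticipate is the rigor of the equality-from-inequality step: one must be careful that the argument "$0\le Q\le J^{\cN}$ with matching $A_0B_0$-marginals implies $Q=J^{\cN}$" is stated and used correctly, and that no feasible solution with $\mu=0$ is lost when eliminating $Y$. A secondary subtlety is handling the ceiling $\lceil m\rceil$: since the original definition quantifies over $m\in\mathbb{N}$, I would remark that the optimal $m$ in the relaxed (real-valued) SDP need not be an integer, so one rounds up to $\lceil m\rceil$; one should check that rounding up preserves feasibility — indeed, enlarging $m$ only relaxes the inequality constraints $V\geq J^{\cN}$ etc. while the equality constraints can be maintained by appropriate rescaling, so $\lceil m\rceil$ remains feasible. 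Apart from these points the derivation is a routine specialization of Proposition~\ref{prop:min_err_SDP}.
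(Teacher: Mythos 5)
Your proposal is correct and follows essentially the same route as the paper: specialize the SDP of Proposition~\ref{prop:min_err_SDP} to zero error, deduce $Q_{A_0A_1B_0B_1}=J^{\cN}_{A_0A_1B_0B_1}$, and substitute. You are in fact slightly more explicit than the paper in deriving $Q=J^{\cN}$ (via $\mu=0\Rightarrow Y=0\Rightarrow Q\leq J^{\cN}$ plus the matching-marginal argument) and in flagging the integer-rounding step, both of which the paper leaves implicit.
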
 

\begin{proof}
By the definition of the one-shot exact bidirectional classical communication cost and Proposition~\ref{prop:min_err_SDP}, we have
\begin{subequations}
\begin{align}
    S_{\leftrightarrow, 0}^{(1)}(\cN) = \log \min & \;\; \lceil m \rceil\\
    {\rm s.t.} &\;\; Q_{A_0A_1B_0B_1} = J^{\cN}_{A_0A_1B_0B_1},\, Q_{A_0A_1B_0B_1} \geq 0,\\
    &\;\; Q_{A_0B_0} = \mathds{1}_{A_0B_0},\, V_{A_0A_1B_0B_1}\geq Q_{A_0A_1B_0B_1},\\
    &\;\; W_{A_0A_1B_0B_1}\geq Q_{A_0A_1B_0B_1}, \, X_{A_0A_1B_0B_1}\geq Q_{A_0A_1B_0B_1},\\
    &\;\; X_{A_0B_0B_1} = \pi_{A_0} \ox X_{B_0B_1},\, X_{A_0A_1B_0} = \pi_{B_0} \ox X_{A_0A_1},\\
    &\;\; mQ_{A_0B_0B_1} = W_{A_0B_0B_1}, \; X_{A_0B_0B_1} = V_{A_0B_0B_1},\\
    &\;\; mQ_{A_0A_1B_0} = V_{A_0A_1B_0}, \; X_{A_0A_1B_0}=W_{A_0A_1B_0}.
\end{align}
\end{subequations}
Since $Q_{A_0A_1B_0B_1} = J^{\cN}_{A_0A_1B_0B_1}$, we can replace the variable $Q_{A_0A_1B_0B_1}$ with the input channel $J^{\cN}_{A_0A_1B_0B_1}$, and the constraints $Q_{A_0A_1B_0B_1} \geq 0$ and $Q_{A_0B_0} = \mathds{1}_{A_0B_0}$ are guaranteed. This replacement of $Q_{A_0A_1B_0B_1}$ with $J^{\cN}_{A_0A_1B_0B_1}$ results in the SDP specified in the statement of this theorem.
\end{proof}

Different from previous studies on point-to-point channel simulation cost assisted by non-signalling correlations~\cite{Duan2016,Fang_2020}, the simulation cost we have derived takes into account not only the non-signalling correlations between the simulation codes across the timeline of the protocol, but also the non-signaling correlations between two parties communicating through the target bipartite quantum channel. This network communication scenario is more practical when we consider both the space separation and the time separation of a bipartite quantum process. Notably, if $A_1$ and $B_0$ are taken as trivial one-dimensional subsystems (see Figure~\ref{fig:OW_TW_channel}), the SDP~\eqref{SDP:cost_simp} will reduce to
\begin{equation}
\begin{aligned}
    \log \min & \;\; \lceil m \rceil\\
    {\rm s.t.} &\;\; J^{\cN}_{A_0B_1}\leq V_{A_0B_1},\, J^{\cN}_{A_0B_1}\leq W_{A_0B_1},\, J^{\cN}_{A_0B_1}\leq X_{A_0B_1},\\
    &\;\; X_{A_0B_1} = \pi_{A_0} \ox X_{B_1},\\
    &\;\; m J^{\cN}_{A_0B_1} = W_{A_0B_1},\, X_{A_0B_1} = V_{A_0B_1},\\
    &\;\; m J^{\cN}_{A_0} = V_{A_0},\, X_{A_0} = W_{A_0}.
\end{aligned}
\end{equation}
Notice that $V_{A_0B_1}$ and $W_{A_0B_1}$ are redundant variables and $J^{\cN}_{A_0} = \idop_{A_0}$, so the SDP can be further simplified to
\begin{equation}
    \log\min \left\{\lceil m \rceil \,\middle|\, J^{\cN}_{A_0B_1}\leq \pi_{A_0} \ox X_{B_1},\, \tr [X_{B_1}] \pi_{A_0} = m\idop_{A_0} \right\}.
\end{equation}
Redefining $X_{B_1} \coloneqq X_{B_1} / d_{A_0}$ results in
\begin{equation}
    \log\min \left\{\lceil \tr [X_{B_1}] \rceil \,\middle|\, J^{\cN}_{A_0B_1}\leq \idop_{A_0} \ox X_{B_1} \right\}.
\end{equation}
This is identical to $\log\left\lceil2^{-H_{\min}(A|B)_{J^\cN}}\right\rceil$, which has been shown by Duan and Winter~\cite{Duan2016} to characterize the one-shot exact classical simulation cost of $\cN_{A_0\to B_1}$ under non-signalling assistance with $H_{\min}(A|B)_{J^\cN}$ being the conditional min-entropy~\cite{renner2005security, datta2009min}. Thus, Theorem~\ref{thm:one_shot_exactcost} can be reduced to the case for a point-to-point channel from $A_0$ to $B_1$. From Theorem~\ref{thm:one_shot_exactcost}, we can also obtain that the max-relative entropy of bidirectional classical communication is a lower bound on the one-shot exact bidirectional classical communication cost.

\subsection{Asymptotic classical communication cost of bipartite channels}

While the one-shot exact bidirectional classical communication cost provides insights into the resources required for a single instance of channel simulation, it is also essential to investigate the asymptotic behavior of the cost when simulating a large number of identical and independently distributed (i.i.d.) target channels. The asymptotic bidirectional classical communication cost of $\cN_{A_0 B_0\rightarrow A_1 B_1}$ via non-signalling bipartite superchannel is given by the regularization
\be
    S_{\leftrightarrow}(\cN) \coloneqq \lim_{\epsilon\rightarrow 0}\lim_{n\rightarrow \infty}\frac{1}{n} S_{\leftrightarrow, \epsilon}^{(1)}(\cN^{\ox n}),
\ee
and the asymptotic exact bidirectional classical communication cost of $\cN_{A_0 B_0\rightarrow A_1B_1}$ is given by
\be
    S_{\leftrightarrow, 0}(\cN) \coloneqq \lim_{n\rightarrow \infty}\frac{1}{n} S_{\leftrightarrow, 0}^{(1)}(\cN^{\ox n}),
\ee
Similarly, we have the asymptotic (exact) one-way classical communication cost of $\cN_{A_0 B_0\rightarrow A_1 B_1}$ via $A$-to-$B$ and $B$-to-$A$ non-signalling bipartite superchannels
\begin{equation}
    S_{*}(\cN) \coloneqq \lim_{\epsilon\rightarrow 0}\lim_{n\rightarrow \infty}\frac{1}{n} S_{*, \epsilon}^{(1)}(\cN^{\ox n}),~S_{*, 0}(\cN) \coloneqq \lim_{n\rightarrow \infty}\frac{1}{n} S_{*, 0}^{(1)}(\cN^{\ox n}),
\end{equation}
where $*\in \{\rightarrow,\leftarrow\}$.

We first establish the asymptotic one-way classical communication cost via one-way non-signalling bipartite superchannels.

\begin{theorem}[Asymptotic one-way classical communication cost]\label{thm:vani_err_lowerAB}
For a given bipartite quantum channel $\cN_{\bA_0\bB_0\rightarrow \bA_1\bB_1}$, its asymptotic one-way classical communication cost via $A$-to-$B$ or $B$-to-$A$ non-signalling bipartite superchannels satisfies 
\begin{equation}
    S_{*}(\cN) = \lim_{\epsilon\rightarrow 0}\lim_{n \rightarrow \infty} \frac{1}{n} \mathfrak{D}_{\max}^{*,\epsilon}(\cN^{\ox n}),
\end{equation}
where $*\in\{\rightarrow,\leftarrow\}$.
\end{theorem}
\begin{proof}
This is a direct consequence of Theorem~\ref{thm:smooth_AtoB}.
\end{proof}

\begin{theorem}[Asymptotic exact one-way classical communication cost]
For a given bipartite quantum channel $\cN_{\bA_0\bB_0\rightarrow \bA_1\bB_1}$, its asymptotic exact one-way classical communication cost via $A$-to-$B$ or $B$-to-$A$ non-signalling bipartite superchannels satisfies 
\begin{equation}
    S_{*,0}(\cN) = \mathfrak{D}_{\max}^{*}(\cN),
\end{equation}
where $*\in\{\rightarrow,\leftarrow\}$.
\end{theorem}
\begin{proof}
This is a direct consequence of Theorem~\ref{thm:smooth_AtoB} and the additivity of the max-relative entropy of one-way classical communication given in Lemma~\ref{lem:dmax_oneway_add}.
\end{proof}

Now, we study the bidirectional classical communication cost. We introduce the \textit{channel's bipartite conditional min-entropy} and demonstrate its role as an additive lower bound on the one-shot exact bidirectional classical communication cost of $\cN_{A_0 B_0\rightarrow A_1 B_1}$ via non-signalling bipartite superchannels. It will favorably give a lower bound on the asymptotic exact cost.

\begin{definition}[Bipartite conditional min-entropy]
Let $\cN_{A_0 B_0\rightarrow A_1B_1}$ be a bipartite quantum channel. The bipartite conditional min-entropies of $\cN_{A_0 B_0\rightarrow A_1B_1}$ are defined as
\be
\begin{aligned}
    &H_{\min}(\mathbf{A}|\mathbf{B})_{\cN}\coloneqq -\min_{\cX} D_{\max}\left(\cR^{\pi}_{A_1}\circ\cN_{A_0 B_0\rightarrow A_1B_1} \,\middle\|\, \cR^{\pi}_{A_0\to A_1} \ox \cX_{B_0\rightarrow B_1}\right),\\
    &H_{\min}(\mathbf{B}|\mathbf{A})_{\cN}\coloneqq -\min_{\cY} D_{\max}\left(\cR^{\pi}_{B_1}\circ\cN_{A_0 B_0\rightarrow A_1B_1} \,\middle\|\, \cY_{A_0\rightarrow A_1} \ox \cR^{\pi}_{B_0\rightarrow B_1}\right),
\end{aligned}
\ee
where $\cX$ ranges over all quantum channels in $\CPTP(B_0,B_1)$ for $H_{\min}(\mathbf{A}|\mathbf{B})_{\cN}$, and $\cY$ ranges over all quantum channels in $\CPTP(A_0,A_1)$ for $H_{\min}(\mathbf{B}|\mathbf{A})_{\cN}$.
\end{definition}
By the property of the max-relative entropy of quantum channels, the bipartite conditional min-entropies of $\cN_{A_0 B_0\rightarrow A_1B_1}$ can be expressed as follows:
\be\label{SDP:chan_cond_minetrpy}
\begin{aligned}
    H_{\min}(\mathbf{A}|\mathbf{B})_{\cN} &= -\min_{\cX} D_{\max}\left(\idop_{A_1}\ox J^{\cN}_{A_0B_0B_1} \,\middle\|\, \idop_{A_0A_1} \ox J^{\cX}_{B_0B_1}\right)\\
    &= -\log\min \left\{m \,\middle|\, X_{B_0} = m\cdot\mathds{1}_{B_0},~J^{\cN}_{A_0B_0B_1} \leq \idop_{A_0} \ox X_{B_0B_1}\right\},\\
    H_{\min}(\mathbf{B}|\mathbf{A})_{\cN} &= -\min_{\cY} D_{\max}\left(\idop_{B_1}\ox J^{\cN}_{A_0A_1B_0} \,\middle\|\, \idop_{B_0 B_1} \ox J^{\cY}_{A_0A_1}\right)\\
    &= -\log\min \left\{m \,\middle|\, Y_{A_0} = m\cdot\mathds{1}_{A_0},~J^{\cN}_{A_0A_1B_0} \leq \idop_{B_0} \ox Y_{A_0A_1}\right\}.
\end{aligned}
\ee
Notably, if $A_1$ and $B_0$ are taken as one-dimensional trivial subsystems, the bipartite conditional min-entropy $H_{\min}(\mathbf{A}|\mathbf{B})_{\cN}$ will reduce to the common channel conditional min-entropy of a point-to-point channel~\cite{Duan2016}:
\be
    H_{\min}(\mathbf{A}|\mathbf{B})_{\cN} = -\min_{\cX} D_{\max}\left(\cN_{A_0\rightarrow B_1} \,\middle\|\, \cR^{\sigma}_{A_0\rightarrow B_1}\right) = -\min_{\sigma_{B_1}} D_{\max}\left(J^{\cN}_{A_0 B_1} \,\middle\|\, \idop_{A_0} \ox \sigma_{B_1}\right),
\ee
where $\cR^{\sigma}_{A_0\rightarrow B_1}$ is a replacement channel that always outputs $\sigma_{B_1}$. It is also related to the channel's max-information~\cite{Fang_2020} as $H_{\min}(\mathbf{A}|\mathbf{B})_{\cN} = -I_{\max}(A:B)_{\cN}$. In the following, we show a valuable property of the bipartite channel conditional min-entropy, additivity with respect to the tensor product of quantum channels.

\begin{lemma}[Additivity of bipartite conditional min-entropy]\label{lem:Hmin_add}
The channel's bipartite conditional min-entropy is additive with respect to the tensor product of quantum channels, i.e., for any $\cN_1\in\CPTP(A_0B_0:A_1B_1)$ and $\cN_2\in\CPTP(A'_0B'_0:A'_1B'_1)$,
\begin{equation}
H_{\min}(\mathbf{A}|\mathbf{B})_{\cN_1\ox \cN_2}= H_{\min}(\mathbf{A}|\mathbf{B})_{\cN_1} + H_{\min}(\mathbf{A}|\mathbf{B})_{\cN_2}.
\end{equation}
\end{lemma}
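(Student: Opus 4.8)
The plan is to work entirely with the SDP characterization of $2^{-H_{\min}(\mathbf{A}|\mathbf{B})_{\cN}}$ already recorded in Eq.~\eqref{SDP:chan_cond_minetrpy}, i.e.
\[
  m^{*}(\cN)\;\coloneqq\;2^{-H_{\min}(\mathbf{A}|\mathbf{B})_{\cN}}\;=\;\min\Big\{\,m \,\Big|\, X_{B_0B_1}\geq 0,\; X_{B_0}=m\,\idop_{B_0},\; J^{\cN}_{A_0B_0B_1}\leq\idop_{A_0}\ox X_{B_0B_1}\,\Big\},
\]
so that the claimed additivity of $H_{\min}(\mathbf{A}|\mathbf{B})$ is equivalent to the multiplicativity $m^{*}(\cN_1\ox\cN_2)=m^{*}(\cN_1)\,m^{*}(\cN_2)$, which I would establish as two matching inequalities.

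For $m^{*}(\cN_1\ox\cN_2)\leq m^{*}(\cN_1)\,m^{*}(\cN_2)$ (equivalently, superadditivity of $H_{\min}$) I would take optimal solutions $X^{(1)},X^{(2)}$ for $\cN_1,\cN_2$ and verify that $X^{(1)}\ox X^{(2)}$ is feasible for $\cN_1\ox\cN_2$: positivity is immediate; $\tr_{B_1B_1'}\big[X^{(1)}\ox X^{(2)}\big]=m^{*}(\cN_1)m^{*}(\cN_2)\,\idop_{B_0B_0'}$; and, using $J^{\cN_1\ox\cN_2}_{A_0B_0B_1A_0'B_0'B_1'}=J^{\cN_1}_{A_0B_0B_1}\ox J^{\cN_2}_{A_0'B_0'B_1'}$ together with the elementary fact that $0\leq P'\leq P$ and $0\leq Q'\leq Q$ imply $P'\ox Q'\leq P\ox Q$, the operator inequality $J^{\cN_1\ox\cN_2}\leq\idop_{A_0A_0'}\ox\big(X^{(1)}\ox X^{(2)}\big)$ holds. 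Hence the tensor solution certifies the bound.

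For the reverse inequality I would pass to the dual SDP. Introducing a PSD multiplier $M_{A_0B_0B_1}$ for the operator inequality and a Hermitian multiplier $N_{B_0}$ for the marginal constraint and computing the Lagrangian, one gets the dual
\[
  m^{*}(\cN)\;=\;\max\Big\{\,\tr\!\big[M_{A_0B_0B_1}J^{\cN}_{A_0B_0B_1}\big]\,\Big|\, M\geq 0,\;\tr N_{B_0}=1,\;\tr_{A_0}\!\big[M_{A_0B_0B_1}\big]\leq N_{B_0}\ox\idop_{B_1}\,\Big\},
\]
where strong duality and attainment of the dual optimum follow from Slater's condition (e.g.\ $X=c\,\idop_{B_0B_1}$ with $c>\|J^{\cN}\|_{\infty}$ is strictly feasible for the primal inequality, cf.~\cite{watrous2009semidefinite}), and where the constraints force $N_{B_0}\geq 0$ (take $\tr_{B_1}$ of $\tr_{A_0}M\leq N_{B_0}\ox\idop_{B_1}$). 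Taking dual-optimal $(M_i,N_i)$ for $\cN_i$, I would check that $(M_1\ox M_2,\,N_1\ox N_2)$ is dual-feasible for $\cN_1\ox\cN_2$; the only nonroutine point is $\tr_{A_0A_0'}\!\big[M_1\ox M_2\big]=(\tr_{A_0}M_1)\ox(\tr_{A_0'}M_2)\leq (N_1\ox\idop_{B_1})\ox(N_2\ox\idop_{B_1'})=(N_1\ox N_2)\ox\idop_{B_1B_1'}$, again by tensor-monotonicity of the PSD order and $N_i\geq 0$. Its objective value is $\tr[M_1 J^{\cN_1}]\,\tr[M_2 J^{\cN_2}]=m^{*}(\cN_1)m^{*}(\cN_2)$ by strong duality applied to each $\cN_i$, so weak duality for $\cN_1\ox\cN_2$ gives $m^{*}(\cN_1\ox\cN_2)\geq m^{*}(\cN_1)m^{*}(\cN_2)$. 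Combining the two inequalities proves the lemma; the statement for $H_{\min}(\mathbf{B}|\mathbf{A})$ follows by the symmetric argument with the roles of Alice and Bob exchanged.

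I expect the main obstacle to be purely organizational rather than conceptual: deriving the dual correctly (in particular getting the partial-trace/identity structure $\tr_{A_0}M\leq N_{B_0}\ox\idop_{B_1}$ right) and carefully tracking which subsystems are traced when forming tensor products of Choi operators and of dual variables. The genuinely analytic ingredients --- tensor-monotonicity of the positive semidefinite order and SDP strong duality via Slater --- are standard.
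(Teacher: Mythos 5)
Your proposal is correct and follows essentially the same route as the paper: tensor products of primal-feasible solutions give one inequality, and tensor products of dual-feasible solutions (with strong duality via Slater) give the other. Your extra care in noting that $N_{B_0}\geq 0$ is forced by the dual constraints — which is what licenses the tensor-monotonicity step — is a detail the paper glosses over, but the argument is the same.
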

\begin{proof}
Suppose $\left\{m_1, X^{(1)}_{B_0B_1}\right\}$ and $\left\{m_2, X^{(2)}_{B'_0B'_1}\right\}$ form two feasible solutions of the SDP~\eqref{SDP:chan_cond_minetrpy} for $H_{\min}(\mathbf{A}|\mathbf{B})_{\cN_1}$ and $H_{\min}(\mathbf{A}|\mathbf{B})_{\cN_2}$, respectively. It is easy to verify that $\left\{m_1m_2, X^{(1)}_{B_0B_1} \ox X^{(2)}_{B'_0B'_1}\right\}$ forms a feasible solution of the SDP for $H_{\min}(\mathbf{A}|\mathbf{B})_{\cN_1\ox \cN_2}$. Hence, we have
\be\label{eq:hmin_superadd}
    H_{\min}(\mathbf{A}|\mathbf{B})_{\cN_1\ox \cN_2} \geq -\log m_1m_2 = -\log m_1 - \log m_2 = H_{\min}(\mathbf{A}|\mathbf{B})_{\cN_1} + H_{\min}(\mathbf{A}|\mathbf{B})_{\cN_2}.
\ee
Using Lagrangian method, the dual SDP of $H_{\min}(\mathbf{A}|\mathbf{B})_{\cN}$ is given by
\begin{equation}
    -\log\max\Big\{ \tr \big[M_{A_0B_0B_1} J^{\cN}_{A_0B_0B_1}\big]~\Big|~\tr N_{B_0} = 1,~M_{A_0B_0B_1} \geq 0,~N_{B_0}\ox \idop_{B_1} \geq M_{B_0B_1}\Big\}.
\end{equation}
It is easy to see that the strong duality holds by Slater's condition. For any feasible solution $\big\{M^{(1)}_{A_0B_0B_1}, N^{(1)}_{B_0}\big\}$ and $\big\{M^{(2)}_{A'_0B'_0B'_1}, N^{(2)}_{B'_0}\big\}$ for the dual SDPs of $H_{\min}(\mathbf{A}|\mathbf{B})_{\cN_1}$ and $H_{\min}(\mathbf{A}|\mathbf{B})_{\cN_2}$, respectively, we can verify that $\big\{M^{(1)}_{A_0B_0B_1}\ox M^{(2)}_{A'_0B'_0B'_1}, N^{(1)}_{B_0}\ox N^{(2)}_{B'_0}\big\}$ forms a feasible solution of the dual SDP for $H_{\min}(\mathbf{A}|\mathbf{B})_{\cN_1\ox \cN_2}$. Hence, we have $H_{\min}(\mathbf{A}|\mathbf{B})_{\cN_1\ox \cN_2} \leq H_{\min}(\mathbf{A}|\mathbf{B})_{\cN_1} +H_{\min}(\mathbf{A}|\mathbf{B})_{\cN_2}$. Together with Eq.~\eqref{eq:hmin_superadd}, we have $H_{\min}(\mathbf{A}|\mathbf{B})_{\cN_1\ox \cN_2}= H_{\min}(\mathbf{A}|\mathbf{B})_{\cN_1} + H_{\min}(\mathbf{A}|\mathbf{B})_{\cN_2}$. Similar results can be obtained for $H_{\min}(\mathbf{B}|\mathbf{A})_{\cN_1\ox \cN_2}$. Hence, we complete the proof.
\end{proof}

Based on the additivity of the channel's bipartite conditional min-entropy, we present a single-letter lower bound on the asymptotic exact bidirectional classical communication cost of an arbitrary bipartite quantum channel.

\begin{theorem}[Asymptotic exact bidirectional classical communication cost]\label{thm:cost_lowerbound}
For a given bipartite quantum channel $\cN_{A_0 B_0\rightarrow A_1B_1}$, its asymptotic exact bidirectional classical communication cost via non-signalling bipartite superchannels satisfies 
\begin{equation}
     S_{\leftrightarrow, 0}(\cN) \geq -\min\big\{H_{\min}(\mathbf{A}|\mathbf{B})_{\cN},~H_{\min}(\mathbf{B}|\mathbf{A})_{\cN}\big\}.
\end{equation}
\end{theorem}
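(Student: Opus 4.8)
The plan is to first prove the one‑shot inequality
\[
S_{\leftrightarrow,0}^{(1)}(\cN)\;\geq\;-\min\big\{H_{\min}(\mathbf{A}|\mathbf{B})_{\cN},\,H_{\min}(\mathbf{B}|\mathbf{A})_{\cN}\big\},
\]
by extracting feasible points of the two min‑entropy SDPs in~\eqref{SDP:chan_cond_minetrpy} from an arbitrary feasible point of the cost SDP~\eqref{SDP:cost_simp} of Theorem~\ref{thm:one_shot_exactcost}, and then to lift this to the asymptotic statement using the additivity of Lemma~\ref{lem:Hmin_add}.

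For the one‑shot step, fix a feasible tuple $(m,V,W,X)$ for~\eqref{SDP:cost_simp}; the goal is to show $m\geq 2^{-H_{\min}(\mathbf{A}|\mathbf{B})_{\cN}}$ and $m\geq 2^{-H_{\min}(\mathbf{B}|\mathbf{A})_{\cN}}$. First pin down the total mass of $X$: chaining the equalities $X_{A_0A_1B_0}=W_{A_0A_1B_0}$ and $mJ^{\cN}_{A_0B_0B_1}=W_{A_0B_0B_1}$ and using that $\cN$ is trace‑preserving (so $\tr J^{\cN}=d_{A_0}d_{B_0}$) gives $\tr X = m\,d_{A_0}d_{B_0}$. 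Feeding this into the replacement‑form marginal constraints $X_{A_0A_1B_0}=\pi_{B_0}\ox X_{A_0A_1}$ and $X_{A_0B_0B_1}=\pi_{A_0}\ox X_{B_0B_1}$ then yields $X_{B_0}=m\,d_{A_0}\idop_{B_0}$ and $X_{A_0}=m\,d_{B_0}\idop_{A_0}$. Now set $X'_{B_0B_1}\coloneqq X_{B_0B_1}/d_{A_0}$. The operator inequality $X_{A_0A_1B_0B_1}\geq J^{\cN}_{A_0A_1B_0B_1}$ together with $X_{A_0B_0B_1}=\pi_{A_0}\ox X_{B_0B_1}=\idop_{A_0}\ox X'_{B_0B_1}$ gives $J^{\cN}_{A_0B_0B_1}\leq \idop_{A_0}\ox X'_{B_0B_1}$, while $X'_{B_0}=m\,\idop_{B_0}$; hence $(m,X'_{B_0B_1})$ is feasible for the $H_{\min}(\mathbf{A}|\mathbf{B})_{\cN}$ SDP, so $m\geq 2^{-H_{\min}(\mathbf{A}|\mathbf{B})_{\cN}}$. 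Symmetrically, $Y_{A_0A_1}\coloneqq X_{A_0A_1}/d_{B_0}$ is feasible for the $H_{\min}(\mathbf{B}|\mathbf{A})_{\cN}$ SDP (using $X_{A_0A_1B_0}=\pi_{B_0}\ox X_{A_0A_1}$ and $X\geq J^{\cN}$), giving $m\geq 2^{-H_{\min}(\mathbf{B}|\mathbf{A})_{\cN}}$. Taking the infimum over feasible $m$ and then the ceiling and logarithm in~\eqref{SDP:cost_simp} yields $S_{\leftrightarrow,0}^{(1)}(\cN)\geq \log\max\{2^{-H_{\min}(\mathbf{A}|\mathbf{B})_{\cN}},2^{-H_{\min}(\mathbf{B}|\mathbf{A})_{\cN}}\}=-\min\{H_{\min}(\mathbf{A}|\mathbf{B})_{\cN},H_{\min}(\mathbf{B}|\mathbf{A})_{\cN}\}$.

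For the asymptotic step, apply the one‑shot bound to $\cN^{\ox n}$ and invoke Lemma~\ref{lem:Hmin_add} iterated $n$ times, so that $H_{\min}(\mathbf{A}|\mathbf{B})_{\cN^{\ox n}}=n\,H_{\min}(\mathbf{A}|\mathbf{B})_{\cN}$ and likewise for $\mathbf{B}|\mathbf{A}$. Since $\min(na,nb)=n\min(a,b)$ for $n>0$, dividing by $n$ gives $\tfrac1n S_{\leftrightarrow,0}^{(1)}(\cN^{\ox n})\geq -\min\{H_{\min}(\mathbf{A}|\mathbf{B})_{\cN},H_{\min}(\mathbf{B}|\mathbf{A})_{\cN}\}$ for every $n$ (the ceiling only helps the inequality), and the claim follows on letting $n\to\infty$.

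The only mildly delicate part is the bookkeeping in the one‑shot step: checking that the operators $X',Y'$ distilled from a feasible $X$ genuinely satisfy \emph{both} the normalization ($X'_{B_0}\propto\idop_{B_0}$, $Y'_{A_0}\propto\idop_{A_0}$) and the semidefinite‑ordering constraints of the min‑entropy SDPs, which hinges on first establishing $\tr X = m\,d_{A_0}d_{B_0}$ from the equality constraints of~\eqref{SDP:cost_simp}. Once that is in place, positivity of $X'$ and $Y'$ is inherited from $X\geq J^{\cN}\geq 0$, and the handling of the ceiling and the regularization is immediate given additivity.
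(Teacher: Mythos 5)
Your proof is correct, and it reaches the one-shot inequality by a somewhat different route than the paper. The paper first shows $\mathfrak{D}_{\max}^{\leftrightarrow}(\cN)\geq\max\{-H_{\min}(\mathbf{A}|\mathbf{B})_{\cN},-H_{\min}(\mathbf{B}|\mathbf{A})_{\cN}\}$ by rescaling the marginals $Y_{B_0B_1}/d_{A_0}$ and $Y_{A_0A_1}/d_{B_0}$ of a feasible point of the SDP~\eqref{SDP:dmaxNS} for the max-relative entropy (where the normalization $Y_{A_0B_0}=\lambda\idop_{A_0B_0}$ is an explicit constraint), and then invokes Theorem~\ref{thm:smooth_low_bound} at $\epsilon=0$ to pass from $\mathfrak{D}_{\max}^{\leftrightarrow}$ to $S^{(1)}_{\leftrightarrow,0}$. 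You instead extract the min-entropy feasible points directly from the cost SDP~\eqref{SDP:cost_simp} of Theorem~\ref{thm:one_shot_exactcost}, which requires the extra (and correctly executed) bookkeeping step of deriving $\tr X=m\,d_{A_0}d_{B_0}$ from the chain $X_{A_0A_1B_0}=W_{A_0A_1B_0}$, $mJ^{\cN}_{A_0B_0B_1}=W_{A_0B_0B_1}$ and trace preservation, before the marginal rescalings $X_{B_0B_1}/d_{A_0}$ and $X_{A_0A_1}/d_{B_0}$ go through. Your route has the advantage of not relying on Theorem~\ref{thm:smooth_low_bound} (whose proof needs the auxiliary non-signalling channel $\cE'$ and data processing); the paper's route buys the intermediate statement $\mathfrak{D}_{\max}^{\leftrightarrow}(\cN)\geq -\min\{H_{\min}(\mathbf{A}|\mathbf{B})_{\cN},H_{\min}(\mathbf{B}|\mathbf{A})_{\cN}\}$, which is of independent interest as a relation between the resource measure and the bipartite conditional min-entropies. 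The asymptotic step via Lemma~\ref{lem:Hmin_add} is identical in both arguments.
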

\begin{proof}
Recall the definition of the max-relative entropy of bidirectional communication:
\be\label{SDP:dmaxNS1}
\begin{aligned}
    \mathfrak{D}_{\max}^{\leftrightarrow}(\cN) = \log\, \min &\;\; \lambda\\
     \textrm{s.t.} 
     &\;\; J_{A_0 A_1 B_0 B_1}^{\cN} \leq Y_{A_0 A_1B_0B_1},~Y_{A_0B_0} = \lambda \idop_{A_0B_0},\\
     &\;\; Y_{A_0B_0B_1} = \pi_{A_0}\ox Y_{B_0B_1},~Y_{A_0A_1B_0} = \pi_{B_0}\ox Y_{A_0A_1}.
\end{aligned}
\ee
Suppose $\{\lambda, Y_{A_0A_1B_0B_1}\}$ forms a feasible solution of SDP~\eqref{SDP:dmaxNS1}. Then let 
\begin{equation}
    \bar{X}_{B_0 B_1} \coloneqq Y_{B_0B_1}/d_{A_0},~\bar{Y}_{A_0 A_1} \coloneqq Y_{A_0A_1}/d_{B_0}.
\end{equation}
It follows that $\{\lambda, \bar{X}_{B_0 B_1}\}$ and $\{\lambda, \bar{Y}_{\bA_0 \bA_1}\}$ are feasible solutions of the optimization for $-H_{\min}(\mathbf{A}|\mathbf{B})_{\cN}$ and $-H_{\min}(\mathbf{B}|\mathbf{A})_{\cN}$ as given in Eq.~\eqref{SDP:chan_cond_minetrpy}, respectively. Thus, we have $-H_{\min}(\mathbf{A}|\mathbf{B})_{\cN}\leq \mathfrak{D}_{\max}^{\leftrightarrow}(\cN)$ and $-H_{\min}(\mathbf{B}|\mathbf{A})_{\cN}\leq \mathfrak{D}_{\max}^{\leftrightarrow}(\cN)$ which yields $\mathfrak{D}_{\max}^{\leftrightarrow}(\cN)\geq \max\big\{-H_{\min}(\mathbf{A}|\mathbf{B})_{\cN},~-H_{\min}(\mathbf{B}|\mathbf{A})_{\cN}\big\}$. Combining with Theorem~\ref{thm:smooth_low_bound}, we have
\begin{equation}\label{Eq:S1_geq_Hmin}
   S_{\leftrightarrow, 0}^{(1)}(\cN) \geq -\min\big\{H_{\min}(\mathbf{A}|\mathbf{B})_{\cN},~H_{\min}(\mathbf{B}|\mathbf{A})_{\cN}\big\}.
\end{equation}
Then, consider that
\begin{equation*}
S_{\leftrightarrow, 0}(\cN) = \lim_{n\rightarrow \infty} \frac{1}{n} S_{\leftrightarrow, 0}^{(1)}(\cN^{\ox n}) \geqt{(i)} - \lim_{n\rightarrow \infty} \frac{1}{n} H_{\min}(\mathbf{A}|\mathbf{B})_{\cN^{\ox n}} \eqt{(ii)} -\lim_{n\rightarrow \infty} \frac{n}{n} H_{\min}(\mathbf{A}|\mathbf{B})_{\cN}
= -H_{\min}(\mathbf{A}|\mathbf{B})_{\cN},
\end{equation*}
where we used Eq.~\eqref{Eq:S1_geq_Hmin} in (i), and Lemma~\ref{lem:Hmin_add} in (ii). Applying the same inequality for $H_{\min}(\mathbf{B}|\mathbf{A})_{\cN}$, we complete the proof.
\end{proof}

Using the relationship between the $\epsilon$-smooth max-relative entropy of resources and the relative entropy of resources in dynamical resource theory, we have the following result on the asymptotic vanishing error bidirectional classical communication cost.
\begin{theorem}[Asymptotic bidirectional classical communication cost]\label{thm:vani_err_lowerbd}
For a given bipartite quantum channel $\cN_{\bA_0\bB_0\rightarrow \bA_1\bB_1}$, its asymptotic bidirectional classical communication cost via non-signalling bipartite superchannels satisfies 
\begin{equation}
    S_{\leftrightarrow}(\cN) \geq \lim_{n \rightarrow \infty} \frac{1}{n} \mathfrak{D}^{\leftrightarrow}(\cN^{\ox n}).
\end{equation}
\end{theorem}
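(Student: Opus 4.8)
The plan is to feed the one-shot lower bound of Theorem~\ref{thm:smooth_low_bound} into the regularization and then convert the regularized $\epsilon$-smooth max-relative entropy of bidirectional classical communication into the regularized relative entropy of bidirectional classical communication via an asymptotic-equipartition argument from dynamical resource theory. Applying Theorem~\ref{thm:smooth_low_bound} to $\cN^{\ox n}$ gives $S_{\leftrightarrow,\epsilon}^{(1)}(\cN^{\ox n})\geq \mathfrak{D}_{\max}^{\leftrightarrow,\epsilon}(\cN^{\ox n})$ for all $n\in\NN$ and $\epsilon>0$; dividing by $n$ and taking $n\to\infty$ and then $\epsilon\to 0$ as in the definition of $S_{\leftrightarrow}(\cN)$ yields
\be
S_{\leftrightarrow}(\cN)\;\geq\;\lim_{\epsilon\to 0}\lim_{n\to\infty}\tfrac1n\,\mathfrak{D}_{\max}^{\leftrightarrow,\epsilon}(\cN^{\ox n}),
\ee
so it remains only to show that this double limit is at least $\lim_{n\to\infty}\tfrac1n\mathfrak{D}^{\leftrightarrow}(\cN^{\ox n})$.

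For the latter, I first record that $\mathfrak{D}^{\leftrightarrow}(\cN^{\ox n})$ is subadditive in $n$: the argument is the same as in Lemma~\ref{lem:dmax_subadd}, using that the tensor product of two bidirectional non-signalling channels is again bidirectional non-signalling (a one-line check on \Choi operators, just as for composition in Lemma~\ref{lem:NS_comp}) together with additivity of the relative entropy under tensor products; hence by Fekete's lemma $\lim_n\tfrac1n\mathfrak{D}^{\leftrightarrow}(\cN^{\ox n})=\inf_n\tfrac1n\mathfrak{D}^{\leftrightarrow}(\cN^{\ox n})$ exists. Next, since $D_{\max}(\rho\|\sigma)\geq D(\rho\|\sigma)$ for states, taking the supremum over input states gives $D_{\max}(\cM\|\cE)\geq D(\cM\|\cE)$ for channels, hence $\mathfrak{D}_{\max}^{\leftrightarrow}(\cM)\geq\mathfrak{D}^{\leftrightarrow}(\cM)$ for every bipartite channel $\cM$; minimizing over the diamond-norm ball in the definition of the smoothed quantity yields
\be
\mathfrak{D}_{\max}^{\leftrightarrow,\epsilon}(\cN^{\ox n})\;\geq\;\min_{\cM:\,\frac12\|\cM-\cN^{\ox n}\|_{\diamond}\leq\epsilon}\mathfrak{D}^{\leftrightarrow}(\cM).
\ee
Finally, I invoke the asymptotic continuity of $\mathfrak{D}^{\leftrightarrow}$: the free set $\NSO(A_0^nB_0^n,A_1^nB_1^n)$ is convex, compact, and contains a channel with full-rank \Choi operator (e.g.\ the constant channel mapping every input to $\pi_{A_1}^{\ox n}\ox\pi_{B_1}^{\ox n}$, which is manifestly non-signalling), so a continuity estimate of Winter type for the relative-entropy distance to such a set provides a function $\kappa\colon[0,1]\to\RR_{\geq0}$ with $\kappa(\epsilon)\to0$ such that $\frac12\|\cM-\cN^{\ox n}\|_{\diamond}\leq\epsilon$ forces $\mathfrak{D}^{\leftrightarrow}(\cM)\geq\mathfrak{D}^{\leftrightarrow}(\cN^{\ox n})-n\kappa(\epsilon)$, the factor $n$ coming from the total dimension $(d_{A_0}d_{B_0}d_{A_1}d_{B_1})^{\ox n}$. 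Combining the last two displays, dividing by $n$, letting $n\to\infty$ and then $\epsilon\to0$ gives $\lim_{\epsilon\to0}\lim_n\tfrac1n\mathfrak{D}_{\max}^{\leftrightarrow,\epsilon}(\cN^{\ox n})\geq\lim_n\tfrac1n\mathfrak{D}^{\leftrightarrow}(\cN^{\ox n})$, which with the first display proves the theorem.

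The main obstacle is making the continuity step rigorous with a modulus that scales only linearly in $n$. One must pass from a diamond-norm perturbation of the channel to a controlled perturbation of the output states $\cM(\psi)$ versus $\cN^{\ox n}(\psi)$, apply a state-level continuity bound for the relative entropy to a set containing a full-rank element (whose modulus is linear in $\log\dim$), and then take the supremum over input states $\psi$ and the minimization over $\NSO$ uniformly, so that no extra $n$-dependence creeps in. An equivalent route that avoids an explicit estimate is to note $\mathfrak{D}_{\max}^{\leftrightarrow,\epsilon}(\cN^{\ox n})=\min_{\cE\in\NSO}D_{\max}^{\epsilon}(\cN^{\ox n}\|\cE)$ and to verify that the family $\{\NSO(A_0^nB_0^n,A_1^nB_1^n)\}_n$ satisfies the axioms needed for the generalized quantum Stein's lemma — convexity and closure, closure under tensor products (immediate), closure under restriction to subsystems (a consequence of Lemma~\ref{lem:NS_comp}), and containment of a full-rank free channel — whereupon the known relationship between the regularized $\epsilon$-smooth max-relative entropy of a resource and its regularized relative entropy applies directly; the delicate point there is the restriction-closure axiom for bidirectional non-signalling channels.
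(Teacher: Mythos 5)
Your overall strategy coincides with the paper's: feed Theorem~\ref{thm:smooth_low_bound} applied to $\cN^{\ox n}$ into the regularization to get $S_{\leftrightarrow}(\cN)\geq\lim_{\epsilon\to0}\lim_{n}\frac1n\mathfrak{D}_{\max}^{\leftrightarrow,\epsilon}(\cN^{\ox n})$, and then lower-bound the regularized smoothed max-relative entropy of the resource by the regularized relative entropy of the resource. The paper handles the second step by a one-line citation to \cite[Lemma~13]{Gour_2019a}, which asserts exactly this inequality for any \emph{convex} dynamical resource theory (and $\NSO$ is convex). Your fallback suggestion --- ``the known relationship between the regularized $\epsilon$-smooth max-relative entropy of a resource and its regularized relative entropy applies directly'' --- is therefore precisely the paper's proof; note, though, that the direction needed here is the ``easy'' one, requiring only convexity and asymptotic continuity, so invoking the full axiom set of the generalized quantum Stein's lemma (in particular restriction-closure) is unnecessary machinery.

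Your attempt to make the second step self-contained has two genuine gaps. First, the Fekete step: you assert ``additivity of the relative entropy under tensor products'' at the channel level, i.e.\ $D(\cN^{\ox n}\ox\cN^{\ox m}\|\cE_1\ox\cE_2)\leq D(\cN^{\ox n}\|\cE_1)+D(\cN^{\ox m}\|\cE_2)$. This does not follow from the state-level identity, because the channel divergence takes a supremum over inputs entangled \emph{across} the two blocks; the Umegaki channel divergence is in fact not additive in general (its regularization equals the amortized divergence, which can strictly exceed the single-letter quantity), so subadditivity of $n\mapsto\mathfrak{D}^{\leftrightarrow}(\cN^{\ox n})$ is not established this way. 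This only affects existence of the limit, which the theorem statement presupposes, but the claim as written is false. Second, and more centrally, the continuity step: the quantity is $\min_{\cE\in\NSO}\sup_{\psi}D(\cN(\psi)\|\cE(\psi))$, with the minimization \emph{outside} the supremum, so you cannot apply Winter's state-level continuity bound pointwise in $\psi$ to a fixed convex set --- the relevant set of output states $\{\cE(\psi)\}_{\cE\in\NSO}$ varies with $\psi$, and min and sup do not commute. Asymptotic continuity of the relative entropy of a dynamical resource with a modulus of order $\epsilon\log(\dim)$ is a real theorem (it is what underlies the cited Lemma~13), not a routine corollary of the state bound, and you correctly flag it as the main obstacle without closing it. As it stands, then, your argument is complete only if you fall back on citing the known lemma, which is what the paper does.
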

\begin{proof}
It is proved in Ref.~\cite[Lemma 13]{Gour_2019a} that for any convex quantum resource theory $\mathfrak{F}$, 
\begin{equation}
    \lim _{\epsilon \rightarrow 0^{+}} \lim_{n \rightarrow \infty} \frac{1}{n} \mathfrak{D}_{\max,
    \mathfrak{F}}^{\epsilon}(\cN^{\ox n}) \geq \lim_{n \rightarrow \infty} \frac{1}{n} \mathfrak{D}_{\mathfrak{F}}(\cN^{\ox n}).
\end{equation}
Hence, we directly have
\begin{equation}
    S_{\leftrightarrow}(\cN) \geq \lim _{\epsilon \rightarrow 0^{+}} \lim _{n \rightarrow \infty} \frac{1}{n} \mathfrak{D}_{\max}^{\leftrightarrow, \epsilon}(\cN^{\ox n}) \geq \lim_{n \rightarrow \infty} \frac{1}{n} \mathfrak{D}^{\leftrightarrow}(\cN^{\ox n}).
\end{equation}
\end{proof}

\section{Numerical optimization on physical simulation protocols}\label{sec:lose}
In this section, we investigate the practical implementation of bipartite quantum channel simulation with bidirectional classical communication. Consider two parties, Alice and Bob, located in separate laboratories with access to shared entanglement and local quantum operations in their respective labs. Suppose Alice and Bob can use entanglement in their own labs through encoding and decoding.

Let $E_A$ and $E_B$ denote their ancillary systems for memories between encoding and decoding operations, and $S_A$ and $S_B$ represent their ancillary systems for shared entanglement. The encoding operations performed by Alice and Bob are described by channels $\cE^{A}_{S_A A_0\rightarrow E_A\bA_0}$ and $\cE^{B}_{S_B B_0\rightarrow E_B\bB_0}$, respectively. Similarly, their decoding operations are given by channels $\cD^{A}_{E_A\tS_A \bA_1\rightarrow A_1}$ and $\cD^{B}_{E_B\tS_B \bB_1\rightarrow B_1}$, respectively. Figure~\ref{fig:LOSE} illustrates an overview of the simulation protocol. This framework enables a natural protocol for bipartite channel simulation using local operations and shared entanglement (LOSE).

\begin{figure}[t]
    \centering
    \includegraphics[width=0.85\linewidth]{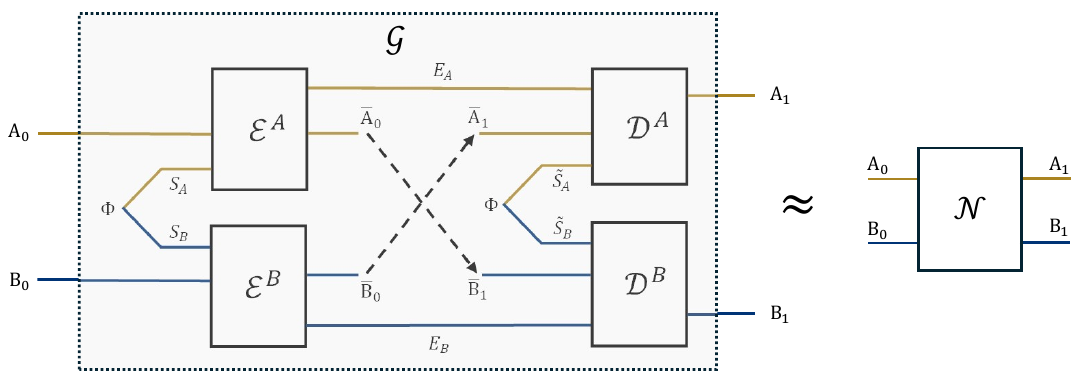}
    \caption{Bipartite quantum channel simulation via local operations and shared entanglement.}
    \label{fig:LOSE}
\end{figure}

For a target channel $\cN_{A_0B_0\rightarrow A_1B_1}$, the one-shot exact bidirectional classical communication cost of $\cN_{A_0B_0\rightarrow A_1B_1}$ via LOSE is defined as the minimum $m$ such that,
\begin{equation}
\begin{aligned}
    \cN_{A_0B_0\rightarrow A_1B_1}(\psi_{RA_0B_0}) = (\cD^{A}_{E_A\tS_A \bA_1\rightarrow A_1}\ox \cD^{B}_{E_B\tS_B \bB_1\rightarrow B_1}) \circ \Upsilon_m \circ & (\cE^{A}_{S_A A_0\rightarrow E_A\bA_0}\ox \cE^{B}_{S_B B_0\rightarrow E_B\bB_0})\\
    & (\Phi_{S_AS_B}\ox \Phi'_{\tS_A\tS_B}\ox \psi_{RA_0B_0})
\end{aligned}
\end{equation}
for any input state $\psi_{RA_0B_0}$ with a reference system $R$. Here $\Upsilon_m: \bA_0\bB_0\rightarrow \bA_1\bB_1$ denotes the classical swap channel between systems $\bA_0\bB_0$ and $\bA_1\bB_1$, and the optimization for $m$ is over all possible encoders, decoders and shared entangled states $\Phi_{S_AS_B},\Phi'_{\tS_A\tS_B}$ between Alice and Bob. The one-shot $\epsilon$-error classical communication cost via LOSE follows analogously. We denote the effective channel resulting from this simulation protocol by
\begin{equation}
\begin{aligned}
    &\cG_{A_0B_0 \rightarrow A_1B_1}(\cdot)\\
    =& (\cD^{A}_{E_A\tS_A \bA_1\rightarrow A_1}\ox \cD^{B}_{E_B\tS_B \bB_1\rightarrow B_1}) \circ \Upsilon_m \circ (\cE^{A}_{S_A A_0\rightarrow E_A\bA_0}\ox \cE^{B}_{S_B B_0\rightarrow E_B\bB_0})\big(\Phi_{S_AS_B}\ox \Phi'_{\tS_A\tS_B} \ox (\cdot)\big).
\end{aligned}
\end{equation}
Then, for a fixed $m\in\NN$, the minimum simulation error via LOSE is defined as 
\begin{equation}\label{Eq:LOSE_err}
e_{\mathrm{LOSE}}(\cN_{A_0B_0\rightarrow A_1B_1}) = \min\left\{\frac{1}{2}\big\|\cN - \cG\big\|_{\diamond}~\big|~\cE^{A},\cE^{B},\cD^{A},\cD^{B} \text{are quantum channels}\right\}, 
\end{equation}
where the optimization also ranges over all possible dimensions of the ancillary systems for the shared entanglement. We note that compared to the minimum simulation error in Proposition~\ref{prop:min_err_SDP}, which allows unlimited shared entanglement and non-signalling correlations, the error metric in Eq.~\eqref{Eq:LOSE_err} incorporates more physically realizable operations. The \Choi operator of $\cG_{A_0B_0\rightarrow A_1B_1}$ can be expressed as the link product of the \Choi operators of $\cE^A$, $\cE^B$, $\cD^A$, and $\cD^B$, which we denote as 
\begin{equation}
    J^{\cG}_{A_0 A_1B_0B_1} = \mathrm{LinProd}(J^{\cE^A}, J^{\cE^B}, J^{\cD^A}, J^{\cD^B}).
\end{equation}
The minimum simulation error via LOSE is then given by 
\be\label{sdp:see-saw}
\begin{aligned}
    e_{\mathrm{LOSE}}(\cN_{A_0B_0\rightarrow A_1B_1}) = &\min_{\omega,Y,J^{\cE^A}, J^{\cE^B}, J^{\cD^A}, J^{\cD^B}} \; \omega\\
     \textrm{s.t.}
     &\;\; Y_{A_0 A_1B_0B_1} \geq 0,~Y_{A_0B_0} \leq \omega \idop_{A_0B_0},\\
     &\;\; J^{\cM} \geq 0,~J^{\cM}_{\rm in} = \idop_{\rm in},~\forall \cM\in\{\cE^A, \cE^B, \cD^A, \cD^B\},\\
     &\;\; Y_{A_0 A_1B_0B_1} \geq \mathrm{LinProd}(J^{\cE^A}, J^{\cE^B}, J^{\cD^A}, J^{\cD^B}) - J_{A_0 A_1 B_0 B_1}^{\cN}.
\end{aligned}
\ee

While this optimization problem is generally intractable due to the optimization over unbounded dimension and the bilinearity on variables, we develop a seesaw-based algorithm~\cite{Konno1976ACP} (see~\cite{Werner2001} for an example application in quantum information theory) to estimate Eq.~\eqref{Eq:LOSE_err} for fixed $m$ and fixed dimensions of all ancillary systems, thereby enabling quantification of the performance gap between LOSE and non-signalling protocols. When any three of these operators are fixed, the optimization over the remaining Choi operator becomes a valid SDP. 
In Algorithm~\ref{alg:lose}, we implement this optimization by iteratively inputting three distinct \Choi operators from the set $\{J^{\cE^A}, J^{\cE^B}, J^{\cD^A}, J^{\cD^B}\}$ and optimizing over the remaining operator at each step. We note that other methods using SDP hierarchies to deal with the bilinearity that appeared from channel coding theory may also be applied~\cite{Berta_2021}.

\begin{algorithm}\label{alg:lose}
    \renewcommand{\algorithmcfname}{Algorithm}
    \SetKwInOut{Input}{Input}\SetKwInOut{Output}{Output}
    \caption{Seesaw algorithm for estimating $e_{\mathrm{LOSE}}(\cN_{A_0B_0\rightarrow A_1B_1})$.}
    \Input{A target bipartite quantum channel $\cN_{A_0B_0\rightarrow A_1B_1}$; Encoder $\cE^{B}_{S_B B_0\rightarrow E_B\bB_0}$, decoder $\cD^{A}_{E_A\tS_A \bA_1\rightarrow A_1},\cD^{B}_{E_B\tS_B \bB_1\rightarrow B_1}$; number of iterations $K$;
    }
    \Output{An upper bound on $e_{\mathrm{LOSE}}(\cN_{A_0B_0\rightarrow A_1B_1})$;}
    $\hat{J}^{\cE^{B}} \leftarrow J^{\cE^{B}}, \hat{J}^{\cD^{A}} \leftarrow J^{\cD^{A}}$ and $\hat{J}^{\cD^{B}} \leftarrow J^{\cD^{B}}$\;
    \For{$k$ from $1$ to $K$}{
        Input $\hat{J}^{\cE^{B}},\hat{J}^{\cD^{A}},\hat{J}^{\cD^{B}}$ to the SDP~\eqref{sdp:see-saw} and output the solution $\hat{J}^{\cE^{A}}$\;
        Input $\hat{J}^{\cE^{A}},\hat{J}^{\cD^{A}},\hat{J}^{\cD^{B}}$ to the SDP~\eqref{sdp:see-saw} and output the solution $\hat{J}^{\cE^{B}}$\;
        Input $\hat{J}^{\cE^{A}},\hat{J}^{\cE^{B}},\hat{J}^{\cD^{B}}$ to the SDP~\eqref{sdp:see-saw} and output the solution $\hat{J}^{\cD^{A}}$\;
        Input $\hat{J}^{\cE^{A}},\hat{J}^{\cE^{B}},\hat{J}^{\cD^{A}}$ to the SDP~\eqref{sdp:see-saw} and output the solution $\hat{J}^{\cD^{B}}$ with the optimal value $\omega^{(k)}$ of~\eqref{sdp:see-saw}\;
    }
    Return $\omega^{(K)}$\;
\end{algorithm}

\paragraph{Numerical results.} We consider a target bipartite channel $\cN_{p}(\cdot)$ defined as
\begin{equation}
    \cN_p(\rho) = (1-p) \cdot \mathrm{CNOT_{\leftrightarrow}} \rho \mathrm{CNOT_{\leftrightarrow}}^{\dagger} + p\cdot \frac{\idop}{4}
\end{equation}
where $\mathrm{CNOT_{\leftrightarrow}}=\mathrm{CNOT_{BA}}\cdot \mathrm{CNOT_{AB}}$ represents the sequential application of two CNOT gates: first a CNOT gate with Alice as the control, followed by a CNOT gate with Bob as the control. We set $K=300, d_{E_A} = d_{E_B} = d_{S_A} = d_{S_B} = d_{\tS_A} = d_{\tS_B} = 2$. We choose both $\Phi_{S_AS_B}$ and $\Phi'_{\tS_A\tS_B}$ as the Bell state, and generate random channels for $\cE^{B}_{S_B B_0\rightarrow E_B\bB_0},\cD^{A}_{E_A\tS_A \bA_1\rightarrow A_1},\cD^{B}_{E_B\tS_B \bB_1\rightarrow B_1}$ using the \texttt{RandomSuperoperator} function in QETLAB~\cite{qetlab} to run Algorithm~\ref{alg:lose} for $\cN_p$. The numerical calculations are implemented in MATLAB~\cite{MATLAB} with CVX~\cite{cvx} and QETLAB~\cite{qetlab}.

\begin{figure}[t]
    \centering
    \includegraphics[width=0.55\linewidth]{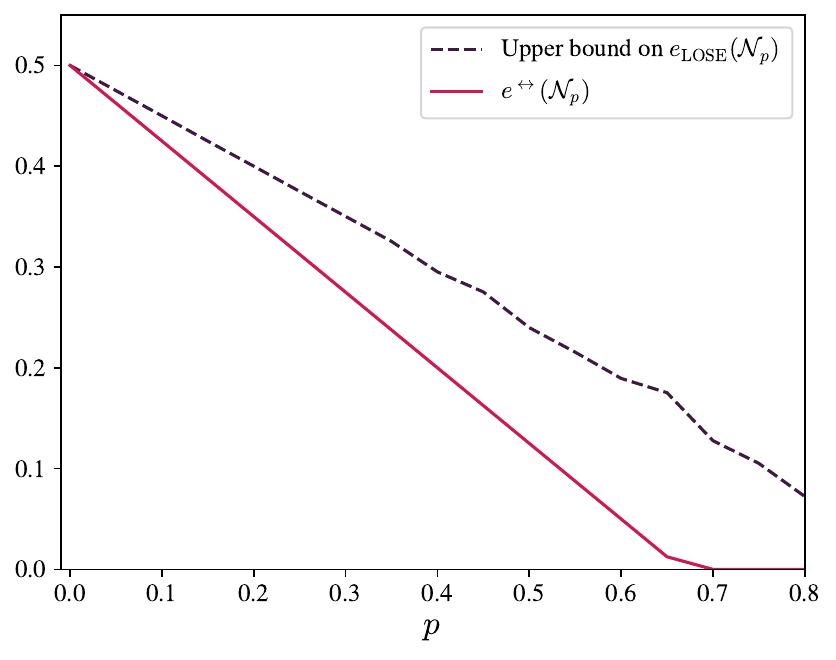}
    \caption{Comparison of optimized simulation errors for the quantum channel $\cN_p$ under two protocols: non-signalling bipartite superchannel and local operations with shared entanglement. The $x$-axis corresponds to the channel parameter $p$ and the $y$-axis corresponds to the simulation error. The red solid line represents the optimal simulation error achieved by the non-signalling bipartite superchannel. The dashed line shows an upper bound obtained via LOSE protocols using Algorithm~\ref{alg:lose}.}
    \label{fig:noisy_CNOTAB}
\end{figure}

Figure~\ref{fig:noisy_CNOTAB} illustrates the simulation error given by an optimized LOSE protocol using Algorithm~\ref{alg:lose} (the dashed line), and the simulation error given by an optimized non-signalling bipartite superchannel using the SDP given in Proposition~\ref{prop:min_err_SDP}. We can observe that when $p=0$ (no depolarizing noise), both optimized protocols yield identical simulation errors, suggesting that our non-signalling bipartite superchannel framework provides a tight converse bound for physically implementable protocols such as LOSE. However, the gap between the optimized simulation errors widens as $p$ increases. This divergence may be attributed to various factors: the need for increased entanglement resources, larger dimensions of ancillary systems $E_A$ and $E_B$, as the non-siganlling protocol encompasses unlimited such ancillary resources. While our seesaw-based optimization algorithm lacks formal convergence guarantees, it is noteworthy that our algorithm offers valuable insights into the performance gap between LOSE and non-signalling assisted protocols. Our results on non-signalling assisted bipartite channel simulation could also guide the development of more physically realizable protocols like LOSE.

\section{Examples}\label{sec:examples}

In this section, we investigate some representative examples of bipartite quantum channels and estimate their bidirectional classical communication cost assisted by non-signalling correlations.

\paragraph{Noisy $\mathrm{SWAP}^{\alpha}$ gate.} The SWAP gate is defined as $S\ket{\phi}\ket{\psi} = \ket{\psi}\ket{\phi}$. In a two-qubit case, the Heisenberg exchange between two electron spin qubits results in a $\mathrm{SWAP}^{\alpha}$ gate~\cite{Fan_2005} whose matrix form is given by 
\begin{equation}
S^{\alpha} = \begin{pmatrix}
1 & 0 & 0 & 0\\
0 & \frac{1+e^{i\pi\alpha}}{2} & \frac{1-e^{i\pi\alpha}}{2} & 0\\
0 & \frac{1-e^{i\pi\alpha}}{2} & \frac{1+e^{i\pi\alpha}}{2} & 0\\
0 & 0 & 0 & 1
\end{pmatrix}.
\end{equation}
Consider a $\mathrm{SWAP}^{\alpha}$ gate affected by a global depolarizing noise, resulting in a noisy bipartite channel $\cN_{\alpha,p}(\cdot)$ as
\be
\cN_{\alpha,p}(\rho) = (1-p)\cdot\mathrm{SWAP}^{\alpha} \rho (\mathrm{SWAP}^{\alpha})^{\dagger} + p \cdot\frac{\idop}{4}.
\ee
We numerically computed the one-shot exact bidirectional communication cost, as derived in Theorem~\ref{thm:one_shot_exactcost}, and the lower bound on the asymptotic exact bidirectional classical communication cost, as established in Theorem~\ref{thm:cost_lowerbound}, for the channel $\cN_{\alpha,p}(\cdot)$ with varying values of $p$ and $\alpha$. The results are illustrated in Figure~\ref{fig:noisy_swapalpha}. The red solid line represents the lower bound $-\min\{H_{\min}(\mathbf{A}|\mathbf{B}){\cN}, H_{\min}(\mathbf{B}|\mathbf{A})_{\cN}\}$, while the gray dotted line shows $S_{\leftrightarrow, 0}^{(1)}(\cdot)$. 

It can be seen that the classical communication cost increases with increasing $\alpha$ or decreasing noise level $p$. Notably, when $\alpha=1$ and $p=0$, the channel $\cN_{\alpha,p}(\cdot)$ corresponds to a SWAP gate, which requires 2 bits of classical information to implement with the assistance of non-signalling correlations, as shown in the figure. This scenario can be realized through a standard bidirectional teleportation protocol, where Alice and Bob each send 2 messages to the other to exchange their quantum states perfectly. This result further indicates that non-signalling correlations offer no advantage over local operations assisted by shared entanglement for such a SWAP operation. We also observe that our lower bound on the asymptotic bidirectional classical communication cost becomes increasingly tight as the depolarizing noise increases. Specifically, when $p=0.2$, the one-shot cost closely approaches the lower bound, and at $p=0.4$, they almost coincide in our plot. As the one-shot exact cost is an upper bound on the asymptotic cost, this convergence suggests that the channel's bipartite conditional min-entropy provides a reliable estimate of the asymptotic classical communication cost, particularly in higher noise regimes. Furthermore, when $p=0.4$, the target noisy channel can be simulated with less than $\log 3$ bits of classical information, which is better than trivially applying a bidirectional teleportation.

\begin{figure}[t]
    \centering
    \includegraphics[width=1\linewidth]{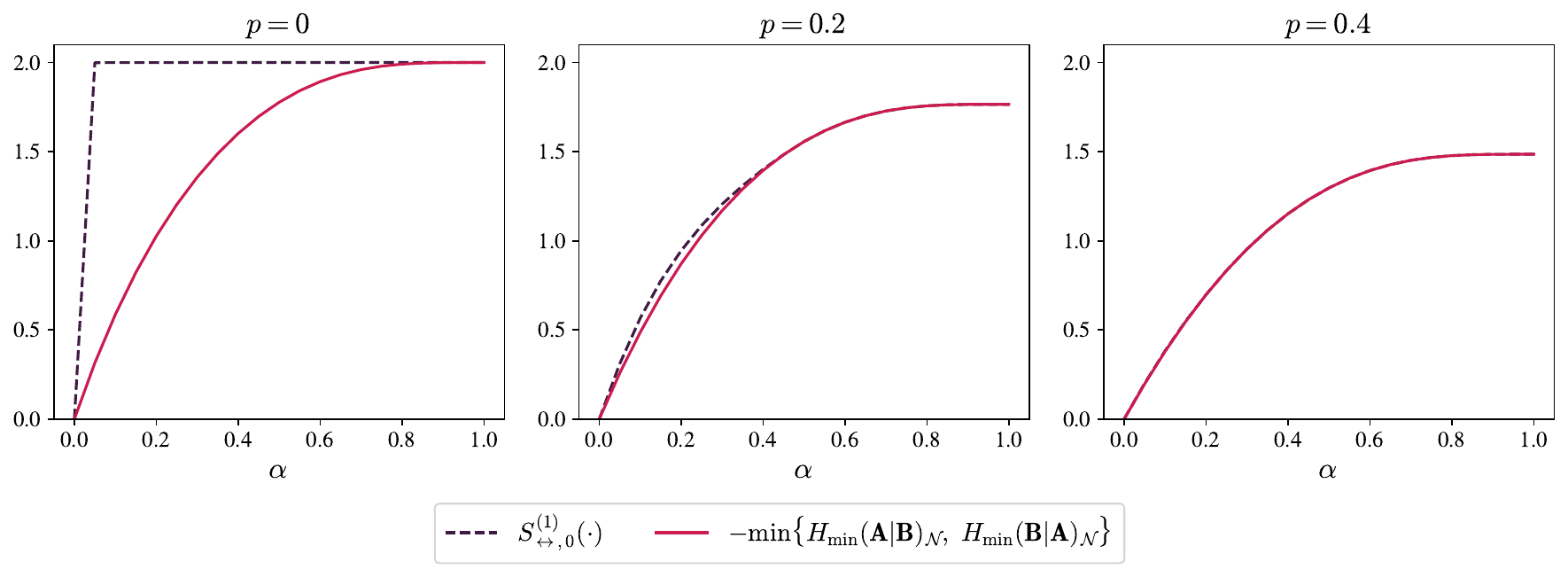}
    \caption{Asymptotic exact bidirectional classical communication cost estimates as a function of the $\mathrm{SWAP}^{\alpha}$ gate parameter $\alpha$ under varying depolarizing noise $p$. The graphs illustrate how increasing $p$ affects the communication cost, with dashed lines representing the one-shot exact cost and the red solid lines representing our lower bounds.}
    \label{fig:noisy_swapalpha}
\end{figure}
\begin{figure}[t]
    \centering
    \includegraphics[width=1\linewidth]{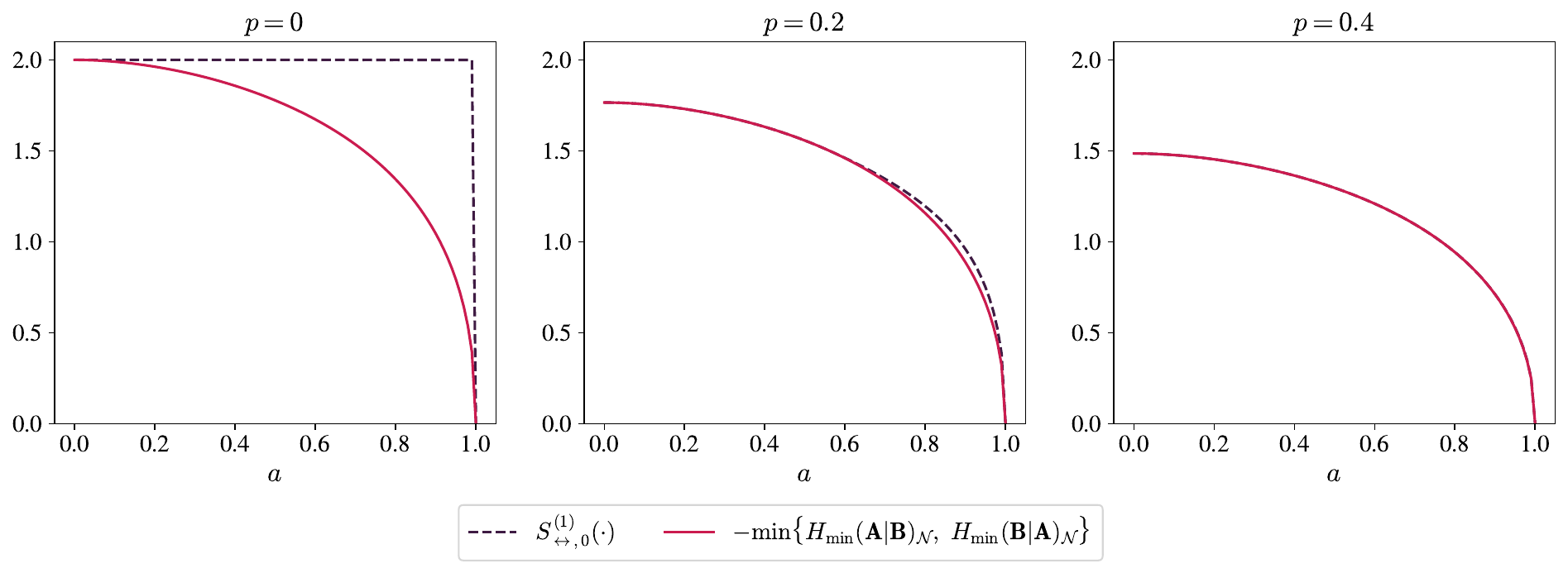}
    \caption{Asymptotic exact bidirectional classical communication cost estimates as a function of the partial swap operation parameter $a$ under varying depolarizing noise $p$. The graphs illustrate how increasing $p$ affects the communication cost, with dashed lines representing the one-shot exact cost and the red solid lines representing our lower bounds.}
    \label{fig:noisy_partialswap}
\end{figure}

\paragraph{Noisy partial swap operation.} Another bipartite operation we consider is the qubit partial swap operation~\cite{Audenaert_2016}, defined as $U_a=\sqrt{a}\idop + i \sqrt{1-a}S$. It is known to resemble a beamsplitter~\cite{K_nig_2013}. We investigate the bidirectional classical communication cost for this operation under various noise levels. Similar to the $\mathrm{SWAP}^{\alpha}$ gate, the noisy partial swap channel is modeled as:
\be
\cN_{a,p}(\rho) = (1-p)\cdot U_a \rho U_a^{\dagger} + p \cdot\frac{\idop}{4},
\ee
where $p$ represents the depolarizing noise parameter. We computed the one-shot exact bidirectional classical communication cost and the lower bound for the asymptotic case.

Figure~\ref{fig:noisy_partialswap} illustrates these results for different values of $a$ and noise levels $p$. As $a$ approaches 1, corresponding to the identity operation, the communication cost decreases. Increasing noise levels generally reduces the required communication cost. The lower bound on the asymptotic exact cost approaches the one-shot exact cost as the noise level increases, similar to our observations for the $\mathrm{SWAP}^{\alpha}$ gate. These results further support our earlier conclusions about the relationship between noise levels and the tightness of our lower bound.

\section{Discussion}\label{sec:conclu}

In this work, we have investigated how much bidirectional classical communication resources are required to simulate a desired bipartite quantum channel when non-signalling correlations are available. This channel simulation problem is central in quantum Shannon theory, and non-signalling correlations encompass all possible resources with no communication capabilities. To this end, we have introduced non-signalling bipartite channels and superchannels, which serve as the foundation for defining resource measures for classical communication. We have shown that the one-shot exact $\epsilon$-error bidirectional classical communication cost is lower bounded by the $\epsilon$-error max-relative entropy of classical communication and have derived an SDP formulation for the one-shot exact cost. Furthermore, we have introduced a channel's bipartite conditional min-entropy as an efficiently computable lower bound on the asymptotic exact bidirectional classical communication cost, which recovers and generalizes the results for point-to-point channels. To bridge the gap between non-signalling assisted protocols and physically realizable implementations, we have developed a seesaw-based optimization algorithm for estimating the minimum simulation error achievable via local operations and shared entanglement.

Numerical experiments showcase the effectiveness of our bound on various bipartite quantum channels, particularly for a class of noisy bipartite unitary operations. Our results shed light on the fundamental interplay between non-locality and classical communication in a quantum setting. The resource-theoretic approach allows for a systematic characterization of the communication capabilities of bipartite quantum channels under non-signalling correlations. Several intriguing questions remain unsolved:
\begin{enumerate}
    \item The non-signalling bipartite superchannel given in Definition~\ref{def:NSBSC} is physically motivated by non-signalling correlations. Another natural consideration is a superchannel realized by non-signalling pre- and post-processing channels, which we may call a non-signalling realizable bipartite superchannel. Furthermore, a CNSP superchannel is motivated by dynamical resource theory. It is clear that a non-signalling realizable bipartite superchannel is a non-signalling bipartite superchannel, and thus a CNSP superchannel. However, it remains an open question whether the converse statement also holds.
    \item Investigating the additivity of different resource measures of classical communication is an interesting problem. For instance, the additivity of the max-relative entropy bidirectional classical communication remains unclear. If it holds, the max-relative entropy would provide a lower bound on the asymptotic exact bidirectional classical communication cost. Furthermore, the additivity of the relative entropy of classical communication is still unknown. If it is additive, we can obtain a single-letter lower bound on the vanishing error bidirectional classical communication cost according to Theorem~\ref{thm:vani_err_lowerbd}. It is also very interesting to explore non-asymptotic analysis~\cite{Li2014a,Tomamichel2015,Cheng2017b,Datta2016,Wang2017a,Cheng2023a} for the communication and simulation of bipartite quantum channels. 
    \item In addition to the problem of simulating target channels using classical swap channels, it is also of interest to investigate the ability of a bipartite channel to exchange information. Similar methods may be applied to examine the maximum amount of classical information that can be reliably transmitted in both directions through a bipartite quantum channel, with the help of non-signalling correlations. Remarkably, Shannon's noisy channel coding theorem~\cite{Shannon1948} and its dual, the reverse Shannon theorem~\cite{Bennett2002}, characterize the capacity of classical channels and the resources required to simulate them using noiseless channels, respectively. In the quantum realm, the quantum reverse Shannon theorem~\cite{Bennett2014,Berta_2011} solves the case where communicating parties share unlimited entanglement, establishing that the optimal asymptotic simulation cost equals its entanglement-assisted classical capacity. This result directly implies a reverse Shannon theorem for the case of channels assisted by non-signaling correlations. When considering a more general and practical bipartite scenario, it becomes crucial and fascinating to investigate whether a quantum reverse Shannon theorem holds for bipartite channels assisted by non-local resources, such as entanglement and non-signaling correlations.
\end{enumerate}

\section*{Acknowledgement}
This work was supported by the National Key R\&D Program of China (Grant No. 2024YFE0102500), the National Natural Science Foundation of China (Grant. No.~12447107), the Guangdong Provincial Quantum Science Strategic Initiative (Grant No.~GDZX2403008, GDZX2403001), the Guangdong Natural Science Foundation (Grant No.~2025A1515012834), the Guangdong Provincial Key Lab of Integrated Communication, Sensing and Computation for Ubiquitous Internet of Things (Grant No. 2023B1212010007), the Quantum Science Center of Guangdong-Hong Kong-Macao Greater Bay Area, and the Education Bureau of Guangzhou Municipality.

\bibliographystyle{alpha}
\bibliography{main}

\appendix
\setcounter{subsection}{0}
\setcounter{table}{0}
\setcounter{figure}{0}

\section{Dual SDP of the max-relative entropy of bidirectional classical communication}
Consider the primal SDP:
\be
\begin{aligned}
    \mathfrak{D}^{\leftrightarrow}_{\max}(\cN_{A_0B_0\rightarrow A_1B_1}) = \log\, \min\;\;& \lambda\\
     \textrm{s.t.} 
     & J_{A_0A_1B_0B_1}^{\cN} \leq Y_{A_0A_1B_0B_1}, \\
     & Y_{A_0A_1B_0B_1}\geq 0, Y_{A_0B_0} = \lambda \idop_{A_0B_0},\\
     & Y_{A_0B_0B_1} = \pi_{A_0}\ox Y_{B_0B_1}, Y_{A_0A_1B_0} = \pi_{B_0}\ox Y_{A_0A_1}.
\end{aligned}
\ee
The Lagrange function is
\be
\begin{aligned}
    &L\big(\lambda, Y_{A_0A_1B_0B_1}, M_{A_0A_1B_0B_1}, N_{A_0B_0}, P_{A_0B_0B_1}, Q_{A_0A_1B_0}\big)\\
    =\; & \lambda + \big\langle M, J_{A_0A_1B_0B_1}^{\cN} - Y_{A_0A_1B_0B_1} \big\rangle + \big\langle N, Y_{A_0B_0}-\lambda \idop_{A_0B_0}\big\rangle \\
    &+ \big\langle P, Y_{A_0B_0B_1}-\pi_{A_0} \ox Y_{B_0B_1}\big\rangle + \big\langle Q, Y_{A_0A_1B_0}-\pi_{B_0}\ox Y_{A_0A_1}\big\rangle\\
    =\; & \lambda + \tr\big[M J_{A_0A_1B_0B_1}^{\cN}\big] - \tr\big[M Y_{A_0A_1B_0B_1}\big] + \tr\big[(N_{A_0B_0}\ox \idop_{A_1B_1})Y_{A_0A_1B_0B_1}\big]\\
    & - \lambda \tr N_{A_0B_0} +\tr\big[(P_{A_0B_0B_1}\ox \idop_{A_1})Y_{A_0A_1B_0B_1}\big] + \tr\big[(Q_{A_0A_1B_0}\ox \idop_{B_1})Y_{A_0A_1B_0B_1}\big]\\
    & - \frac{1}{d_{A_0}}\tr[(P_{B_0B_1}\ox \idop_{A_0A_1}) Y_{A_0A_1B_0B_1}] - \frac{1}{d_{B_0}}\tr\big[(Q_{A_0A_1}\ox \idop_{B_0B_1}) Y_{A_0A_1B_0B_1}\big]\\
    =\; & \lambda(1-\tr N_{A_0B_0}) + \tr[M J_{A_0A_1B_0B_1}^{\cN}] + \Big\langle Y_{A_0A_1B_0B_1}, N_{A_0B_0}\ox \idop_{A_1B_1} - M_{A_0A_1B_0B_1}\\
    & + P_{A_0B_0B_1}\ox \idop_{A_1} + Q_{A_0A_1B_0}\ox \idop_{B_1} - \frac{1}{d_{A_0}} P_{B_0B_1}\ox \idop_{A_0A_1} - \frac{1}{d_{B_0}} Q_{A_0A_1}\ox \idop_{B_0B_1}\Big\rangle,
\end{aligned}
\ee
where we have introduced $M_{A_0A_1B_0B_1}, N_{A_0B_0}, P_{A_0B_0B_1}, Q_{A_0A_1B_0}$ as dual variables. The corresponding Lagrange dual function is 
\be
    g(M,N,P,Q) = \inf_{Y\geq 0}L(Y, \lambda, M, N, P, Q).
\ee
For $Y_{A_0A_1B_0B_1}\geq 0$, it must hold that $\tr N_{A_0B_0}\leq 1, M_{A_0A_1B_0B_1}\geq 0$, and
\begin{equation*}
    N_{A_0B_0}\ox \idop_{A_1B_1}+ P_{A_0B_0B_1}\ox \idop_{A_1} + Q_{A_0A_1B_0}\ox \idop_{B_1} \geq M_{A_0A_1B_0B_1} + \frac{1}{d_{A_0}}P_{B_0B_1}\ox \idop_{A_0A_1} + \frac{1}{d_{B_0}} Q_{A_0A_1}\ox \idop_{B_0B_1}.
\end{equation*}
Thus, we arrive at the dual SDP as
\begin{subequations}
\begin{align}
    \log\, \max &\; \tr\big[M_{A_0A_1B_0B_1} J_{A_0A_1B_0B_1}^{\cN}\big]\\
     \textrm{s.t.} &\;\; M_{A_0A_1B_0B_1}\geq 0, \, \tr N_{A_0B_0}\leq 1,\\
     &\;\; M_{A_0A_1B_0B_1} \leq N_{A_0B_0}\ox \idop_{A_1B_1}+ (P_{A_0B_0B_1}-\pi_{A_0}\ox P_{B_0B_1})\ox \idop_{A_1}\nonumber\\ 
     &\quad\quad + (Q_{A_0A_1B_0} - \pi_{B_0}\ox Q_{A_0A_1})\ox \idop_{B_1}.\label{Eq:const}
\end{align}
\end{subequations}
It is easy to see that the strong duality holds by Slater's condition. We further note that the inequality $\tr N_{AB}\leq 1$ can be restricted to equality. This is because for any feasible solution $\{M,N,P,Q\}$, we can replace $N$ with $\widehat{N}\coloneqq N_{A_0B_0}+(1-\tr N_{A_0B_0})\idop_{A_0B_0}/d_{A_0B_0}$ to get $\{M,\widehat{N},P,Q\}$. This is also a feasible solution as such shift also satisfies Eq.~\eqref{Eq:const}. Meanwhile, the objective function $\tr\big[MJ_{A_0A_1B_0B_1}^{\cN}\big]$ remains unchanged. Rewrite $P_{A_0B_0B_1} := P_{A_0B_0B_1}-\pi_{A_0}\ox P_{B_0B_1}$ and $Q_{A_0A_1B_0} := Q_{A_0A_1B_0} - \pi_{B_0}\ox Q_{A_0A_1}$. Thus, the final dual SDP is
\begin{equation*}
\begin{aligned}
   \mathfrak{D}^{\leftrightarrow}_{\max}(\cN_{A_0B_0\rightarrow A_1B_1}) = \log\, \max &\; \tr\big[M_{A_0A_1B_0B_1} J_{A_0A_1B_0B_1}^{\cN}\big]\\
    \textrm{s.t.} &\; \tr N_{A_0B_0} = 1,~P_{B_0B_1} = 0,~Q_{A_0A_1} = 0,\\
    &\;\; 0\leq M_{A_0A_1B_0B_1} \leq N_{A_0B_0}\ox \idop_{A_1B_1}+ P_{A_0B_0B_1}\ox \idop_{A_1} + Q_{A_0A_1B_0}\ox \idop_{B_1}.
\end{aligned}
\end{equation*}

\end{document}